\begin{document}

\title{The spatial evolution of economic activities and the emergence of cities}

\author{Davide Fiaschi$^a$}
\thanks{$^{a}$ Universit\'a degli Studi di Pisa, Dipartimento di Economia e Management.}
\address{D. Fiaschi, Universit\`a degli Studi di Pisa, Dipartimento di Economia e Management. Via Ridolfi 10, 56124 Pisa (PI), Italy.}
\email{davide.fiaschi@unipi.it}

\author{Cristiano Ricci$^b$}
\thanks{$^b$ Corresponding author. Universit\'a degli Studi di Pisa, Dipartimento di Economia e Management}
\address{C. Ricci, Universit\`a degli Studi di Pisa, Dipartimento di Economia e Management. Via Ridolfi 10, 56124 Pisa (PI), Italy.}
\email{cristiano.ricci@unipi.it}

\begin{abstract}
This paper examines the spatial agglomeration of workers and income in a continuous space-time framework. Local markets feature spatial spillovers and both exogenous and endogenous amenities. Workers relocate to maximise their instantaneous utility, constrained by mobility costs. In the limit of infinite workers, short-run equilibria are described by a partial differential equation (PDE). The PDE reveals spatial dynamics influenced by initial conditions, path dependence, and metastability (persistence), where prolonged stability is disrupted by sharp transitions to new distributions. We characterise conditions for spatial agglomeration in stationary equilibria and demonstrate that social utility consistently increases over time, suggesting efficient spatial allocations. Numerical results replicate key patterns, such as city formation, dependence on historical spatial patterns, and nonlinear out-of-equilibrium dynamics.
\end{abstract}

\maketitle

\noindent \textbf{JEL Classification Numbers}: C23, R12, R15

\noindent \textbf{Keywords}:  spatial general equilibrium, spatial agglomeration, spatial spillovers, Aggregation-Diffusion Equations, metastability, complex landscapes

\newpage

\newpage

\section{Introduction \label{sec:introduction}}
This paper proposes a micro-foundation for the emergence of spatial agglomerations (cities) in an economy where individual mobility is driven by differential utility over a continuous space. The model incorporates spatial spillovers in both production and consumption. Space is heterogeneous due to the presence of exogenous and endogenous amenities—such as potential markets, infrastructures, and facilities—which are primarily shaped by population and economic density.

To illustrate the aim of our analysis, consider Figure \ref{fig:CountryMapsNightlights}, which shows the spatial distribution of nightlight intensity for three of the most important European countries.\footnote{Nightlights are taken from the VIIRS 2.1 database (\url{https://eogdata.mines.edu/products/vnl/}), which provides the average nightlight intensity for cells of 500 x 500 metres.}
\begin{figure}[!htbp]
\centering
\hspace{-2cm}
\begin{subfigure}{0.32\linewidth}
\includegraphics[width=1.7\linewidth]{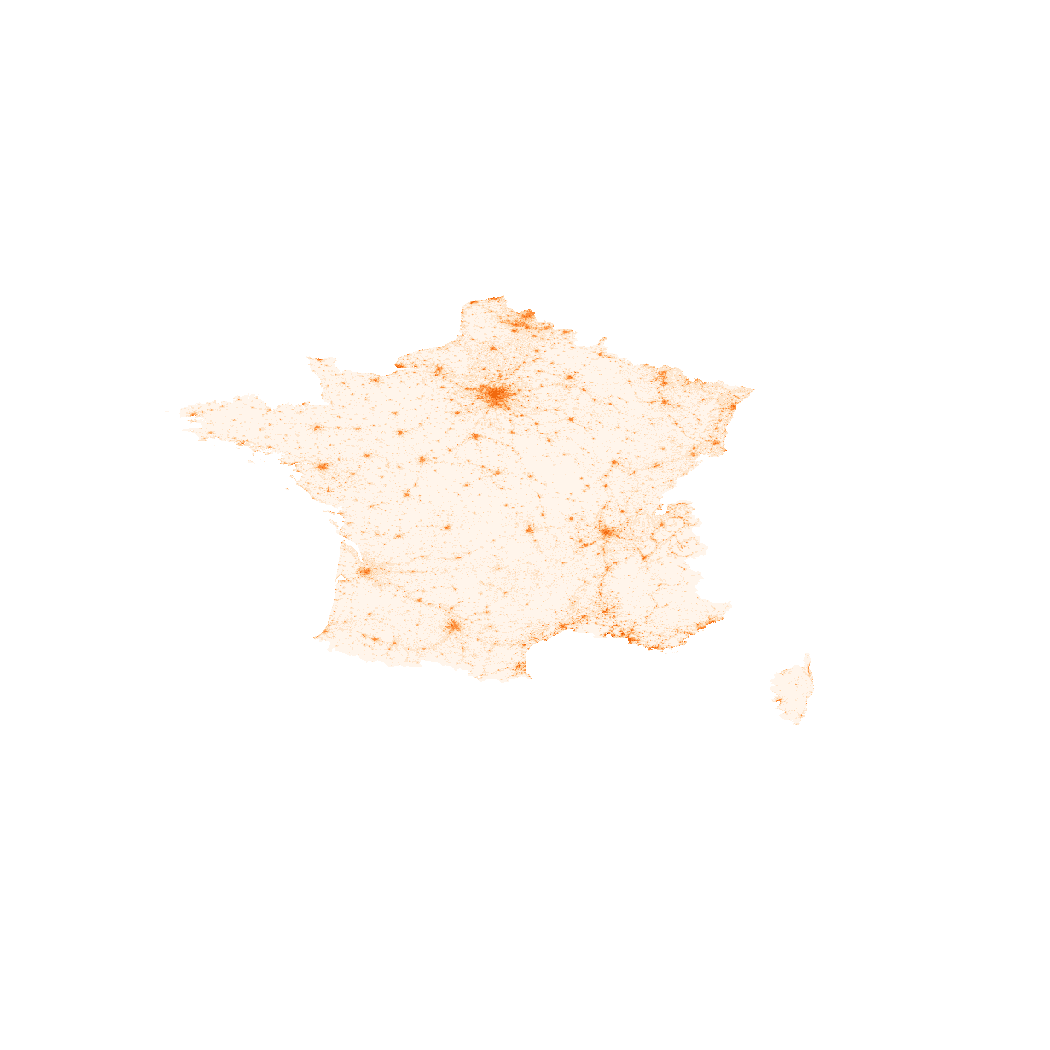}
\end{subfigure}
\hfill 
\begin{subfigure}{0.32\linewidth}
\includegraphics[width=1.5\linewidth]{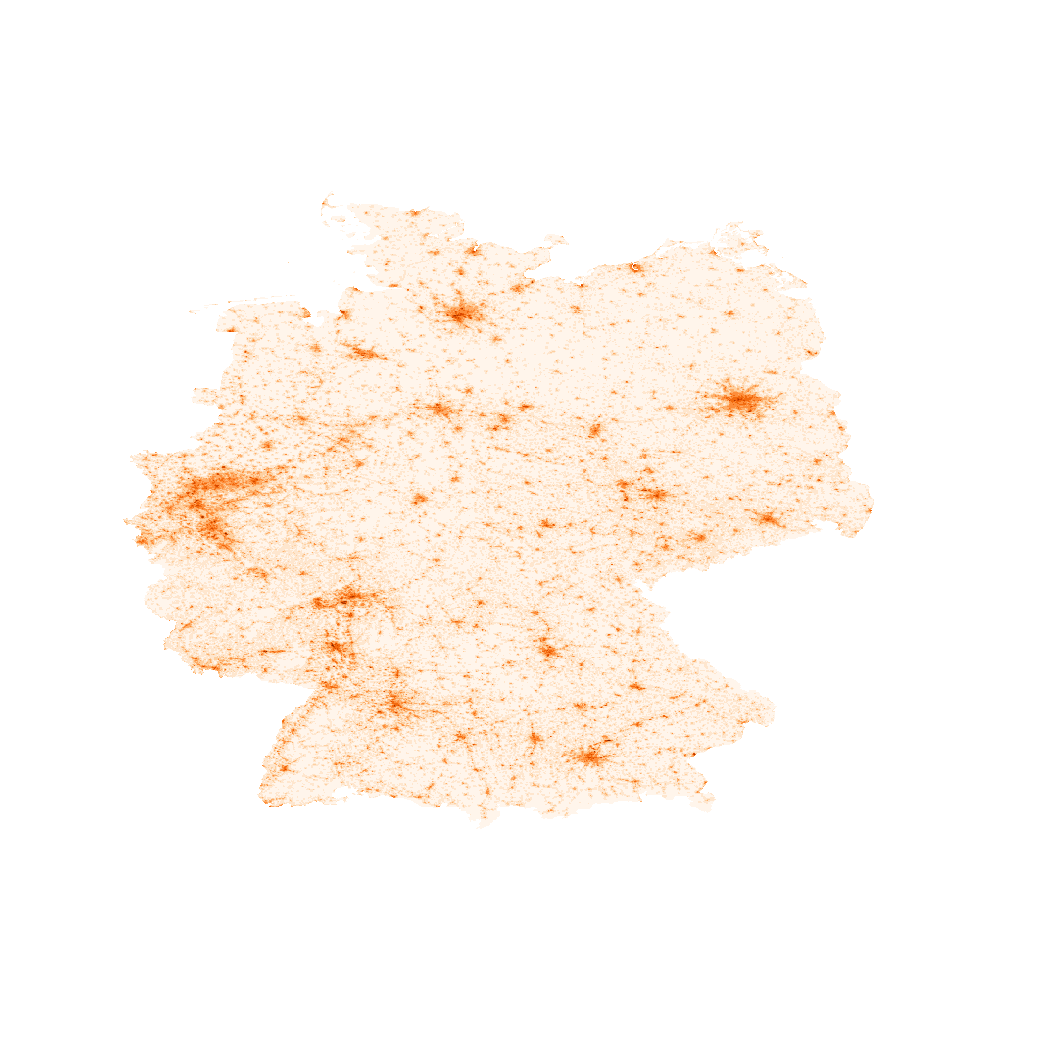}
\end{subfigure}
\hfill 
\hspace{-1cm}
\begin{subfigure}{0.32\linewidth}
\includegraphics[width=1.5\linewidth]{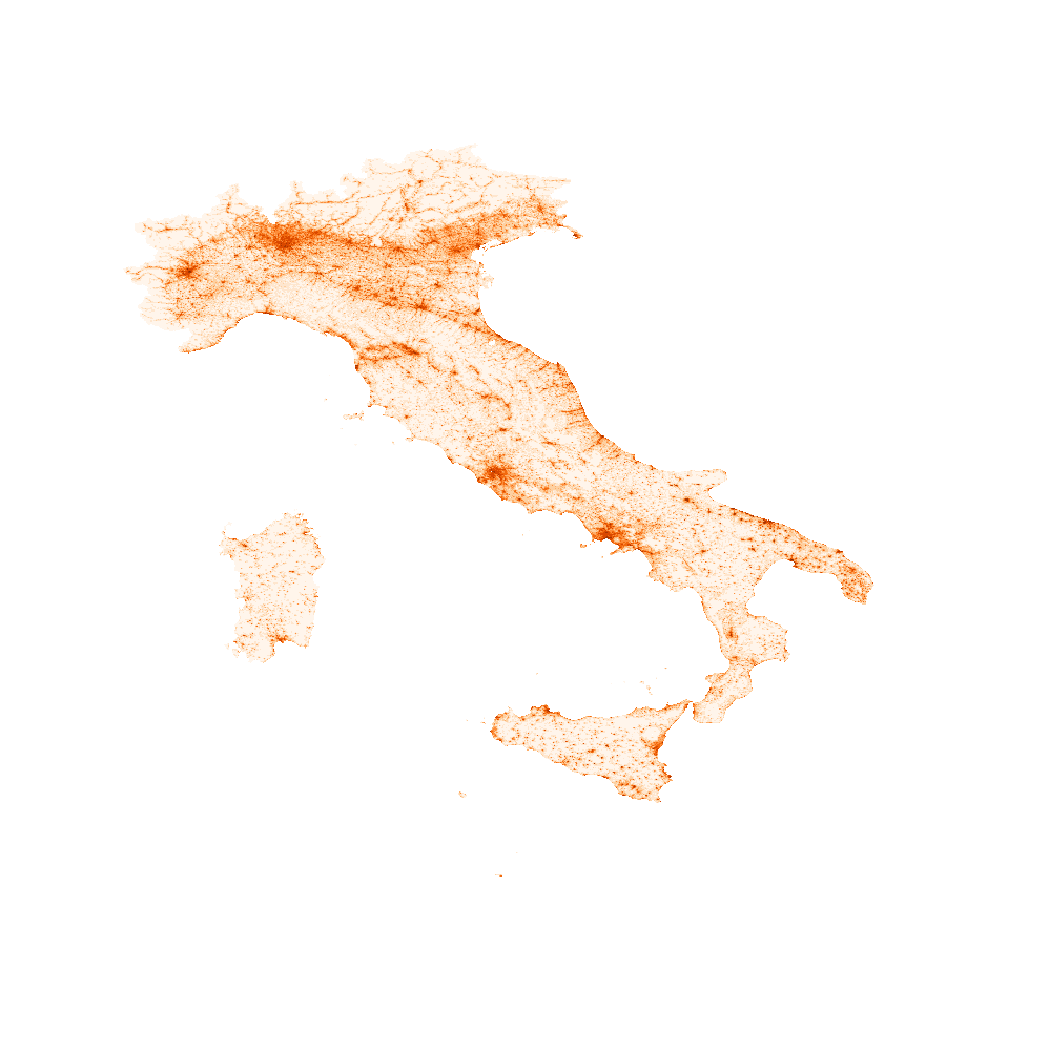}
\end{subfigure}
\caption{The maps of night lights in 2021 for France, Germany and Italy (from left to right), based on cells of about 500m x 500m. Source: VIIRS 2.1 database.}
\label{fig:CountryMapsNightlights}
\end{figure}
Using nightlight intensity as a proxy for local population/economic activity \citep{michalopoulos2018spatial}, there is a clear pattern of spatial agglomeration, reflecting the localisation of the main cities in the three countries. However, the spatial patterns are very heterogeneous: France, with a considerable homogeneity of its territory, shows an extreme degree of spatial agglomeration around Paris and only a few other clusters. On the other hand, Germany, which also has a territory with no particular geographical barriers, has a less concentrated distribution, with several medium-sized agglomerations spread evenly over the country. Finally, Italy, which has a very heterogeneous territory, shows a concentration of population in the coastal areas and in the Pianura Padana. In the latter, some medium-sized towns are located along the ancient Roman \textit{Via Emilia}, a narrow road that runs from Rimini along the Apennines to Piacenza. Several cities along this road were founded by the Romans more than two millennia ago (e.g. Rimini, Boulogne, Modena, Reggio Emilia, Parma, Piacenza) \citep{de2022historical}. Moreover, as the Italian population has grown over the last 30 years, several medium-sized Italian cities have experimented with a sharp reduction in population density in favour of a few larger cities.
To sum up, France, Germany and Italy have spatial agglomerations of different sizes and shapes, some agglomerations have been observed since the Roman Empire, and the topography of the territories has a strong influence on the observed spatial patterns. Finally, we add that over the last two millennia several European cities have experienced strong fluctuations in their size \citep{mumford1961city,pirenne2014medieval}.

To account for these different spatial patterns within a spatial general equilibrium model, we develop a continuous two-dimensional space model (locations are identified by their longitude and latitude), in line with the urban economics literature (see, for example, \citealp{Fujita1989}).
Local labour productivity is influenced by spatial spillovers, in particular by the \textit{local} density of labour. Locations are also characterised by the presence of exogenous and endogenous amenities, the latter depending on the \textit{local} density of workers and income \citep{GABAIX20042341,redding2024quantitative}. In their decisions, workers are myopic with respect to the future, but not with respect to the current location of other workers and the existence of spatial spillovers. In particular, they move across locations by maximising their instantaneous utility in the presence of movement costs and an idiosyncratic random component. We show that the sequence of these short-run workers allocations, in the \textit{Mean-Field} limit as the number of workers becomes infinite, can be expressed by a \textit{partial differential equation} (PDE) belonging to the class of \textit{Aggregation-Diffusion Equations}.
The stationary spatial equilibrium (if it exists) shows spatial agglomerations when aggregative forces sufficiently dominate diffusion. In the sequence of short-run worker allocations, social utility, as defined by the sum of individual utilities, is non-decreasing and hence spatial allocation is efficient in the stationary equilibrium distribution.
We conduct numerical experiments to investigate the properties of the model. In particular: i) the non-linear out-of-equilibrium dynamics; ii) the emergence of a single megacity versus the existence of several different cities \citep{GABAIX20042341}; iii) the importance of initial conditions, i.e. \textit{history}, and large shocks, i.e. \textit{path dependence}, for the stationary equilibrium distribution \citep{allen2020persistence}; and finally, iv) the phenomenon of \textit{metastability} i.e. \textit{persistence}, which implies long periods of apparent stability in the spatial distribution followed by a sharp transition to a new (meta)stable equilibrium \citep{allen2020persistence}.

We make three main contributions.
First, we develop a new framework with a sound micro-foundation to study some observed geographical patterns of particular interest, including population agglomeration, i.e. the emergence of cities and their evolution in continuous space, and the dynamics of spatial wage inequality in continuous time. The framework is sufficiently flexible to allow for inhomogeneous spaces due to the presence of mountains, rivers, coasts, roads, etc. Our choice of starting with individual agents and taking the limit of an infinite agent economy allows us to make transparent the interactions between the agent and its local environment, e.g. how neighbouring agents affect its productivity and how local endogenous amenities depend crucially on the local population density. A key implication of our approach is the ability to directly relate the parameters of the PDE describing the aggregate dynamics to the spatial extent of agents' interactions, the magnitude of agents' movement costs, and the characteristics of the random component in agents' choices.

Second, although there are no closed solutions to our PDE, we find in the general setting that there is dependence on initial conditions and path dependence in the localisation of spatial economic activity, but the shape of the stationary equilibrium distribution is unimodal (mono-centric) and radially symmetric in the absence of significant exogenous amenities. However, our results highlight how transitional dynamics contain particularly important information due to the presence of metastable equilibria and non-linear out-of-equilibrium dynamics, in contrast to the study of a linearised dynamical system around a long-run equilibrium.

Third, our approach allows to define a measure of social utility for the economy, inspired by the literature on \textit{Gradient Flow}. Social utility is non-decreasing over time, and in the stationary equilibrium distribution, the spatial distribution of workers reaches maximum efficiency, suggesting that only the expected utility of nearby workers provides sufficient information to steer an economy towards an efficient spatial allocation.

Our approach is very close to the literature on the disequilibrium foundations of general equilibrium models, where ``agents are aware of their disequilibrium state and act on arbitrage opportunities'' and ``production and consumption as being in real time'' \citep{fisher1989disequilibrium,howitt2006coordination} as opposed to the general spatial equilibrium model as in \cite{allen2014trade}.  For example, \cite{sonnenschein1982price} characterise the dynamics of convergence to the long-run general equilibrium as a sequence of short-run equilibria in which myopic firms change their output according to the differential rate of profit of markets close to their actual market. In particular, \cite{novshek1986quantity} discuss the case of a finite number of firms and as a sequence of short-run Cournot-Nash equilibria converging to a long-run (efficient) equilibrium, while \cite{artzner1986convergence} consider the case of an infinite number of firms. 

\cite{krugman1994complex,krugman1996self} represent the pioneering contributions to the study of the observed complex landscapes in the spatial distribution of population and economic activity (see in particular his ``edge city'' model). \cite{EconomyAsAComplexEvolvingSystemII1997} contains several key contributions in this respect, including those by Steven Durlauf on social-spatial interactions and by Paul Krugman on the self-organisation of economies in space. \cite{durlauf1994spillovers} examines the importance of local spillovers for the emergence of spatial inequality, while \cite{durlauf2005} review efforts to introduce complex systems methods into economics and to understand empirical phenomena.

Another important strand of literature is related to the formation of cities (spatial agglomeration), which points to spatial externalities in production, i.e. the condition for the existence of \textit{Zipf's law}, in particular on labour productivity \citep{Fujita1989,krugman1994complex,glaeser2001cities} and on the spatial externalities arising from urban density, which ``spreads knowledge, which either makes workers more skilled or entrepreneurs more productive'' \citep{glaeser2010complementarity}. 
\cite{davis2002bones} test three alternative theories of the determinants of the density of economic activity, while \cite{duranton2020economics} discuss the benefits and costs of urban density.
A spatial general equilibrium model without dynamics but very rich for the presence of spatial externalities, trade and labour mobility is presented by \cite{allen2014trade}. \cite{gabaix1999zipf} is also very close to our approach, but in a finite state space.
As the most recent quantitative urban economics, \cite{allen2020persistence} is the closest contribution to ours, being interested in the dependence on initial conditions, path dependence and persistence in the dynamics of spatial economic activities. 

\cite{sznitman1991topics} is the pioneering contribution to the understanding of the limiting behaviour of a system of interacting agents, while \cite{morale2005interacting} contain a possible micro-foundation of a model with aggregation. \cite{bovier2016metastability} discuss in detail the concept of metastability and its application. 
In the last 20 years, Aggregation-Diffusion Equations have inspired much mathematical work and have been used in several biological applications \citep{carrillo2019aggregation}. \cite{mocenni2010identifying} present a model very similar in spirit to ours, showing Turing patterns over space.
Finally, \cite{arbia2001modelling} models the geography of economic activity in continuous space.

Overall, we highlight three main departures of our work from most of the recent urban economics literature (see e.g. \citealp{redding2024quantitative}).
First, we assume that consumption and production decisions are made in continuous time. As argued by \citet[p.547]{gandolfo1997economic}, although individual decisions are generally made at discrete time intervals, they are not coordinated and are therefore unlikely to be synchronised; instead, they are likely to overlap, and considering an economy in which a large number of decisions are made by a large number of agents, it seems natural to treat economic phenomena as if they were continuous.  Moreover, a continuous-time specification is particularly useful for formulating a dynamic adjustment process when both stocks (e.g. labour stocks) and flows (e.g. local wages) are involved. A side effect, more related to empirical application, is that the parameters' of a continuous time model can be estimated/calibrated independently of the observation interval. The continuous time specification also avoids the risk that a fixed period length assumed for each agent's decision (e.g. workers and firms may have different time scaling in their decisions) may lead to misleading conclusions, including confusion between stock and flow equilibria. Finally, the continuous time formulation is somewhat easier to handle than difference systems and allows the calculation of continuous paths for the endogenous variables, which can have relevant implications for numerical experiments and policy analysis. The cost of considering continuous time is the special care that needs to be taken when running the model on data that are generally available in discrete time \citep{fiaschiParentiRicci2023}.

Second, the space is assumed to be continuous as in \cite{allen2014trade} and \cite{desmet2018geography}, the latter being the more appropriate setting for the study of the spatial allocation process, where administrative barriers should play a negligible role and the interactions of agents in local (labour and goods) markets depend on the location of each agent (consumer, worker, firm) with respect to the locations of the other agents. This is particularly important in the empirical application, where our model can be adapted to any administrative level of observation (e.g. counties, states, countries) without any significant change in the parameterisation of spatial spillovers. Given these advantages, the cost is the difficulty of framing workers' commuting and migration choices in an intertemporal framework, as in \citet{caliendo2019trade} and \citet{kleinman2023dynamic}. 

Third, the focus on transitional dynamics, which shows that two key phenomena cannot be ignored: i) the non-linearity in the distribution dynamics, i.e. the possibility that some locations show a first phase of increasing followed by a next phase of decreasing labour and income density; and ii) metastability, i.e. the possibility that the spatial distribution of economic activity shows a long period of stability but is followed by sharp transitions. In other words, on a short time scale the economy reaches some kind of (apparent) equilibrium, while on a long time scale it moves from one of these (meta) equilibria to the other. A recent contribution to transitional dynamics is \citet{allen2020persistence}, which, however, do not consider metastability, whereas \cite{caliendo2019trade} and \cite{kleinman2023dynamic} do not address the non-linearity in distribution dynamics.

The paper is structured as follows. Section \ref{sec:model} illustrates the micro-foundations of the workers' movement and presents the main result on the dynamics in the limit of infinite workers; Section \ref{sec:localUtilitFunction} describes the specific setting of our economy and derives its aggregate behaviour; Section \ref{sec:propertiesEconomy} contains the main theoretical results on the dynamics; Section \ref{sec:numericalExplorations} contains some numerical explorations; and Section \ref{sec:conclusions} concludes. The Appendix contains the proofs. The Online Appendix contains some extensions. 

\section{Spatial dynamics of workers}\label{sec:model}
In this section we discuss worker dynamics in two-dimensional continuous space.
In particular,  we present a flexible theoretical framework in terms of the specification of individual preferences, technology and spatial spillovers (leaving their specific characterisation to Section \ref{sec:localUtilitFunction}), where workers move according to differential utilities over space and the dynamics of all workers can be modelled in a parsimonous way by a PDE. 

\noindent Suppose we have an economy with $N$ workers, and denote the spatial location of worker $i$ at each time $t$ by $X^{i,N}_{t}$, and the location of all other workers (including itself) by $\mathbf{X}^{N}_{t} = \left(X^{1,N}_{t},X^{2,N}_{t},\dots,X^{N,N}_{t}\right)$. Let $u^{i}$ be the utility function of worker $i$, which depends on the location of all other workers $\mathbf{X}^{N}_{t}$, time $t$, its location in space $X^{i,N}_{t}$, and some unobserved personal preferences. The movement of worker $i$ is given by
\begin{equation}\label{eq:movementAgent}
dX^{i,N}_{t} = \left( \frac{1}{c_{M}} \right) \nabla_{x} u^{i}\left(\mathbf{X}_t^N\right)\left(X^{i,N}_{t},t\right) dt,
\end{equation}
where $\nabla_x u^{i}$ is the \textit{gradient}, i.e. in a two-dimensional space, $\nabla_x u^{i} \equiv \left(\partial_{x_1} u^{i},\partial_{x_2} u^{i}\right)$ ($x_1$ and $x_2$ are the two directions). It indicates the direction in which worker $i$ should move to increase its utility, given its location $X^{i,N}_{t}$ and the location of all $N$ workers $\mathbf{X}_t^N$. 
In the Online Appendix \ref{app:NashEquilibriumAgentsMovement} we show that the movement of the worker in Eq. \eqref{eq:movementAgent} can be derived as a Nash equilibrium. The factor $(1/c_{M})$ represents the inverse cost of movement for the workers and is also discussed in the Online Appendix \ref{app:NashEquilibriumAgentsMovement}. Migration over long distances (e.g. between countries) is not well represented by this type of movement cost, but it is marginal for the phenomena we are interested in this paper.

Eq. \eqref{eq:movementAgent} represents the minimal setup for workers who seek higher utility by moving through space, but are subject to movement costs (frictions) and do not coordinate their choices by taking $\mathbf{X}_t^N$ as given. \cite{howitt2006coordination}, \cite{lebaron2008modeling}, \cite{colander2008beyond} and \cite{aoki2011reconstructing} provide a detailed discussion of the implications of this approach in macroeconomics, which is called the \textit{worker-based model}. The latter is a complement/alternative approach to the framework with optimising forward-looking (in the sense of space and time) workers. In particular, the law of worker migration in our model, given by Eq. \eqref{eq:movementAgent}, can be traced to the one proposed in \citet[p.748]{caliendo2019trade} or \citet[p.12]{allen2020persistence} under the hypothesis of an infinite discount rate and an appropriate choice of movement costs and random component of utility.\footnote{This statement can be made rigorous by using Theorem 2.1 in \citet{bardi2021convergence}.}

According to Eq. \eqref{eq:movementAgent} at time $t$ the worker $i$ does not move if its utility in the neighbourhood of its location $X^{i,N}$ is equal to its actual location, i.e. $\nabla_{x} u^{i}\left(\mathbf{X}_t^N\right)\left(X^{i,N}_{t},t\right)=0$. The equilibrium is therefore defined as the condition when $ \nabla_{x} u^{i}\left(\mathbf{X}_t^N\right)\left(
X^{i,N}_{t},t\right)=0 \; \forall i$, which allows for the possibility of i) \textit{Pareto dominated equilibria} due to lack of worker coordination \citep{howitt2006coordination} and ii) equilibria characterised by \textit{complex geographical landscape} \citep{krugman1994complex}. Also importantly, this framework can generate \textit{non-linear out-of-equilibrium dynamics} as shown in \cite{krugman1994complex,krugman1996self}.
All these features have empirical support and are one of the main advantages of this approach. We will show below that another advantage is the flexibility to take into account in the analysis some key features of the urban/regional economy, such as the presence of spatial spillovers in production (spatial increasing returns) and consumption (endogenous amenities), together with a non-uniform space (exogenous amenities and natural barriers), while maintaining relative analytical tractability.

\subsection{Workers' preferences}
The form of workers' utility is taken by the \textit{Random Utility Model} framework (\citealp{train2009discrete}):
\begin{equation}
u^{i}\left(\mathbf{X}_t^N\right)\left(x,t\right) := v^{N}\left(\mathbf{X}_t^N\right)\left(x,t\right) +\sigma x \cdot \frac{dB^{i}_{t}}{dt} ,
\label{eq:randomUtility}
\end{equation}
i.e. the utility is given by the linear combination of a \textit{systematic} (deterministic) component and a \textit{random} component $\sigma x \cdot dB^{i}_{t}/dt$, where $(B^i_t)_{i \in \NN}$ is a sequence of {independent Brownian motions} and $\sigma$ is a scaling factor for the variance of the random component. 

The former is a utility function common to all workers, defined as follows:
\begin{eqnarray*}
\label{eq:systematicUtilityFun}
v^{N}:&\RR^{2N} & \to \big[ \RR^{2} \times \RR^{+} \to \RR\big]\nonumber \\
&\mathbf{X}&\mapsto \big[(x,t) \mapsto v^{N}(\mathbf{X})(x,t)\big],
\end{eqnarray*}
that is, for every $\mathbf{X} := ({X}_{1},\dots,{X}_{N})\in \RR^{2N}$ representing the location of all workers, there exists a \textit{field of utility} $v^{N}(\mathbf{X}):\RR^{2} \times \RR^{+} \to \RR$ which associates, at each point $x$ and $t \in \RR$, the corresponding utility. 
The systematic utility of worker $i$, given the configuration of all workers $\mathbf{X}^{N}_{t}$, is given by $v^{N}\left(\mathbf{X}^{N}_{t}\right)(X^{i,N}_{t},t)$. The first argument of $v$, $\mathbf{X}^{N}$, is the overall form of utility determined by the spatial distribution of workers; the second, $X^{i,N}_{t}$, assigns to each worker the level of utility that corresponds to its location; finally the third component, $t$, is the overall change in utility due to technology and other time-varying exogenous factors (e.g. technological progress, climate change, etc.).
The systematic utility measures the worker's utility of being in a particular location $x$, which reflects the wage, amenities and any other local characteristics. In other words, $v$ is the \textit{indirect utility} of being in $x$. 

The second component of Eq. \eqref{eq:randomUtility} instead reflects the \textit{idiosyncratic preferences} of worker $i$ for location $x$ and is given by a scalar product between location $x$ and $dB^{i}_{t}/dt$, which (at least formally) represents the instantaneous variation of a stochastic process with Gaussian independent increments over disjoint time intervals with zero mean and variance $dt$. The location $x$ is inserted as a multiplicative factor to make the variance of the random component of the utility dependent on the distance between two locations.
Unbounded noise is commonly used in stochastic modelling (mainly Gaussian); more realistic noise should require boundedness, but to our knowledge this prevents the use of standard stochastic differential equation techniques \citep{domingo2020properties}. 
Our instantaneous utility in Eq. \eqref{eq:randomUtility} is very similar to that used in \citet[p.748]{caliendo2019trade} and/or \citet[p.12]{allen2020persistence}, but with a different shape of the stochastic component.

Finally, from Eqq. \eqref{eq:movementAgent} and \eqref{eq:randomUtility}:
\begin{equation} 
dX^{i,N}_{t} = \left(\frac{1}{c_M}\right)\nabla_{x} v^{N}\left(\mathbf{X}_t^N\right)\left(X^{i,N}_{t},t\right) \, dt +\left(\frac{1}{c_M}\right) \sigma dB^{i}_{t}.
\label{eq:movementAgent_II}
\end{equation}
Eq. \eqref{eq:movementAgent_II} shows that worker $i$ moves to nearby locations where systematic utility is higher, plus a random individual preference for different locations.

\subsection{Limit of infinite workers\label{subsec:spatialDynamicsPopulationInfiniteEconomy}}

Theorem \ref{teo:limitInfiniteAgents} below, states that in the limit of infinite workers the dynamics at the aggregate level can be described by a PDE.
\begin{teo}[Spatial Dynamics of Workers]\label{teo:limitInfiniteAgents}
Assume that Eq. \eqref{eq:movementAgent} describes the movement of the worker. Then, as $N$ tends to infinity, the stock of workers at location $x$ at time $t$ converges in probability to the unique solution of
\begin{equation}\label{eq:dynamicsLabourDistribution} 
\partial_t l(x,t) = \dfrac{\sigma^2}{2c_M^2} \Delta_x l\left(x,t\right) - \left(\frac{1}{c_M}\right) \, \div_x \left( l(x,t) \nabla_x v\left(x,t\right) \right) 
\end{equation}
or, in short
\begin{equation}\label{eq:dynamicsLabourDistributionUtility}
\partial_t l(x,t) = - \left(\frac{1}{c_M}\right) \, \div_x \left( l(x,t) \nabla_x u(x,t)\right),
\end{equation}
where:
\begin{equation}\label{eq:teoTotalUtility}
u(x,t) \equiv v(x,t) - \left(\dfrac{\sigma^2}{2 c_M} \right)\log l(x,t).
\end{equation}
\end{teo}
\begin{proof}
For a sketch of the proof, see Appendix \ref{app:proofMeanFieldLimit}.
\end{proof}
\noindent From a mathematical point of view, Theorem \ref{teo:limitInfiniteAgents} is a generalisation of the more classical theory of the \emph{Fokker-Planck equation}, in which the aggregate behaviour of a set of \textit{independent} processes following an exogenous common law is derived as the solution to a PDE. However, our case is much more complicated, since workers follow \textit{non-independent} processes as they interact through the utility function in each period.

From an economic point of view, Eq. \eqref{eq:dynamicsLabourDistributionUtility} expresses the dynamics of the distribution as a result of the movement of workers to maximise the utility in Eq. \eqref{eq:teoTotalUtility}, analogous to the dynamics of individual workers expressed in Eq. \eqref{eq:movementAgent}. The expected utility of a worker at location $x$ in period $t$, $u(x,t)$, is the sum of a systematic utility $v(x,t)$ plus a random component whose aggregate result is given by $- \left(\sigma^2/\left(2 c_M\right) \right)\log l(x,t)$. In other words, the random component in the movement of workers at the individual level induces the presence of diffusion at the aggregate level, i.e. the first term on the right-hand side of Eq. \eqref{eq:dynamicsLabourDistribution}, or the second (\textit{entropic}) term in the utility function in Eq. \eqref{eq:teoTotalUtility}. 

The generality of Theorem \ref{teo:limitInfiniteAgents} allows to consider several economic features proposed in urban economics (e.g. spatial spillovers in consumption and in the labour market, see \citealp{Fujita1989}) without specific constraints (e.g. on the shape of the utility function, the production function and technological spillovers). At the same time, we can use a corpus of mathematical techniques specifically designed to study PDE's of the same type as Eq. \eqref{eq:dynamicsLabourDistribution}  \citep{morale2005interacting}.

\section{Environment\label{sec:localUtilitFunction}}
In this section, we specify agents' utility, production and amenities according to some key features that characterise urban economies \citep{Fujita1989}. In particular, we consider an economy with no capital accumulation, local technology and labour markets, but characterised by spatial spillovers. We also consider two types of amenities, the first exogenous and the second depending on local income and labour density.

Suppose that there is no accumulation of physical capital in the economy, i.e. no savings. The level of consumption is therefore equal to the level of workers' real wage $w^N$:
\begin{equation}\label{eq:consumption}
c^N\left(\mathbf{X}_t^N,t\right)\left(X^{i,N}_{t}\right) = \dfrac{\tilde{w}^N\left(\mathbf{X}_t^N,t\right)\left(X^{i, N}_{t}\right)}{P^N\left(\mathbf{X}_t^N,t\right)\left(X^{i,N}_{t}\right)}= w^N\left(\mathbf{X}_t^N,t\right)\left(X^{i,N}_{t}\right),
\end{equation}
where $c^N\left(\mathbf{X}_t^N,t\right)\left(X^{i,N}_{t}\right)$ is the consumption of worker $i$, $\tilde{w}^N\left(\mathbf{X}_t^N,t\right)\left(X^{i, N}_{t}\right)$ its nominal wage and $P^N\left(\mathbf{X}_t^N,t\right)\left(X^{i,N}_{t}\right)$ the price level of the place of residence of worker $i$. 
In Eq. \eqref{eq:consumption} we implicitly assume that the \textit{place of residence} is also the \textit{place of work} of the workers, i.e. we set aside the issue of \textit{commuting}, which would require also taking into account local markets for goods and trade costs. 
Moreover, workers' utility also depends on the amenities specific to the location of residence; in particular, these amenities can be both exogenous, such as weather conditions, or endogenous, such as the presence of facilities, schools, hospitals, pollution, traffic congestion, etc., which depend on local production and population density.

The form of workers' utility is assumed to be equal to
\begin{equation}
u^{i}\left(\mathbf{X}_t^N,t\right)\left(x\right) : = w^N\left(\mathbf{X}_t^N,t\right)\left(x\right) + A^N_{EN}\left(\mathbf{X}_t^N,t\right)\left(x\right) + A_{ES}\left(x,t\right) +\sigma x \cdot \frac{dB^{i}_{t}}{dt} ,
\label{eq:randomUtilityEcon}
\end{equation}
i.e. utility is linear in consumption, includes two additional terms for \textit{endogenous and exogenous amenities}, $A^N_{EN}\left(\mathbf{X}_t^N, t\right)\left(X^{i,N}_{t}\right)$ and $A_{ES}\left(X^{i,N}_{t},t\right)$ respectively, and a random component $\sigma x \cdot dB^{i}_{t}/dt$. Eq. \eqref{eq:randomUtilityEcon} is the counterpart of the logarithmic utility assumed for example in \cite{caliendo2019trade}, \cite{allen2020persistence} and \cite{redding2024quantitative}.
In conclusion, from Eqq. \eqref{eq:movementAgent} and \eqref{eq:randomUtilityEcon}, the worker's movement is described by:
\begin{equation} 
dX^{i,N}_{t} \hspace{-0.1cm}= \hspace{-0.1cm}\left(\hspace{-0.1cm}\frac{1}{c_M}\hspace{-0.1cm}\right)\hspace{-0.1cm}\left[ \nabla_{x} w^N\left(\hspace{-0.05cm}\mathbf{X}_t^N,t\hspace{-0.05cm}\right)\left(\hspace{-0.05cm}X^{i,N}_{t}\hspace{-0.05cm}\right) \hspace{-0.1cm}+\hspace{-0.1cm}  \nabla_{x}A^N_{EN}\left(\hspace{-0.05cm}\mathbf{X}_t^N,t\hspace{-0.05cm}\right)\left(\hspace{-0.05cm}X^{i,N}_{t}\hspace{-0.05cm}\right) \hspace{-0.1cm}+\hspace{-0.1cm} \nabla_{x}A_{ES}\left(\hspace{-0.05cm}X^{i,N}_{t},t\hspace{-0.05cm}\right) \right]dt  +\left(\hspace{-0.1cm}\frac{1}{c_M}\hspace{-0.1cm}\right) \sigma dB^{i}_{t}.
\label{eq:movementAgent_IIEcon}
\end{equation} 

\subsection{Local labour markets\label{sec:localWage}}

Production at location $x$ in period $t$ in an economy with $N$ workers, $y^N(x,t)$, is defined by
\begin{equation}
y^N(x,t) = A_l^N(x,t) l^N(x,t)^\beta,
\label{eq:localProductionFunnction}
\end{equation}
with $\beta \in \left(0,1\right]$, where $A_l^N(z,t)$ is a \textit{local-specific technological progress} affecting the\textit{ marginal productivity} of labour, and $l^N(x,t)$ is the \textit{stock of labour} available at location $x$ at period $t$. \cite{allen2014trade} discuss how Eq. \eqref{eq:localProductionFunnction} is compatible with the presence of land as a factor of production (in this case $\beta \in \left(0,1\right)$). Therefore the wage is given by
\begin{eqnarray}
w^N(x,t) &=& A_l^N(x,t) l^{N}(x,t)^{\beta-1}.
\label{eq:wages}
\end{eqnarray}
In the case of $\beta \in \left(0,1\right)$, Eq. \eqref{eq:wages} points to a \textit{congestion effect} in the labour market due to the decreasing marginal productivity of labour, which, ceteris paribus, attracts workers away from locations with higher labour densities. This is different from \citet{allen2020persistence}, who adopt a linear technology, and more similar to \citet{allen2014trade} and \citet{redding2024quantitative}, where this effect is justified by the need for the production of a certain amount of of land.

A technical difficulty in our framework, where space is continuous and the number of workers is finite, is that each worker's participation in local labour markets, i.e. the stock of labour available at location $x$ in period $t$, is modelled as the sum of the differential contributions of all workers within a given distance.  In particular, we assume that the individual contributions are determined by the (geographical) distance of the workers from the labour market locations and are conveniently modelled by the \textit{kernel function} $\theta_N\left(\cdot\right)$. Thus the stock of labour at location $x$ at time $t$ is defined by
\begin{equation}\label{eq:lN}
l^N(x,t) = \sum_{i = 1}^{N} \frac{1}{N}\cdot\theta_N(x-X^{i,N}_t),
\end{equation}
where $\theta_{N}(z) := {h_N}^{-2} \cdot \theta\left({z}{h_N}^{-1}\right)$, and $h_N$ tends to zero as $N$ goes to infinity. In particular, we assume that $h_N := N^{-\lambda}$ and $\lambda \in (0,1/4]$, which is crucial to derive the aggregate dynamics as a PDE as discussed in Appendix \ref{app:proofMeanFieldLimit}. The function $\theta (\cdot)$ is assumed to have a compact support with a unitary radius, to integrate to one, and to be symmetric and non-increasing with respect to the distance from the origin. 
Eq. \eqref{eq:lN} thus appears as a weighted sum of the individual endowments of labour of each worker, where the weights are determined by the distance of each worker from location $x$, with weights equal to zero for a distance greater than $h_N$. 

Given $h_N$, as $N$ becomes large, the interconnectedness between local labour markets increases because the number of workers participating in different markets increases with worker density. The decrease of $h_N$ in $N$, measured by $\lambda$, ``offsets'' this phenomenon. The limiting case of $\lambda =0$, i.e. $h_N=1$, corresponds to the {Mean-Field} limit, which has the very counterfactual implication that a location $x$ with zero inhabitants could have positive production if a populated location is within distance 1 from $x$.  For $\lambda >1$, in the limit of $N$ going to infinity, any interconnection between local labour markets disappears, which would eliminate an important source of spatial spillovers. Our case of $\lambda \in (0,1/4]$ instead implies that as the number of workers $N$ goes to infinity, neighbouring local labour markets are still connected, but local positive production requires the presence of some resident workers at the location. For $\lambda \in (1/4,1]$, we conjecture that the aggregate dynamics can always be modelled by a PDE, but a formal proof cannot be derived.

To summarise, in our economy each worker participates in local labour markets no further away than $h_N$, and the latter also reflects the influence of distance on the participation of workers in local labour markets. The labour market can be seen as composed of many interconnected \textit{local} labour markets (one for each location $x$), where the intensity of the interconnection is directly related to $h_N$. However, as the number of workers $N$ goes to infinity, this intensity goes to zero and therefore $h_N$ does not appear in the PDE. Finally, two neighboured local labour markets still result interconnected also in the limit. The latter is a distinctive feature compared to the case with discrete space and independent local labour markets (ignoring bilateral trade) typically considered in urban economics \citep{allen2014trade, allen2020persistence}.

\subsection{Technological progress\label{sec:localTechProgress}}
Regarding local technological progress $A_l^N(x,t)$ in the presence of positive spatial externalities, we assume, following \cite{papageorgiou1983agglomeration}, \cite{glaeser2010complementarity} and \cite{glaeser2001cities} among others, that: 
\begin{equation}\label{eq:laborProductivity}
A_l^N(x,t) = G(x)\sum_{i=1}^{N} \frac{1}{N}W_h^P(x-X^{i,N}_t),
\end{equation}
where $W_h^{P}\left(z\right):= \left(1/h^2\right) W^P\left(z/h\right)$ is a kernel function assumed to have compact support with radius (bandwidth) $h>0$, integrating to one, symmetric with respect to the origin and non-increasing with distance from the origin, and $G(x)$ is a non-negative function defined on the space $\Omega$. 
Local technological progress is therefore the product of the two terms: i) $G(x)$, the \textit{potential use} of location $x$ for production, which can be traced back to the possibility of using this specific location for productive purposes (e.g. those locations where production is almost impossible, such as in the middle of rivers or the open sea, should have $G(x)\approx 0$); ii) $\sum_{i=1}^{N} \frac{1}{N}W_h^P(x-X^{i,N}_t)$, a measure of the local density of workers, where each worker is considered for its endowment of labour $1/N$ and its proximity to location $x$ is modelled by the kernel function $W_h^P$. The bandwidth $h$ reflects the extension of spatial spillovers, i.e. higher $h$ means wider spatial spillovers. 
Eq. \eqref{eq:laborProductivity} implies that the \textit{positive spatial externalities} in production at location $x$ depend not only on workers at $x$, as in \cite{allen2014trade,allen2020persistence}, but also on all workers within a distance $h$ from $x$, as in \cite{redding2024quantitative}.
Therefore, production and wage at location $x$ are given by
\begin{eqnarray}
y^N(x,t) &=& G(x)\left[\sum_{i=1}^{N} \frac{1}{N}W_h^P(x-X^{i,N}_t)\right] l^{N}(x,t)^\beta, \text{ and} \label{eq:localIncome}\\
w^N(x,t) &=& G(x)\left[\sum_{i=1}^{N} \frac{1}{N}W_h^P(x-X^{i,N}_t) \right]l^{N}(x,t)^{\beta-1}.
\label{eq:ProductionWages}
\end{eqnarray} 

\subsection{Amenities\label{sec:localAmenities}}

\textit{Local amenities} can be both exogenous and endogenous. 
The \textit{exogenous amenities} are related to specific characteristics of the location $x$ independent of the distribution of workers over space. Typical examples are climate, weather, natural landscape, etc. All these characteristics can be considered exogenous to the overall dynamics of population distribution and can be modelled as a function $A_{ES}(x,t)$. The shape of $A_{ES}$ reflects spatial differences in individual utility caused by specific characteristics of each location $x$. The extreme case is the sea, rivers and/or high mountains, where the function $A_{ES}$ should signal an extremely low utility for people living in those places. 

The \textit{endogenous amenities} require more sophisticated modelling because they are affected by local population and income. Typically, endogenous amenities include local services (e.g. schools, hospitals and theatres), local goods (e.g. shopping centres) and any other infrastructure and facilities that are \textit{locally supplied and consumed} and therefore increase with local income. However, endogenous amenities are also subject to \textit{congestion}, i.e., given their stock, individual utility decreases with the number of people who have access to them. Therefore, we assume that the endogenous amenities are given by:\footnote{See Online Appendix \ref{app:endoAmenities} for a simple micro-foundation of Eq. \eqref{eq:endoAmenities}.}
\begin{equation}
A^N_{EN}(x,t) =  A_0 \left\{ y^N(x,t)^\phi - \mu_A l^N(x,t)\right\} =
A_0 \left\{ \left[G(x)\sum_{i=1}^{N} \frac{1}{N}W_h^P(x-X^{i,N}_t) l^N(x,t)^\beta\right]^{ \phi} - \mu_A l^N(x,t)\right\},
\label{eq:endoAmenities}
\end{equation}
where $A_0$ is a scale parameter, $\phi \in \left(0,1\right)$ is a parameter measuring the elasticity of endogenous amenities to local income, while congestion is assumed to be proportional to the number of workers, with $\mu_A >0$ measuring congestion per worker.
Endogenous amenities are therefore a source of agglomeration until the stock of labour is below $l^N(x,t)^{TR} = \left[\phi \beta \left(A_l^N(x,t) \right)^\phi/\mu_A \right]^{1/\left(1-\phi \beta\right)}$, while after this threshold they act as a source of repulsion.
Endogenous amenities of Eq. \eqref{eq:endoAmenities} at location $x$ thus depend not only on workers at $x$, as in \cite{allen2014trade,allen2020persistence}, but also on all workers within a distance $h$ from $x$. In a more general setting, endogenous amenities could have significant spatial spillovers over a distance different from $h$, as in \cite{redding2024quantitative}.

\subsection{Spatial dynamics of the economy}

Theorem \ref{teo:limitInfiniteAgentsII} recasts Theorem \ref{teo:limitInfiniteAgents} for the specific setting of the economy just presented.\footnote{It is also possible to consider an economy in which the number of workers increases at an exogenous local growth rate $n(x,t)$, which would give rise to an additional term in the right-hand side of Eq. \eqref{eq:dynamicsLabourDistributionII} in the form of $+n(x,t)l(x,t)$. A brief description of this economy can be found in the Online Appendix \ref{app:increasingNumberAgents}.}

\begin{teo}[Spatial Dynamics of the Economy]\label{teo:limitInfiniteAgentsII}
Assume that Eq. \eqref{eq:movementAgent_IIEcon} describes the movement of workers, Eq. \eqref{eq:lN} the stock of labour at each location, Eq. \eqref{eq:laborProductivity} the local technological progress, Eq. \eqref{eq:localIncome} the local income, Eq. \eqref{eq:ProductionWages} the local wage, and Eq. \eqref{eq:endoAmenities} the endogenous amenities in location $x$ at time $t$. Then, as $N$ tends to infinity, the stock of workers in location $x$ at time $t$, $l^{N}(x,t)$ converges in probability to the unique solution of:
\begin{multline}\label{eq:dynamicsLabourDistributionII}
\partial_t l(x,t) = \dfrac{\sigma^2}{2c_M^2} \Delta_x l\left(x,t\right) - \left(\frac{1}{c_M}\right) \, \div_x \left( l(x,t) \nabla_x w\left(x,t\right) \right) - \\
- \left(\frac{1}{c_M}\right) \, \div_x \left( l(x,t) \nabla_{x}A_{EN}(x,t) \right) - \left(\frac{1}{c_M}\right) \, \div_x \left( l(x,t) \nabla_{x}A_{ES}(x,t) \right),
\end{multline}
where 
\begin{multline} 
\label{eq:systematicUtility}
A_{l}(x,t) := G(x)\left(W_{h}^{P}*l\right)(x,t); \quad
y(x,t) := G(x)(W_{h}^{P} * l)(x,t) l(x,t)^{\beta}; \\ 	
w(x,t) := G(x)(W_{h}^{P} * l)(x,t) l(x,t)^{\beta-1};\quad 
A_{EN}(x,t) := y(x,t)^{\phi} - \mu_{A}l(x,t); \text{ and}  \\ 
v(x,t) := w(x,t) + A_{EN}(x,t) + A_{ES}(x,t) 
\end{multline}
are the value of local technological progress, income, wages, endogenous amenities in the limit of infinite $N$, respectively.
\end{teo}
\begin{proof}
For a sketch of the proof see Appendix \ref{app:proofMeanFieldLimit}.
\end{proof}
Some of the key definitions in Eq. \eqref{eq:systematicUtility}, in particular wages and endogenous amenities, can be brought directly back to \cite{allen2014trade} and \cite{allen2020persistence} in the limit of $N$ going to infinity and $h$ going to zero (i.e. no spillovers across neighbouring locations but only in the same location).\footnote{For example, since $\lim_{h \rightarrow 0} \lim_{N \rightarrow \infty}\sum_{i=1}^{N} \left(1/N\right) W_h^P(x-X^{i,N}_t)=l(x,t)$, $y(x,t)$ and $A_{EN}(x,t)$ in Theorem \eqref{eq:dynamicsLabourDistributionII} are ``isomorphisms'' of wages, production and endogenous amenities of \citet{allen2020persistence} under the assumptions that $\beta=\alpha_1$, $\alpha_2=0$, $\phi=\beta_1\left(1+\alpha_1\right)$ and $\mu_{A}=0$.} In the same way, in the limit of $N$ going to infinity, an educated qualification of the bilateral travel time in \cite{redding2024quantitative}, i.e. linear in the spatial distance, can reproduce the shape of spatial spillovers in consumption and production of Eq. \eqref{eq:systematicUtility} under some suitable normalisation. 

Eq. \eqref{eq:dynamicsLabourDistributionII} in Theorem \ref{teo:limitInfiniteAgentsII} is similar to Eq. (10) in \cite{papageorgiou1983agglomeration} for an economy where workers and the number of locations are finite, there are positive spatial externalities, moving costs are zero, and the time scaling of the decision is discrete. Also, Eq. (8.3) in \cite{krugman1996self}, which governs the density of firms in the ``edge city'' model, is similar to our Eq. \eqref{eq:dynamicsLabourDistributionII}, but it lacks a micro-foundation based on workers' choices and spatial interactions. \citet{xepapadeas2016spatial}'s model shows some similarities with Eq. \eqref{eq:dynamicsLabourDistributionII}, although their focus is on capital accumulation.
Our next step is to use Theorem \ref{teo:limitInfiniteAgentsII} to characterise some properties of our economy.

\section{Properties of the economy\label{sec:propertiesEconomy}}
We now proceed to illustrate some key theoretical properties of our economy. In particular, Section \ref{sec:techwagesincome} investigates the \textit{inner structure} of a city with a focus on spatial distribution of technology, wages, income and amenities; Section \ref{sec:TheorylongRunEquilibrium} is devoted to the analysis of the \textit{stationary equilibrium distribution} (SED); Section \ref{sec:TheoryMetastability} deals with the phenomenon of \textit{metastability}; and Section \ref{sec:TheorySocialUtility} introduces the concept of \textit{social utility}. 
Online Appendix \ref{app:representativeWorker} studies the economy from the point of view of the \textit{representative} worker.  

\subsection{Spatial distribution of technology, income, wages and amenities\label{sec:techwagesincome}}

Heuristically, the local number of workers positively affects technology and income, while the relationship with wages and endogenous amenities is more complex for the presence of congestion effects.

With respect to technology $A_l(x,t)$, we have from Theorem \ref{teo:limitInfiniteAgentsII}, assuming $G(x)=1$:
\begin{equation}
\nabla_x A_l(x,t) = (W_{h}^{P} * \nabla_x l)(x,t), 
\end{equation}
from which the gradients of technology ($\nabla_x A_l(x,t)$) and the one of the number of workers ($ \nabla_x l$) point to the same direction, i.e. $A_l(x,t)$ is growing with $l(x,t)$.
The same is true for income, given that
\begin{equation}
\nabla_x y(x,t) = \nabla_x A_l(x,t) l(x,t)^{\beta} + \beta A_l(x,t) l(x,t)^{\beta-1} \nabla_x l(x,t).
\end{equation}
On the contrary, for wage we have
\begin{equation}
\nabla_x w(x,t) =w(x,t) \left[ \dfrac{\nabla_x A_l(x,t) }{A_l(x,t) } + \left(\beta-1\right) \dfrac{\nabla_x l(x,t)}{l(x,t)} \right],
\end{equation}
where the first term inside the square brackets grows with $\nabla_x l$, while the second decreases for $\beta \in \left(0,1\right)$, i.e. when the congestion effect operates in the local labour market. A city may therefore have the highest wages in its suburbs rather than in its centre. Figure \ref{fig:distributionOverSpace} illustrates this point for an economy with the same parameters but a different spatial distribution of workers.
The same congestion effect holds for endogenous amenities, given that
\begin{equation}
\nabla_x A_{EN}(x,t) = \phi y(x,t)^{\phi-1} \nabla_x y(x,t) - \mu_A \nabla_x l(x,t).
\end{equation}
\begin{figure}[!htbp]
\begin{subfigure}[b]{0.45\textwidth} 
\centering
\includegraphics[width=\textwidth]{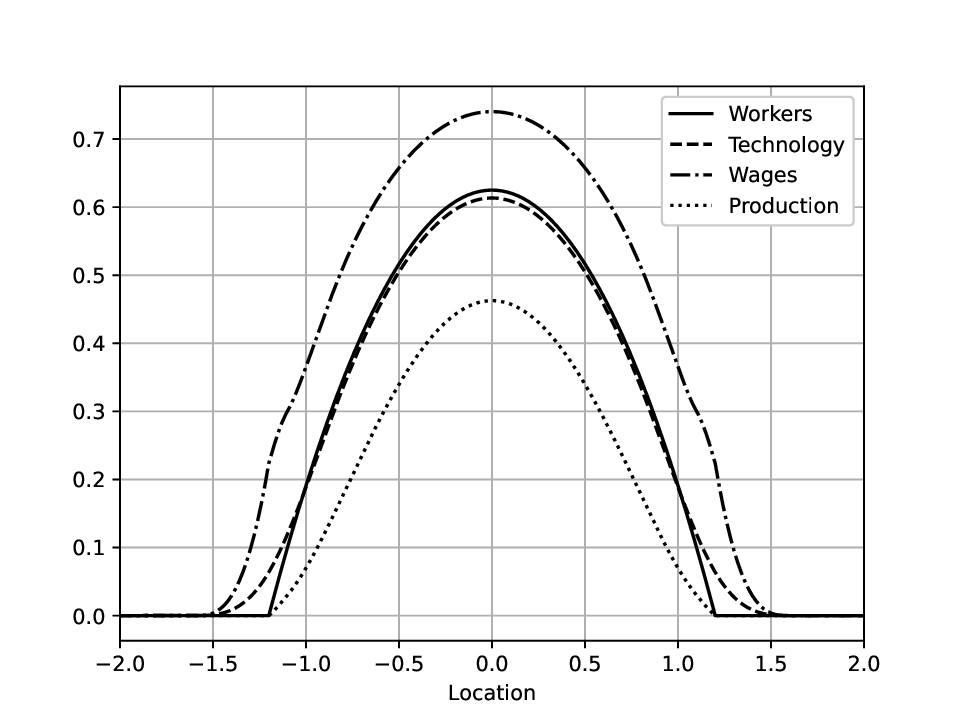}
\caption{Bandwidth of Epanechnikov distribution equal 1.2.}
\label{fig:distributionOverSpace_I}
\end{subfigure}
\hfill
\begin{subfigure}[b]{0.45\textwidth} 
\centering
\includegraphics[width=\textwidth]{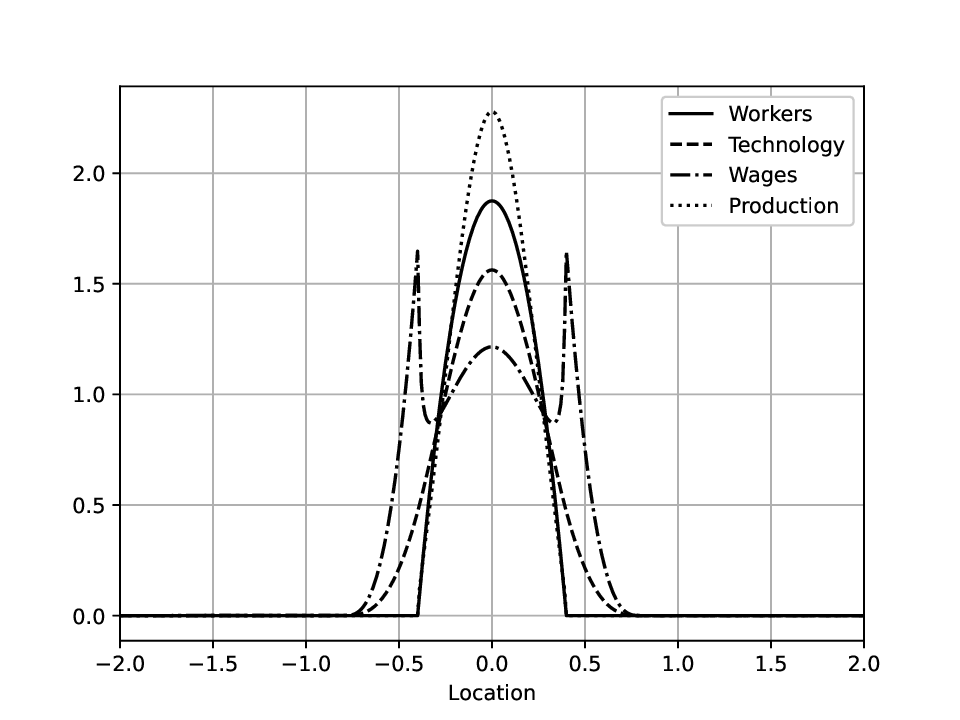}
\caption{Bandwidth of Epanechnikov distribution equals 0.4.}
\label{fig:distributionOverSpace_II}
\end{subfigure}
\caption{The spatial distribution of workers is given by an Epanechnikov distribution with different bandwidths; technology, wages and production are calculated according to Theorem \ref{teo:limitInfiniteAgentsII} with $G(x)=1 \; \forall x$, $\beta=0.6$ and $h=0.4$ (a correction for wages is introduced to avoid singularities around $l(x,t) \leq 0.1$).}
\label{fig:distributionOverSpace}
\end{figure}

\subsection{Stationary equilibrium distribution \label{sec:TheorylongRunEquilibrium}}

Definition \ref{def:stationaryEquilibrium} characterises a stationary equilibrium distribution of workers, i.e. $l(x,t)=l^{EQ}(x)$, which also implies the stability of all other endogenous variables, i.e. technology, income, wages and endogenous amenities.
\begin{defi}[Stationary Equilibrium Distribution]\label{def:stationaryEquilibrium}
A stationary equilibrium distribution (SED) is a function $l^{EQ}(x)$ such that the right-hand side of Eq. \eqref{eq:dynamicsLabourDistributionII} is equal to zero.
\end{defi}

Unfortunately, there not exists a general results of existence and convergence towards a SED for our class of PDE. However, we are able to prove under some nonrestrictive assumptions that: i) the \textit{barycentre} of the initial distribution is invariant in the SED (if it exists) (Theorem \ref{teo:centerofmass}); ii) the SED is \textit{radially symmetric} about the barycentre (Theorem \ref{teo:Equilibrium}); and, iii) the \textit{uniform} distribution is a SED but potentially unstable (Theorem \ref{teo:stabilityUniformDistribution}).  

Theorem \ref{teo:centerofmass} states that in the absence of labour market congestion, exogenous factors and endogenous amenities, the spatial reallocation of workers does not affect the barycentre of the workers' distribution.
\begin{teo}[Barycentre of the Workers' Distribution]\label{teo:centerofmass}
Assume $\beta = 1$, $A_{ES}(x,t)=\bar{A}_{ES}(t)$, $G(x) \equiv \overline{G}$, and $A_{0}=0$. 
Then the barycentre (the centre of mass) of the distribution of workers is invariant over time, i.e. defining the barycentre as
\begin{equation*}
CM_{l}(t) := \int_{\Omega} x l(x,t)\,dx,
\end{equation*}
then
\begin{equation*}
CM_{l}(t) = CM_{l}(0),\quad \forall t \geq  0.
\end{equation*}
\end{teo}
\begin{proof}
See Appendix \ref{app:proofCenterOfMass}.
\end{proof}
A key implication of Theorem \ref{teo:centerofmass} is a radical path dependence in the location of cities in the case of an uniform space \citep{allen2020persistence}. Our result is very similar to that in \citet{krugman1993first}, which states that ``...there are multiple equilibria (indeed a continuum) for metropolitan location.'', but derived in a dynamic framework. In this respect, the exogenous factors (the ``first nature'' in \citet{krugman1993first}'s words), that likely determine the initial distribution, are crucial in determining the barycentre of a SED.

All effects of the location of workers relative to one another, i.e. agglomeration and congestion (the ``second nature” in \citet{krugman1993first}'s words), jointly with diffusion, instead shape the SED as shown by Theorem \ref{teo:Equilibrium}.\footnote{See \citet{redding2024quantitative} for a quantitative evaluation of the importance of ``first nature” versus ``second nature” geography.} 

\begin{teo}[Shape of the Stationary Equilibrium Distribution]\label{teo:Equilibrium}
Assume $\beta = 1$, $A_{ES}(x,t)=\bar{A}_{ES}(t)$, $G(x) \equiv \overline{G}$, and $A_{0}=0$ and let $l^{EQ}(x)$ be a SED. Then $l^{EQ}(x)$ must be a radially symmetric and decreasing function with respect to the barycentre at time zero, i.e. 
\begin{equation}\label{eq:teoEquilibrium}
l^{EQ}(x) = L^{EQ}(\norm{x-CM_{l}(0)})
\end{equation}
for some decreasing function $L^{EQ}:\RR^{+} \to \RR^{+}$.
\end{teo}
\begin{proof}
For the proof that the SED is radially symmetric and decreasing, see \citet[Theorem 2.2]{carrillo2019nonlinear}. To see that the point of radial symmetry is the barycentre at time zero, it is enough to apply Theorem \ref{teo:centerofmass}. 
\end{proof}
Theorem \ref{teo:Equilibrium} thus rules out the possibility that a SED could be multi-peaked, regardless of the initial distribution. Although it does not give a specific shape, the density of workers in the SED must decrease with distance from the barycentre, regardless of direction. The empirical evidence of persistent several spatial agglomerations can be traced to i) the existence of exogenous amenities and/or ii) the diffusion, which drives the results in Theorem \ref{teo:Equilibrium}, that, even if present, is sufficiently weak to generate the phenomenon of metastability discussed in Section \ref{sec:TheoryMetastability}. 

On the other hand, Theorem \ref{teo:stabilityUniformDistribution} states that the uniform distribution is a SED but possibly unstable.\footnote{For a general introduction to stability in PDEs see, e.g., \cite{kapitula2013spectral}.}  
\begin{teo}[Stability of the uniform Stationary Equilibrium Distribution] \label{teo:stabilityUniformDistribution}
Assume $A_{ES}(x,t)=\bar{A}_{ES}(t)$ and $G(x) \equiv \overline{G}$. Then the uniform distribution $l^{EQ}(x)=\bar{l}$ is a SED and wages, technology, income, endogenous amenities, systematic utility and utility are equal across locations. 
Assume, in addition, that $\beta=1$ and $A_{0}=0$. Then there are two critical values $\overline{c_M} > \underline{c_M} > 0$ such that
\begin{enumerate}[i)]
\item if $0 < c_M < \underline{c_M}$ then the SED $l^{EQ}(x)= \bar{l}$ is linearly asymptotically stable;
\item if $c_M > \overline{c_M}$ then the SED $l^{EQ}(x)= \bar{l}$ is linearly unstable.
\end{enumerate}
\end{teo}
\begin{proof}
If $l^{EQ}(x)= \bar{l} \; \forall x \in \Omega $, then all the terms on the right hand side of Eq. \eqref{eq:dynamicsLabourDistributionII} are zero if $A_{ES}(x,t)=\bar{A}_{ES}(t)$ and $G(x) \equiv \overline{G}$.

\noindent If $l^{EQ}(x)=\bar{l}$, then $v^{EQ} = \bar{l}^{\beta} + \left[\bar{l}^{\beta+1}\right]^{\phi} - \mu \bar{l} + \bar{A}_{ES}(t)$, and $u^{EQ} = v^{EQ} - \left(\dfrac{\sigma^2}{2 c_M}\right)\log \bar{l}$, which are independent of $x$. The spatial uniformity of wages, technology, income and endogenous amenities can be deduced in a similar way.

\noindent The proof of Point $i)$ directly derive from \citet[Proposition 2.2]{celinski2014stability}, while of Point $ii)$ from \citet[Theorem 2.4]{celinski2014stability}.
\end{proof}
Theorem \ref{teo:stabilityUniformDistribution} states that the stability of the uniform distribution depends crucially on the relative magnitude of diffusion as opposed to agglomeration forces, as expected. The thresholds $\underline{c_M}$ and $\overline{c_M}$ depend on all the parameters of the model. In particular, an increase in $\sigma$ or $h$, which increases the diffusion or the extent of spatial externalities, leads to a decrease in $\overline{c_M}$ and an increase in $\underline{c_M}$. 
In conclusion, $c_M > \overline{c_M}$ seems to be a necessary condition for the existence of a SED with spatial agglomerations.

\subsection{Metastability\label{sec:TheoryMetastability}}

The class of PDEs reported in Theorem \ref{teo:limitInfiniteAgentsII} admits a phenomenon called \emph{metastability}, which in our economy means that $l(x,t)$ spends an extended period of time around a given spatial distribution before suddenly transitioning to another (apparently stable) spatial distribution \citep{carrillo2019aggregation}. However, to our knowledge, metastability has an elusive definition, and there are no general results in the literature for our case.\footnote{Numerically, metastability has been studied extensively (see, for example, \citealp{Evers2016}).}

As briefly discussed in Section \ref{sec:model}, the metastable behaviour is related to the presence of a noise. To fix the idea, consider the simplified setting where $\beta=1$, $G(x) = \overline{G}$, $A_{ES}(x,t)=\bar{A}_{ES}(t)$, $A_0=0$, and a very limited range of interaction (small $h$); first, consider the case of an economy without randomness, i.e. $\sigma = 0$: for the effect of aggregation over time, all workers tend to concentrate in a finite set of locations.\footnote{Technically, this means that the mass of workers concentrates into multiple \textit{Dirac} masses, i.e. a ``probability density'' that takes a value of infinity in a single location and zero elsewhere, or into a single Dirac mass if the initial distribution of workers is within the \textit{perception} range of $W_{h}$.} Then consider a $\sigma>0$, but very small: \citet{cozzi2017aggregation} show that at finite time interval $[0,T]$ the aggregate behaviour closely approximates that with $\sigma = 0$, i.e. a multi-peaked distribution. However, Theorem \ref{teo:Equilibrium} states that the SED is mono-peaked and, therefore, the distribution in period $[0,T]$ appears to be into the set of metastable equilibria.

A drawback of the above setting is that $T$ must be chosen sufficiently small to ensure that workers are not all concentrated in a finite set of locations as a result of aggregation. In our setting, this can be overcome by taking $\beta<1$, which introduces congestion. Given an initial distribution with sufficiently dispersed workers, the intuitive result is a multi-peaked distribution for \textit{any} $T$. Unfortunately, for this setting a rigorous result like the one by \citet{cozzi2017aggregation} is not available, which suggests exploring numerically this issue (we refer to Section \ref{sec:numericalExplorations} for a more sophisticated numerical example).\footnote{In this regard, the most promising approach appears the classical $\Gamma$-convergence techniques used, e.g., by \citet{ambrosio2005gradient} and \citet{lagoutiere2023vanishing}.} 
Figure \ref{fig:L2NormDifference} illustrates the difference between the workers' spatial distribution and their initial (multi-peaked) distribution, which is the SED for $\sigma = 0$, measured by the $L^{2}$-norm, i.e. $\norm{l_{\sigma=0}(\infty,\cdot)-l_{\sigma}(T,\cdot)}_{2}$, calculated at periods $T=0.5,1.0$ and $2.0$ for different values of $\sigma$.
\begin{figure}[!htpb]
\includegraphics[width=0.45\textwidth]{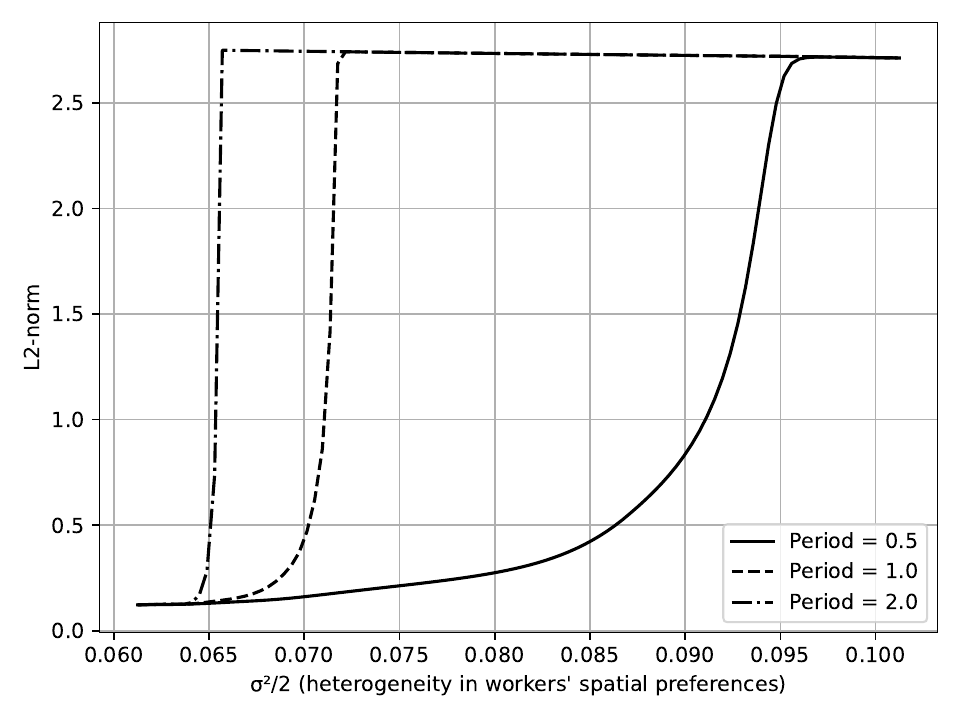}
\caption{The difference between the spatial distribution of workers and their initial (multi-peaked) distribution measured by the $L^{2}$-norm $\norm{l_{\sigma=0}(\infty,\cdot)-l_{\sigma}(T,\cdot)}_{2}$ as a function of the heterogeneity in workers' spatial preferences $\sigma$ calculated at period $T=0. 5,1.0$ and $2.0$ with $\beta=0.6$, $G(x) = 1$, $A_{ES}(x,t)=1$, $A_0=0$ and $h=0.15$.} 
\label{fig:L2NormDifference}
\end{figure}
We observe that for low values of $\sigma^2/2$ (i.e. below 0.065), the distribution remains very close to the initial one even for $T=2$; instead, in the range $(0.065,0.070)$, a sharp transition occurred before $T=2$; the timing of this transition, finally, seems to be inversely related to $\sigma$.\footnote{In statistical physics, this behaviour is called a \textit{first-order phase transition} \citep{bovier2016metastability}.}

From a more general perspective, metastability can be seen as a type of \textit{persistence} because: i) an economy can remain in a (metastable) distribution for a long time before it starts to converge to its SED; and ii) if an exogenous shock drives an economy out of its SED, the convergence path may take a long time if the economy is trapped in a metastable distribution. \citet{allen2020persistence} provides an insightful theoretical discussion and sound empirical evidence on the ubiquity of persistence in the spatial distribution of economic activity.

\subsection{Social utility\label{sec:TheorySocialUtility}}
A natural question that arises in our setting is whether the uncoordinated choices of workers, who maximise their utility locally, have some welfare properties at the aggregate level. Eq. \eqref{eq:teoTotalUtility} suggests as a measure of social utility:
\begin{multline}
SU(l(t)) := \int_{\Omega} l(x,t) u(x,t) \, dx = \int_{\Omega} l(x,t) v(x,t) \, dx - \dfrac{\sigma^2}{2c_M}\int_{\Omega} l(x,t) \log l(x,t) \,dx.
\label{def:socialUtility}
\end{multline}
Eq. (\ref{def:socialUtility}) defines social utility $SU$ as the sum of two components: the first component accounts for the aggregate/average systematic utility of workers, while the second component, directly related to diffusion forces, reaches its maximum for a uniform spatial distribution of workers and is proportional to the Theil index of spatial density inequality \citep{theil1967information}.\footnote{In the mathematical literature, social utility \eqref{def:socialUtility} is commonly referred to as ``free energy'', the Theil index of spatial inequality as ``entropy'', and aggregate/average systematic utility as ``interaction energy'' \citealp{carrillo2019aggregation}.} The cost of spatial density inequality is weighted by the ratio $\sigma^2/c_M$, reflecting that higher $\sigma$ implies, \textit{ceteris paribus}, a preference of workers to be more dispersed in space, while higher movement costs $c_M$ lead workers to prefer a more concentrated distribution. 

Theorem \ref{teo:SU} characterises the dynamics of $SU$ in Eq. \eqref{def:socialUtility}.\footnote{The \textit{Wasserstein distance} with index 2 on $\Omega$ is defined on the set of spatial probability distributions on $\Omega$, i.e. non-negative functions $f:\Omega \to \RR^{+}$ that integrate to one, and is defined as
\begin{equation*}
W_2(f, g) = \left( \inf_{T \, : \, T_\# f = g} \int_\Omega \norm{x-T(x)}^{2}f(x) \, dx \right)^{1/2},
\end{equation*}
where the infimum is taken over all functions, called \emph{transport maps} $T : \Omega \to \Omega$, that transform the distribution $f(x)$ into $g(x)$. The definition of \textit{Wasserstein Space} and Wasserstein distance is more general \citep{ambrosio2005gradient}. Distance $W_2(f, g)$ is closely related to the quadratic movement costs, see Online Appendix \ref{app:micro-foundationAgentsMovements}.}
\begin{teo}[Dynamics of Social Utility]\label{teo:SU}
Suppose $\beta=1$, $G(x) = \overline{G}$ and $A_0=0$. 
Then Eq. \eqref{eq:dynamicsLabourDistribution} can be written as:
\begin{equation}
\partial_t l(t) = \left(\dfrac{1}{c_M} \right) \nabla_l SU(l(t)),
\label{eq:gradientFlow}
\end{equation}
where $\nabla_l SU(l(t))$ denotes the gradient with respect to the Wasserstein 2 metric, and therefore 
\begin{equation}
\frac{d}{dt} SU(l(t)) \geq 0 \; \forall t.
\end{equation}
\end{teo}
\begin{proof}
See \cite{carrillo2003kinetic,carrillo2006contractions}
\end{proof}
Eq. \eqref{eq:gradientFlow} in Theorem \ref{teo:SU} states that the dynamics of our economy is ``locally'' maximising $SU$ over time, i.e. the sequence of local optimisations of workers guarantees a kind of (local) social efficiency in the aggregate dynamics. At the same time, it is not guaranteed that the utility of each worker does not decrease over time.
To understand where $SU$ converges over time, the dynamics of the spatial distribution can be expressed as a sequence of maximisation problems with a small time step of length $\tau$, i.e:
\begin{equation}
l(k+\tau) = \arg \max_{l} \left\{SU(l) - \frac{c_{M}}{2 \tau} W_2^2(l, l(k)) \right\},
\label{eq:JKO}
\end{equation}
which is a time discretisation of the gradient equation according to the Jordan–Kinderlehrer–Otto (JKO) scheme in \citet{jordan1998variational}.
Therefore, any SED to which an economy converges represents a local maximum for $SU$; unfortunately, we do not have a result that the SED is unique, and therefore convergence to a global maximum of $SU$ is not guaranteed. In any case, assuming that the economy has reached the global maximum of $SU$, the corresponding SED would also be the spatial distribution preferred by a social planner who could allocate workers at no cost. This finding contrasts with the inefficiency of the competitive spatial equilibrium established by \cite{fajgelbaum2020optimal}, but the latter result relies crucially on the heterogeneity of exogenous local productivity and amenities. In this setting, we conjecture that the same inefficiency would be present in our framework.
Theorem \ref{teo:SU} also suggests another perspective on metastability based on the shape of $SU$: if $SU$ is almost ``flat'' in a given range of the spatial distribution, its gradient would be small and therefore the time change in the spatial distribution will be very ``slow'' (Eq. \eqref{eq:gradientFlow}).

Finally, we conjecture that Theorem \ref{teo:SU} also holds under the more general conditions of Theorem \ref{teo:limitInfiniteAgentsII}. Setting the parameters' values as reported in Table \ref{tab:parameterValues} below (which represent the baseline setting of our numerical explorations of the model in Section \ref{sec:numericalExplorations}), Figure \ref{fig:socialUtility} confirms that $SU$ is always increasing over time also for an economy with $\beta \in \left(0,1\right)$ and endogenous amenities. We note also that aggregate income is not a good proxy for social welfare, in particular when high diffusion (and/or low movement costs): while for $\sigma=0.05$ (Figure \ref{fig:socialUtility_I}) $SU$ and aggregate income have the same increasing trend over time, for $\sigma=0.2$ (Figure \ref{fig:socialUtility_II}) aggregate income is decreasing.
\begin{figure}[!htbp].
\begin{subfigure}[b]{0.45\textwidth} 
\centering
\includegraphics[width=\textwidth]{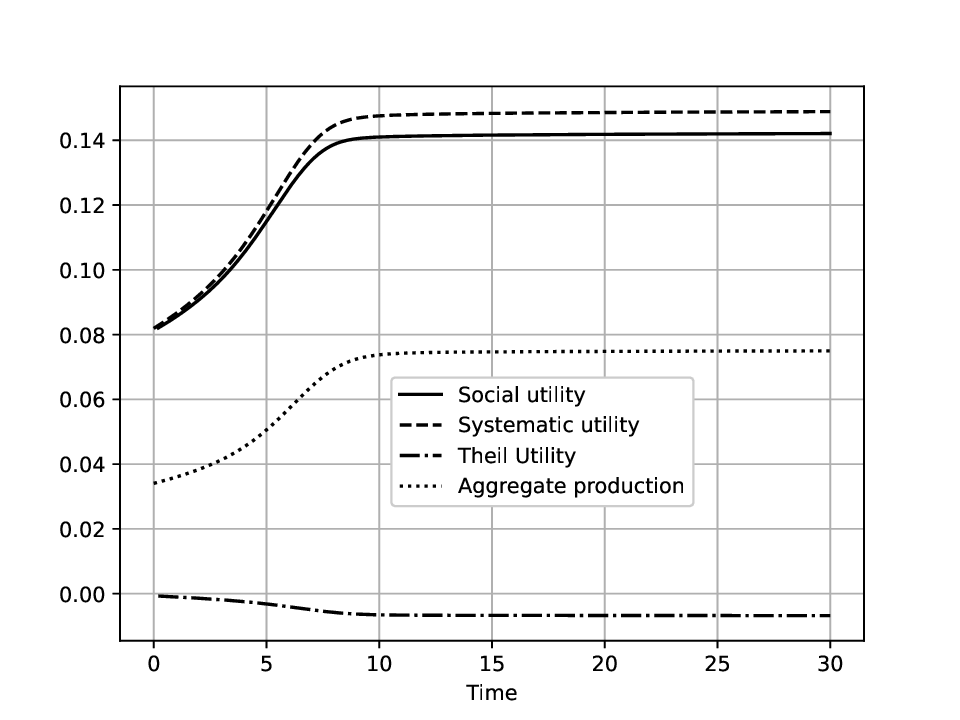}
\caption{Heterogeneity in workers' spatial preferences $\sigma=0.05$.}
\label{fig:socialUtility_I}
\end{subfigure}\hfill
\begin{subfigure}[b]{0.45\textwidth}
\centering
\includegraphics[width=\textwidth]{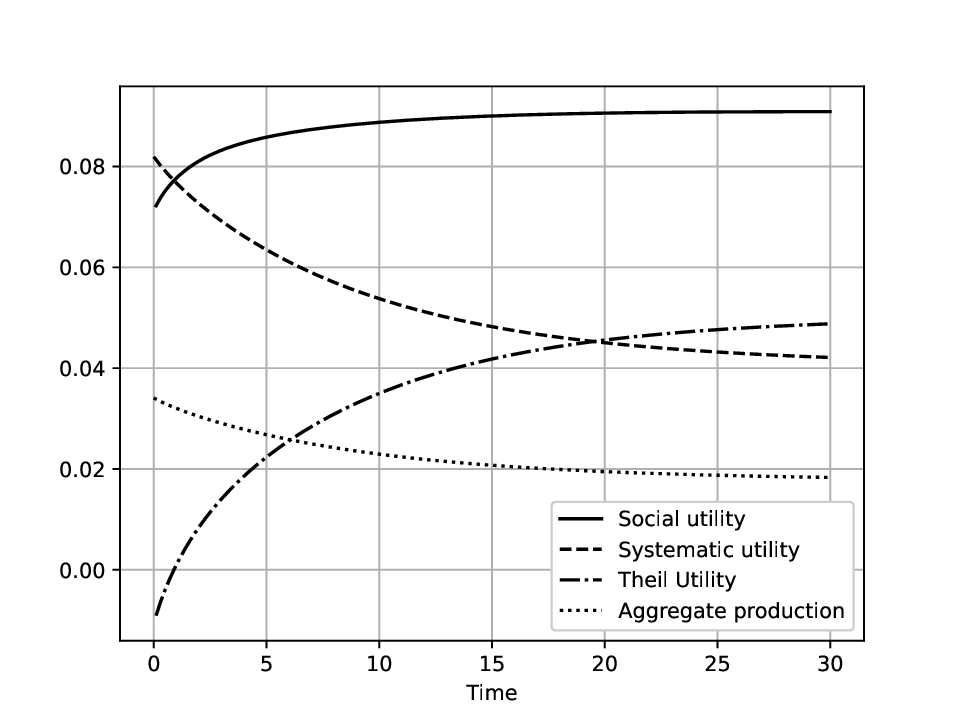}
\caption{Heterogeneity in workers' spatial preferences $\sigma=0.2$.}
\label{fig:socialUtility_II}
\end{subfigure}
\caption{The dynamics of social utility $SU$, aggregate systematic utility, Theil index and aggregate production for the setting reported in Table \ref{tab:parameterValues}, excluding heterogeneity in workers' spatial preferences $\sigma=0.05$ (left) and $\sigma=0.2$ (right) in period $[0,30]$.}
\label{fig:socialUtility}
\end{figure}

\section{Numerical explorations\label{sec:numericalExplorations}}
In this section we use numerical simulations to investigate the properties of our economy from Theorem \ref{teo:limitInfiniteAgentsII}, using as a baseline the setting reported in Table \ref{tab:parameterValues}. 
In particular, Section \ref{sec:oneMegaCity} analyses the spatial dynamics leading to a megacity; Section \ref{sec:viaEmilia} examines the emergence of cities of different size and shape, as well as the importance of history (initial conditions and path dependence); and finally, Section \ref{sec:metastability} examines the phenomenon of persistence and metastability.

\begin{table}[!htbp]
\begin{footnotesize}
\begin{tabular}{ccl}
\hline \hline
\\
\hline
Parameter & Value & Description \\	
\hline	
$c_M$ & 100 & The cost of movement\\
$\sigma$ & 0.05 & The standard deviation of the random component of worker utility\\
& & (the heterogeneity in workers' spatial preferences) \\
$\beta$ & 0.6 & The labour coefficient of the local production function\\
$h$ & 0.4 & The extent of spatial spillovers in local technology\\
$A_0$ & 2.68 & The scale parameter for endogenous amenities\\
$\phi$ & 0.5 & The elasticity of endogenous amenities to local income\\
$\mu_{A}$ & 0.45 & The parameter measuring the congestion of local amenities\\
\\
\hline
Distributions & Value & Description \\	
\hline
$W_1^P\left(z\right)$ & $\left(1- \norm{z}\right)\mathds{1}_{\norm{z} \leq 1}$ & The kernel function for the local technology (i.e. the tent kernel function)\\
$G(x)$ & 1 & The production potentials (i.e. uniform over space)\\
$A_{ES}(x,t)$ & 1 & The exogenous amenities (i.e. uniform over space)\\
\\
\hline \hline
\end{tabular}
\caption{The baseline setting used in the numerical simulations of Eq. \eqref{eq:dynamicsLabourDistributionII}.}
\label{tab:parameterValues}
\end{footnotesize}
\end{table}	

\subsection{One megacity \label{sec:oneMegaCity}}

Figure \ref{fig:numericalInvestigationh=0.4} shows the dynamics of the spatial distribution of workers, starting from an initial uniform distribution over the space $\Omega=[0,4]\times[0,4]$. In period 1, four agglomerations (cities) emerge; this pattern becomes more pronounced in period 5, with four distinct cities; however, in period 10, they tend to merge and, finally, in period 20, they merge into one large city, which also remains in the next periods (not shown in the figure). In a large part of the plane in period 2, workers are not present, including regions that experimented an increase in the local workforce in the first period, providing an example of \textit{non-linear out-of-equilibrium} dynamics.
\begin{figure}[!htbp]
\centering
\begin{subfigure}[b]{0.19\textwidth}
\centering
\includegraphics[width=\textwidth]{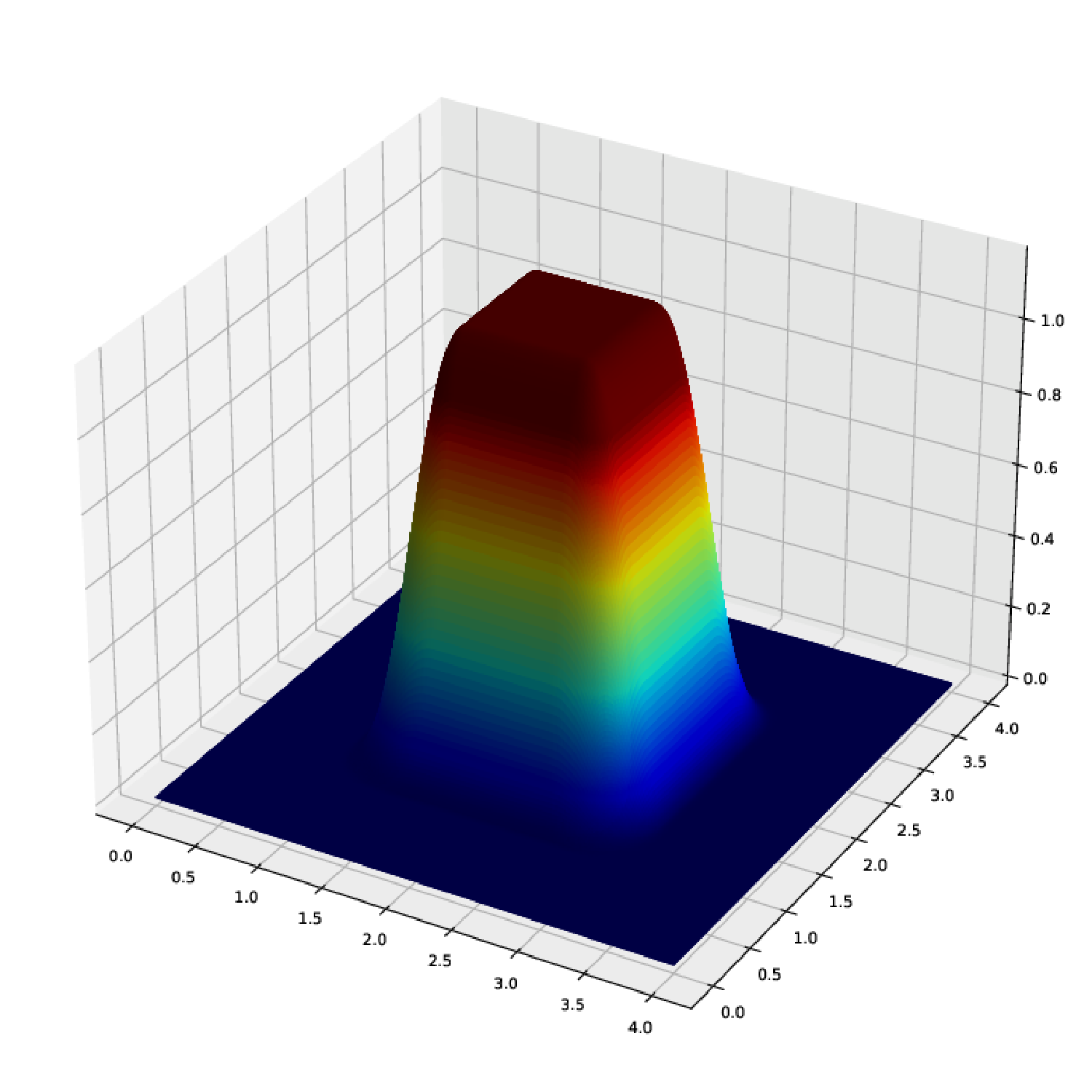}
\caption{$t=0$}
\end{subfigure}
\begin{subfigure}[b]{0.19\textwidth}
\centering
\includegraphics[width=\textwidth]{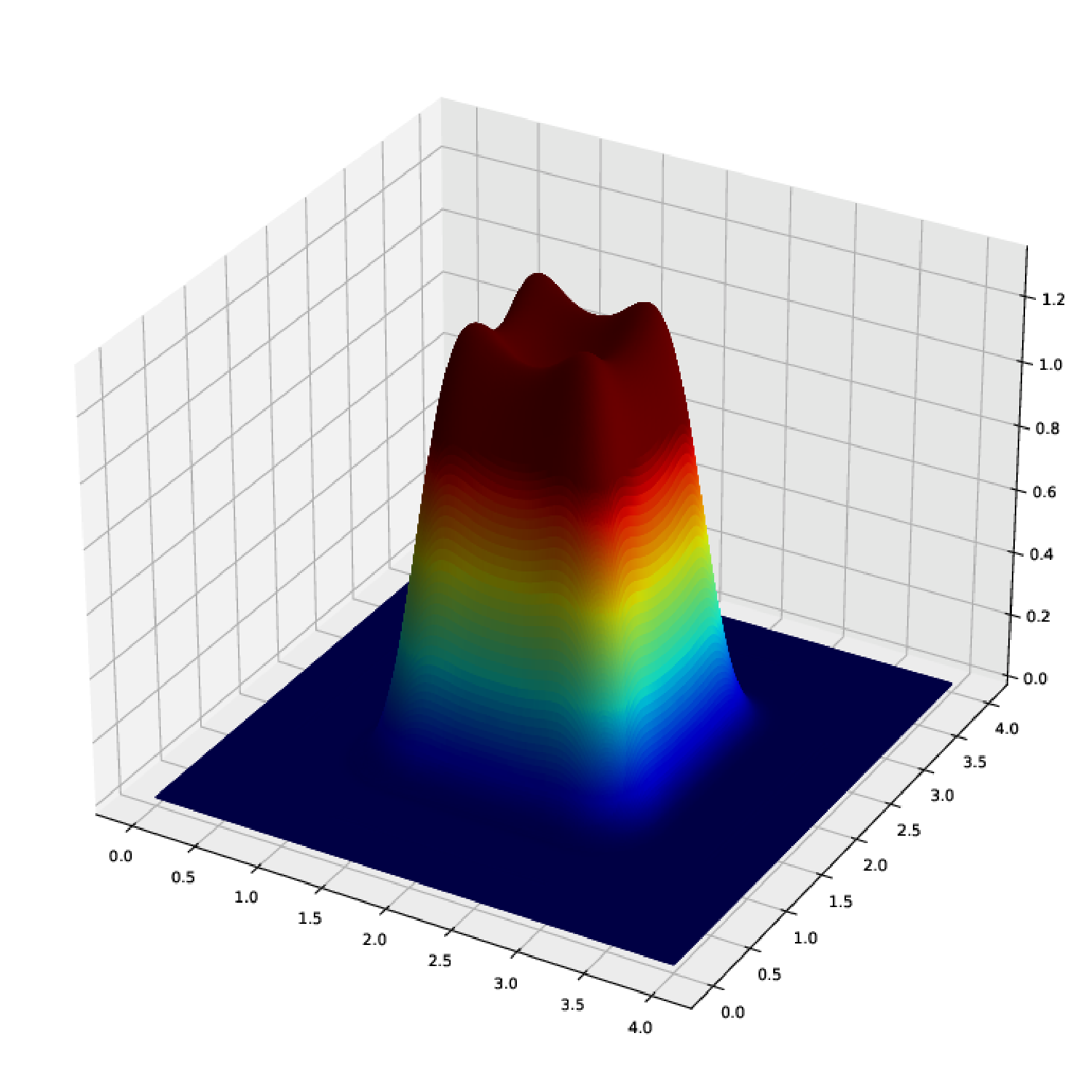}
\caption{$t=1$}
\end{subfigure}
\begin{subfigure}[b]{0.19\textwidth}
\centering
\includegraphics[width=\textwidth]{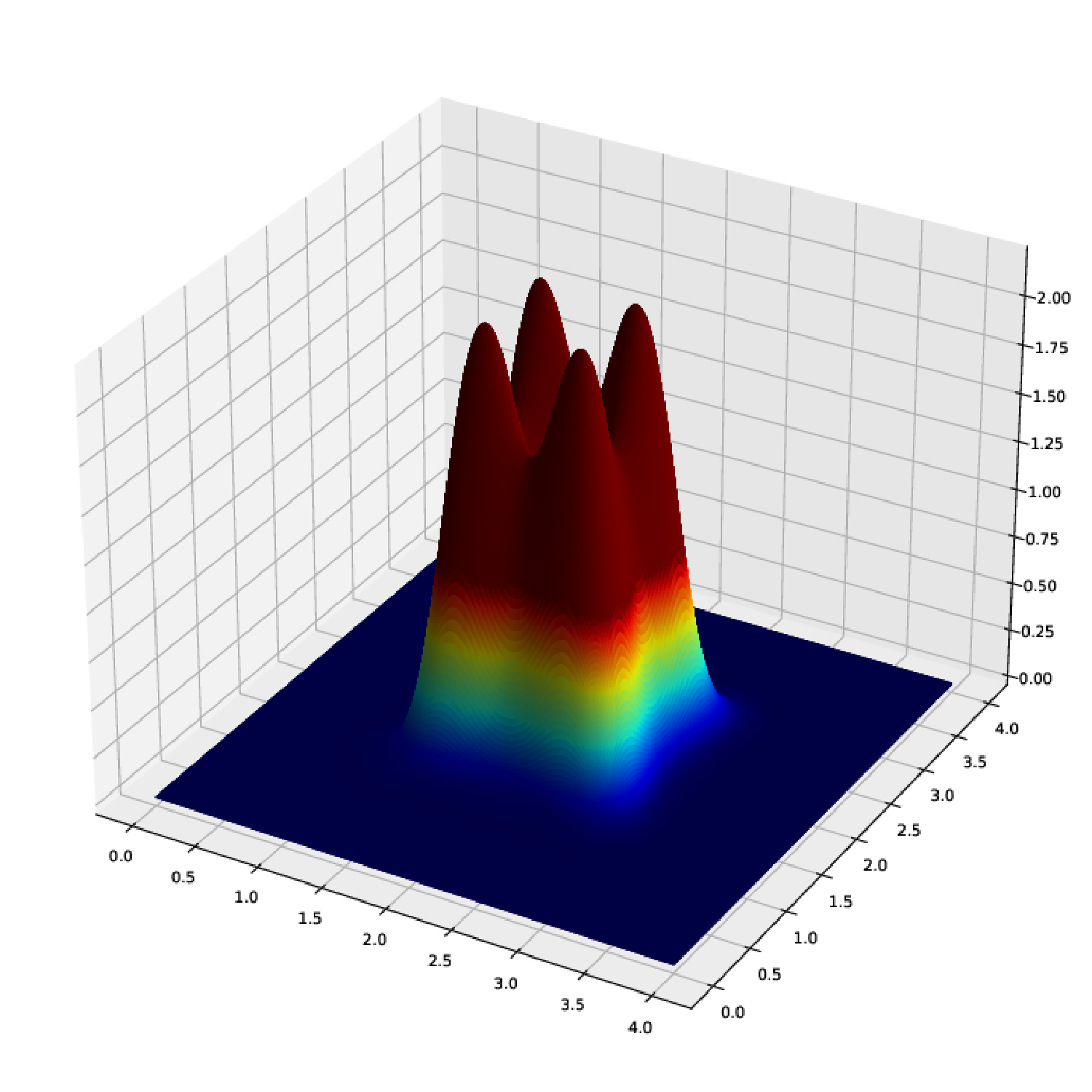}
\caption{$t=5$}
\end{subfigure}
\begin{subfigure}[b]{0.19\textwidth}
\centering
\includegraphics[width=\textwidth]{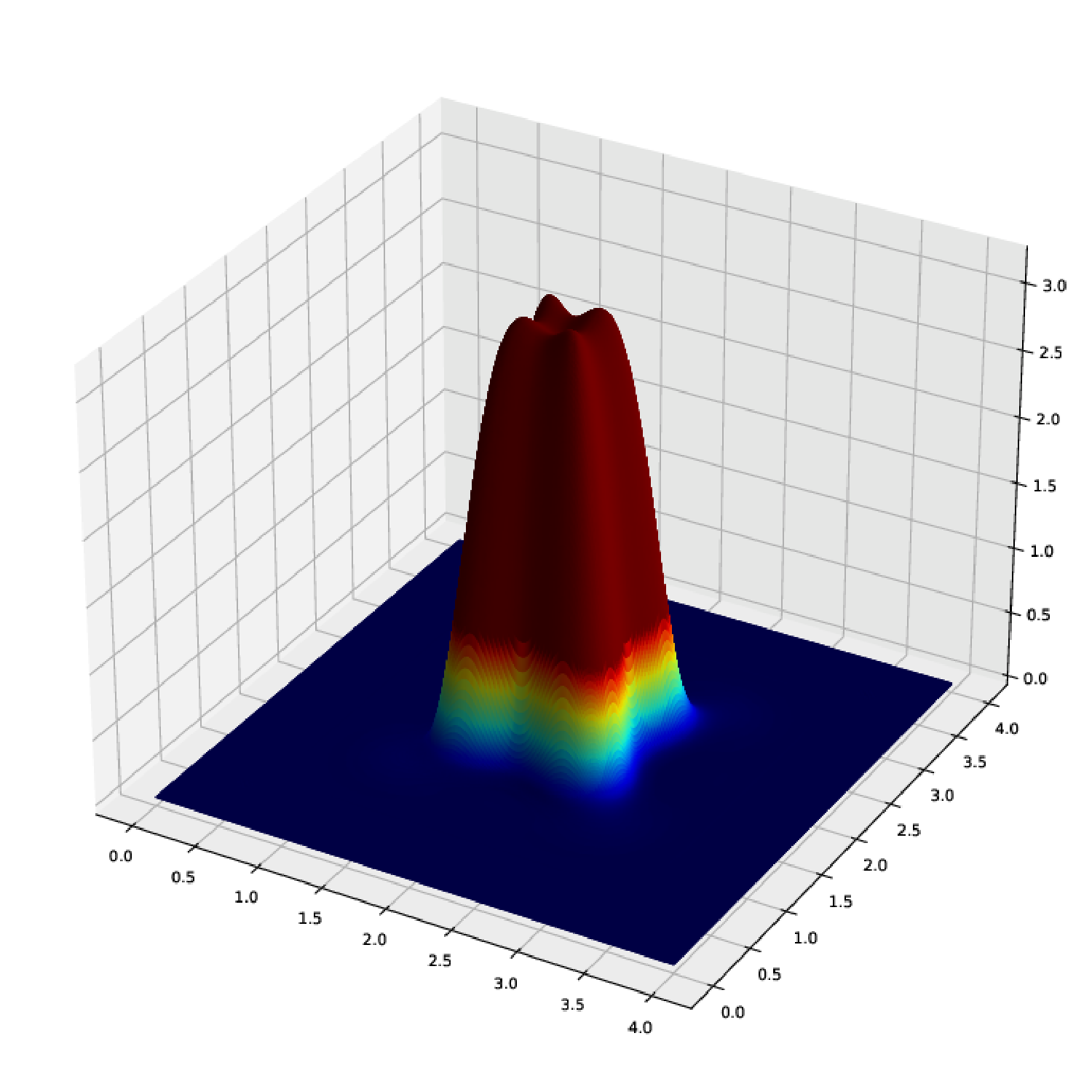}
\caption{$t=10$}
\end{subfigure}
\begin{subfigure}[b]{0.19\textwidth}
\centering
\includegraphics[width=\textwidth]{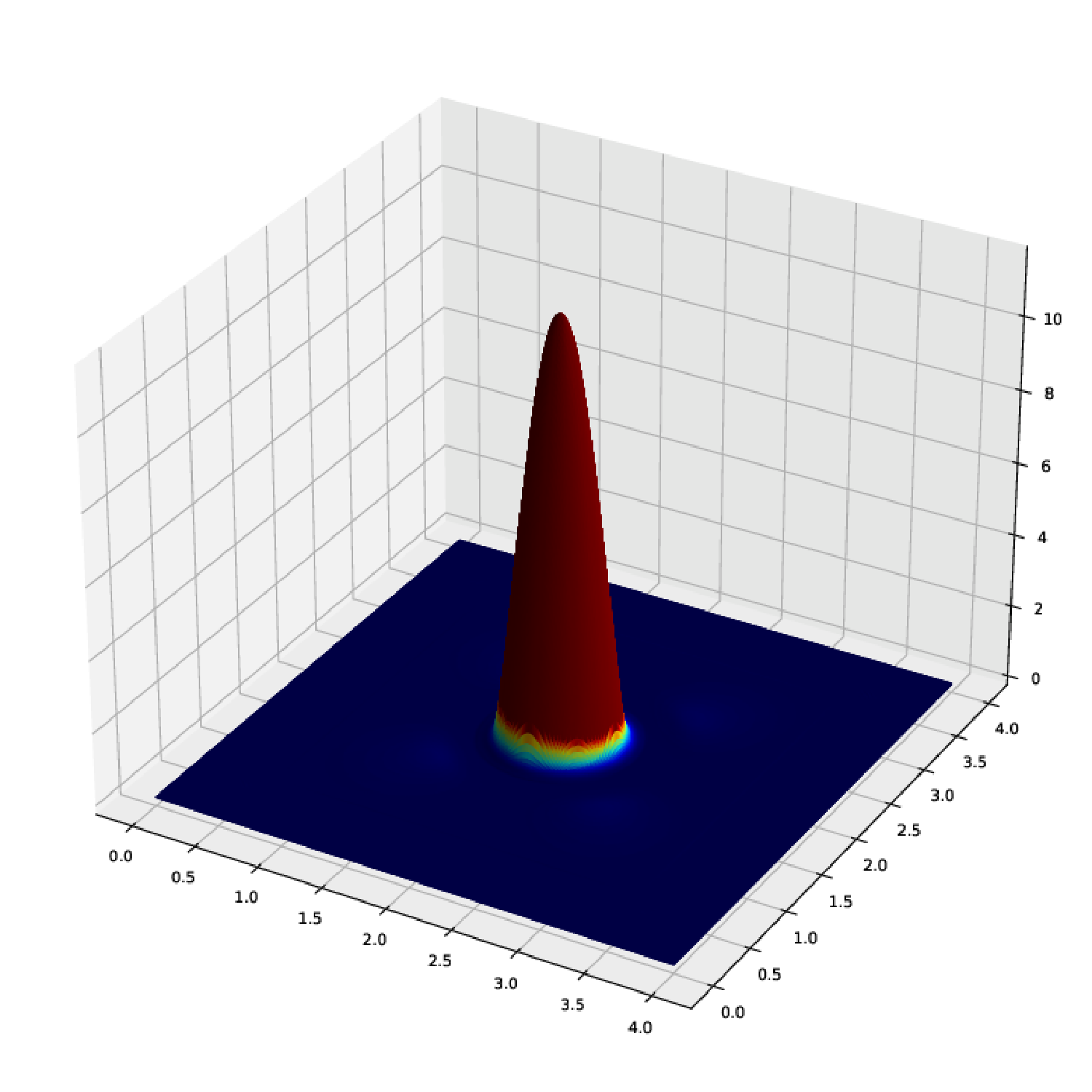}
\caption{$t=20$}
\end{subfigure}
\caption{The distribution of $l(x,t)$ (workers) over the space $\Omega=[0,4]\times[0,4]$ for the setting given in Table \ref{tab:parameterValues} at different times $t$ from 0 to 20.}
\label{fig:numericalInvestigationh=0.4}
\end{figure}

Figure \ref{fig:megacityAllVariables} shows the spatial distribution of the variables in Theorem \ref{teo:limitInfiniteAgentsII} at period 20. In this period, the economy seems to have reached a SED, as the systematic utility of workers is almost flat over space (at least for the populated locations).\footnote{The random part of utility, positively related to $\sigma$, makes the spatial distribution of systematic utility not perfectly flat in the SED.} The spatial distributions of workers, total income and technological progress are strongly correlated, while wages and endogenous amenities are less correlated as a joint effect of congestion and agglomeration. The spatial gradient of wages with respect to the city centre appears to be strongly non-linear, with a centre of medium wages, an intermediate ring of low wages and a peripheral ring of high wages (as in Figure \ref{fig:distributionOverSpace_II}). An increase in diffusion can radically alter the overall spatial characteristics of the SED. The Online Appendix \ref{app:megacityHighDiffusion} reports the case with $\sigma=0.2$; in the SED, the single-peaked distribution still persists, but with higher spatial dispersion, while wages instead reflect the density of workers (as in Figure \ref{fig:distributionOverSpace_I}).\footnote{See e.g. \citet{timothy2001intra} for some evidence on intra-urban wage variation.}
\begin{figure}[!htbp]
\centering
\includegraphics[width=0.7\textwidth]{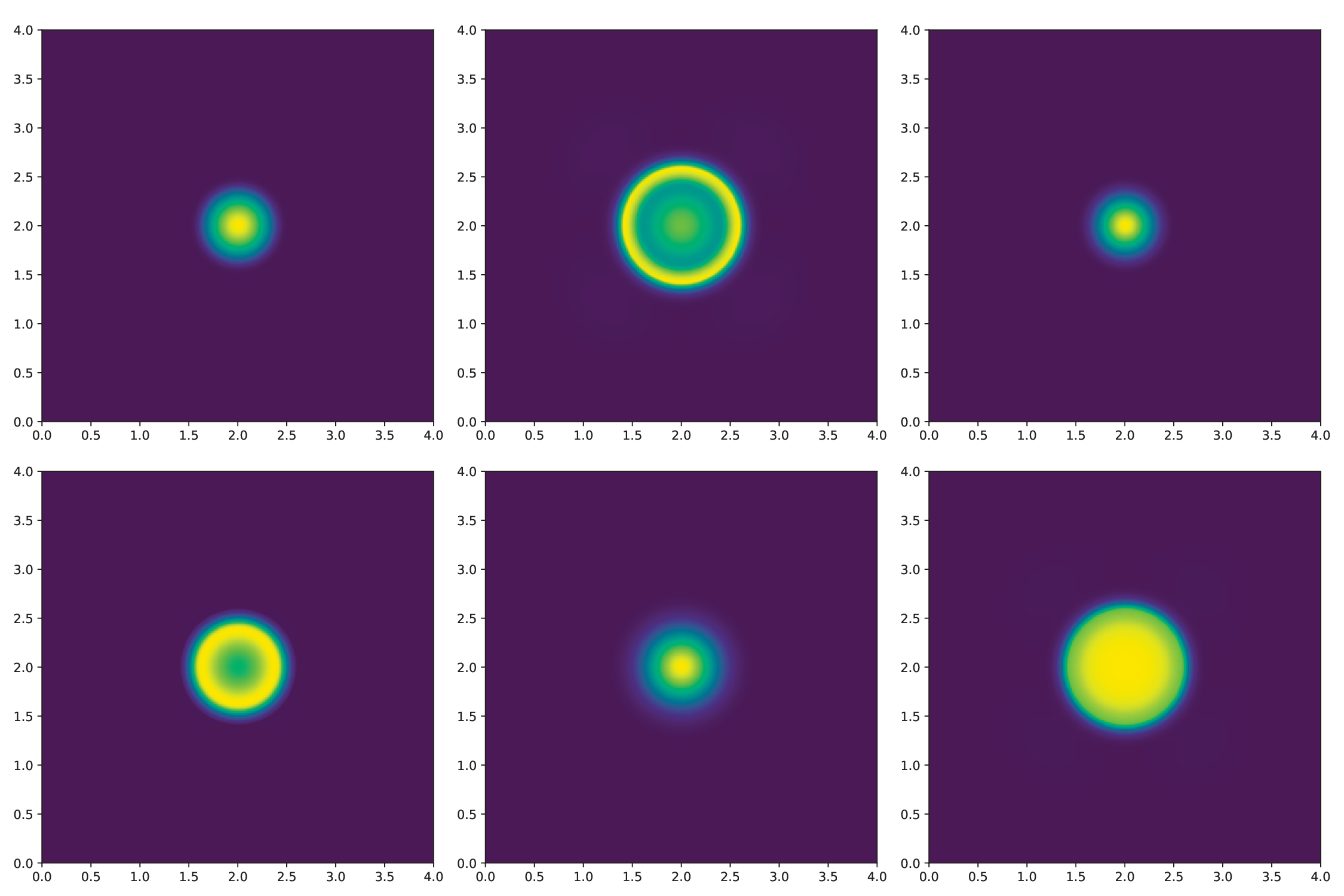}
\caption{The distribution of $l(x,t)$ (workers, top left), $w(x,t)$ (wages, top centre), $y(x,t)$ (income, top right), $A_{EN}(x,t)$ (endogenous amenities, bottom left), $A_l(x, t)$ (technological progress, bottom centre), $v(x,t)$ (individual systematic utility, bottom right) over the space $\Omega=[0,4]\times[0,4]$ at $t=20$ for the setting given in Table \ref{tab:parameterValues}. }
\label{fig:megacityAllVariables}
\end{figure}

\subsection{Stable city size distribution} \label{sec:viaEmilia}

Figure \ref{fig:viaEmilia} examines the effects of workers initially being located in a particular strip, specifically in the area $[1,2]\times[1,7]$. In the first period, workers tend to agglomerate at the two extremes of the strip; however, after $t=100$, other spatial agglomerations appear and finally a stable spatial distribution with six different cities of different sizes emerges. This is an extreme case of dependence on initial conditions and shows how a complex city size distribution can be generated even in the absence of any spatial heterogeneity. \cite{marsili1998interacting,gabaix1999zipf} reach a similar result but in a different theoretical framework, in particular providing a theoretical explanation of Zipf's law. A similar phenomenon can be found in \cite{allen2020persistence}.
\begin{figure}[!htbp]
\centering
\begin{subfigure}[b]{0.24\textwidth}
\centering
\includegraphics[width=\textwidth]{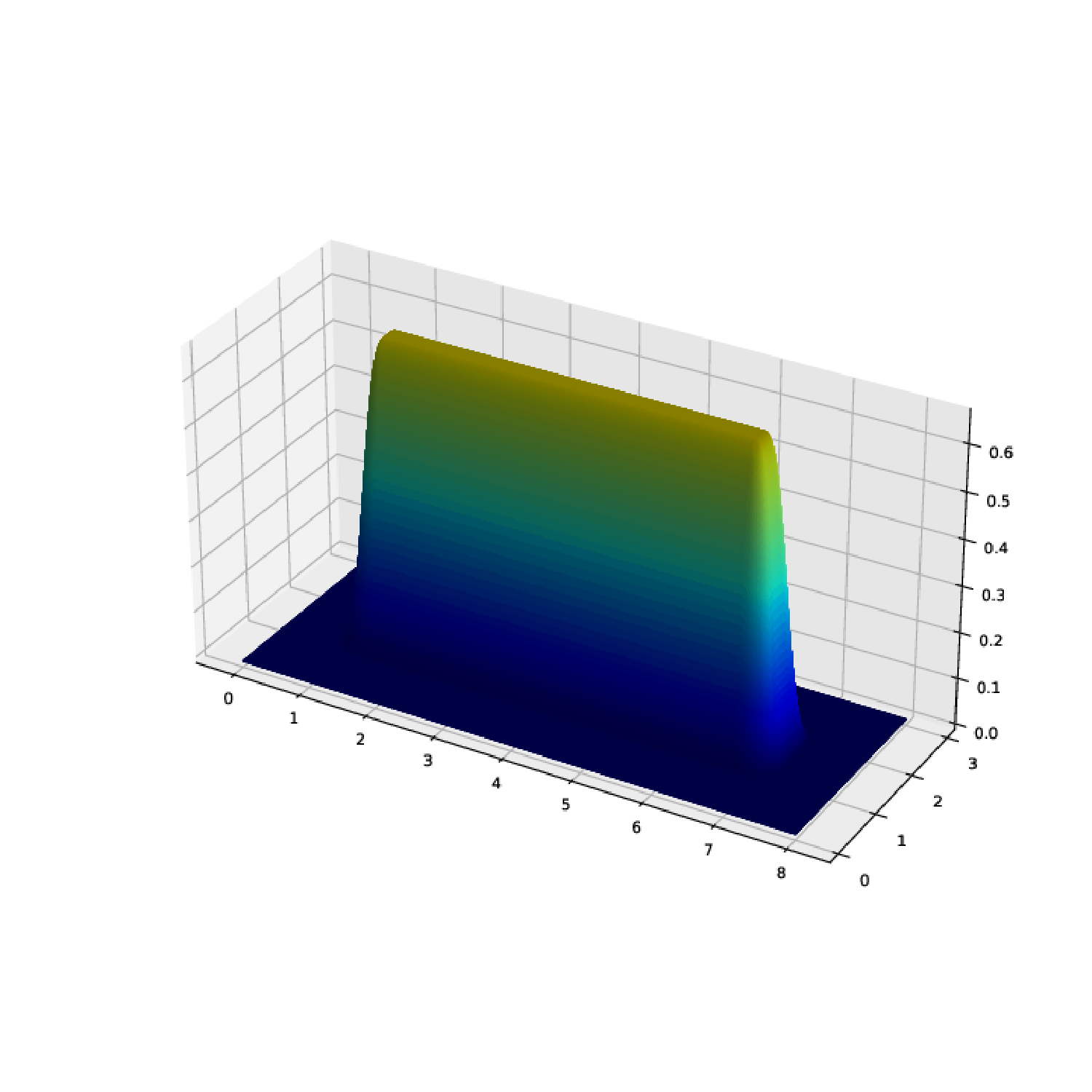}
\caption{$t=0$}
\end{subfigure}
\begin{subfigure}[b]{0.24\textwidth}
\centering
\includegraphics[width=\textwidth]{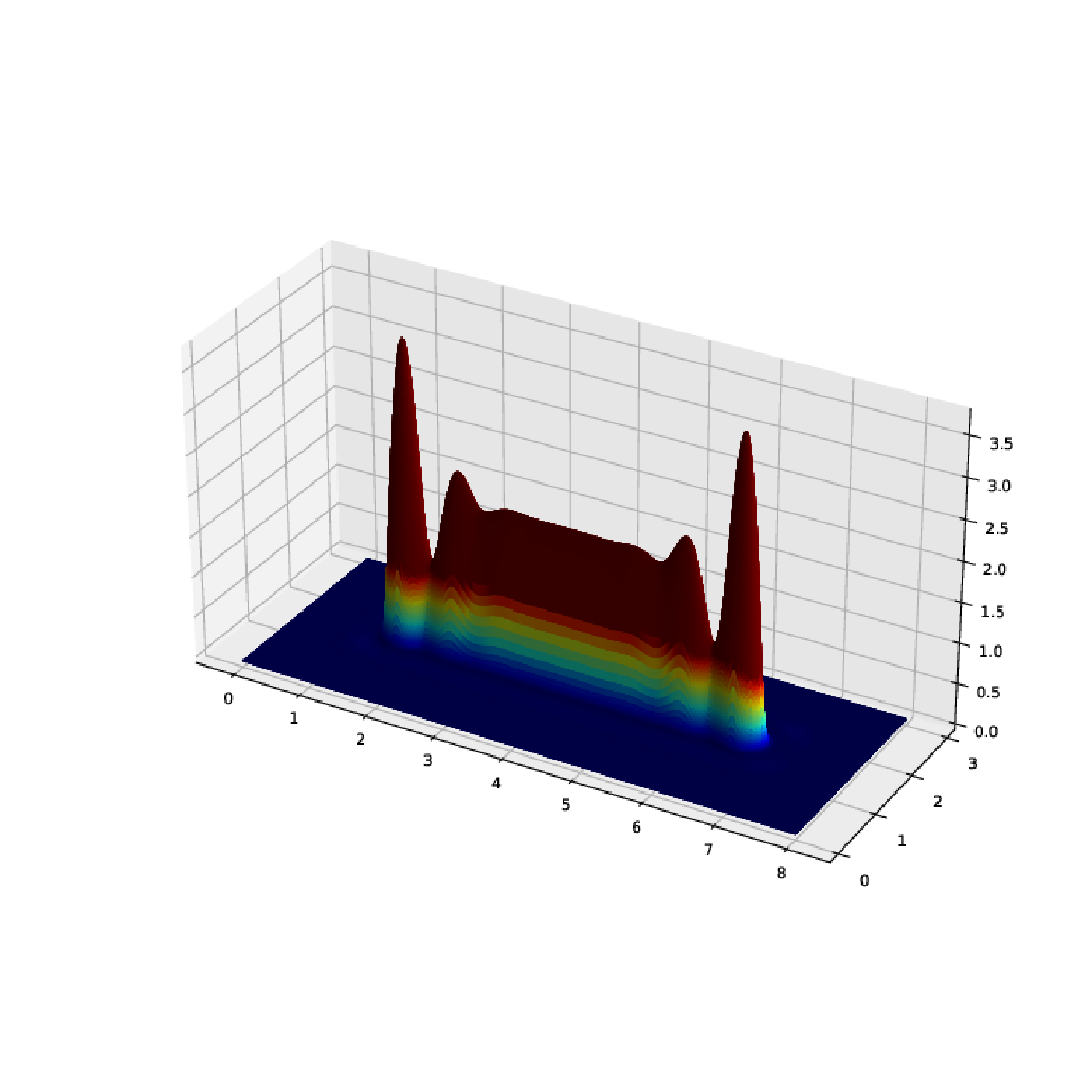}
\caption{$t=10$}
\end{subfigure}
\begin{subfigure}[b]{0.24\textwidth}
\centering
\includegraphics[width=\textwidth]{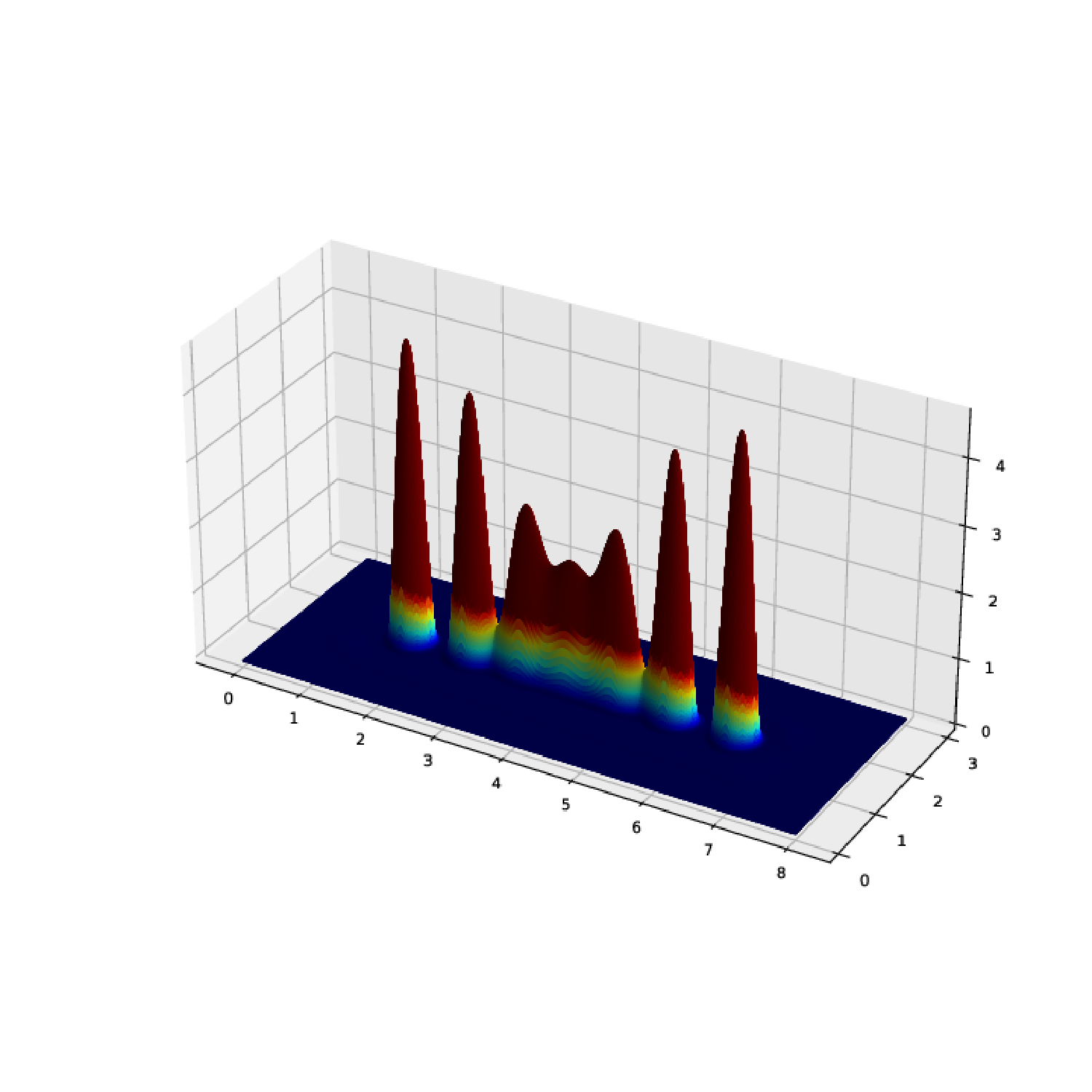}
\caption{$t=20$}
\end{subfigure}
\begin{subfigure}[b]{0.24\textwidth}
\centering
\includegraphics[width=\textwidth]{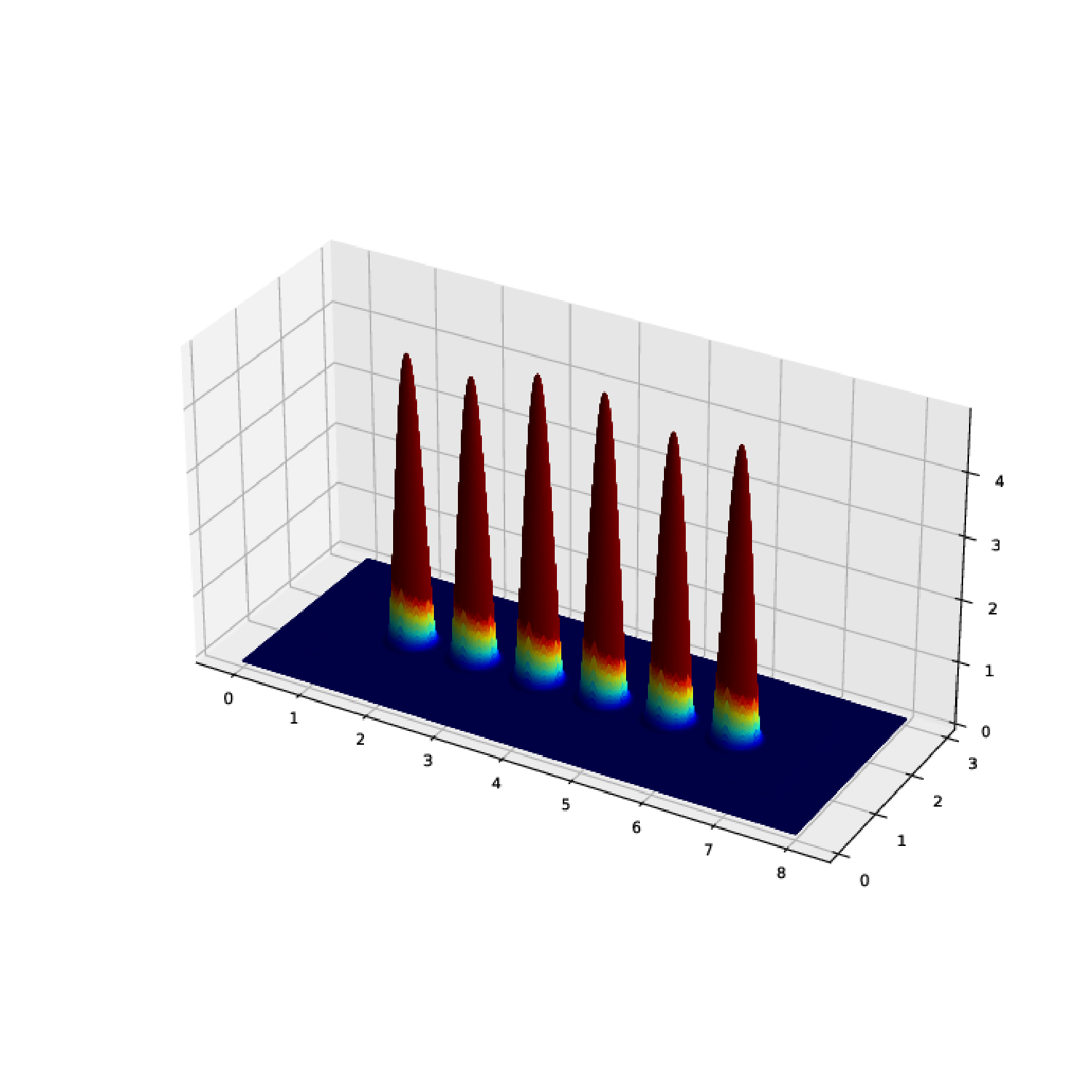}
\caption{$t=100$}
\end{subfigure}
\caption{The distribution of $l(x,t)$ (workers) over space $\Omega=[0,3]\times[0,8]$ at $t$ equal to 0, 10, 20 and 100 for the setting reported in Table \ref{tab:parameterValues}, but workers being initially located in the area $[1,2]\times[1,7]$.}
\label{fig:viaEmilia}
\end{figure}

\subsection{Persistence in the spatial distribution}\label{sec:metastability}

Figure \ref{fig:numericalInvestigationh=0.3} shows the dynamics of the worker distribution when the extent of spatial spillovers in technology $h$ is reduced from 0.4 to 0.3.
While the SED reached in period 185 is always a megacity, there is a wide interval from $t=10$ to $t=175$ where the spatial pattern appears stable with four different medium-sized cities. However, in just two periods, from 180 to 182, this pattern changed radically with the emergence of the SED megacity. This dynamic well illustrates the phenomenon of metastability \citep{Evers2016} and raises serious doubts about the possibility of using standard econometric techniques to study the dynamics of city size distribution. Metastability is directly related to the phenomenon of persistence of urban agglomerations, as discussed in \cite{allen2020persistence}.
\begin{figure}[!htbp]
\centering
\begin{subfigure}[b]{0.20\textwidth}
\centering
\includegraphics[width=\textwidth]{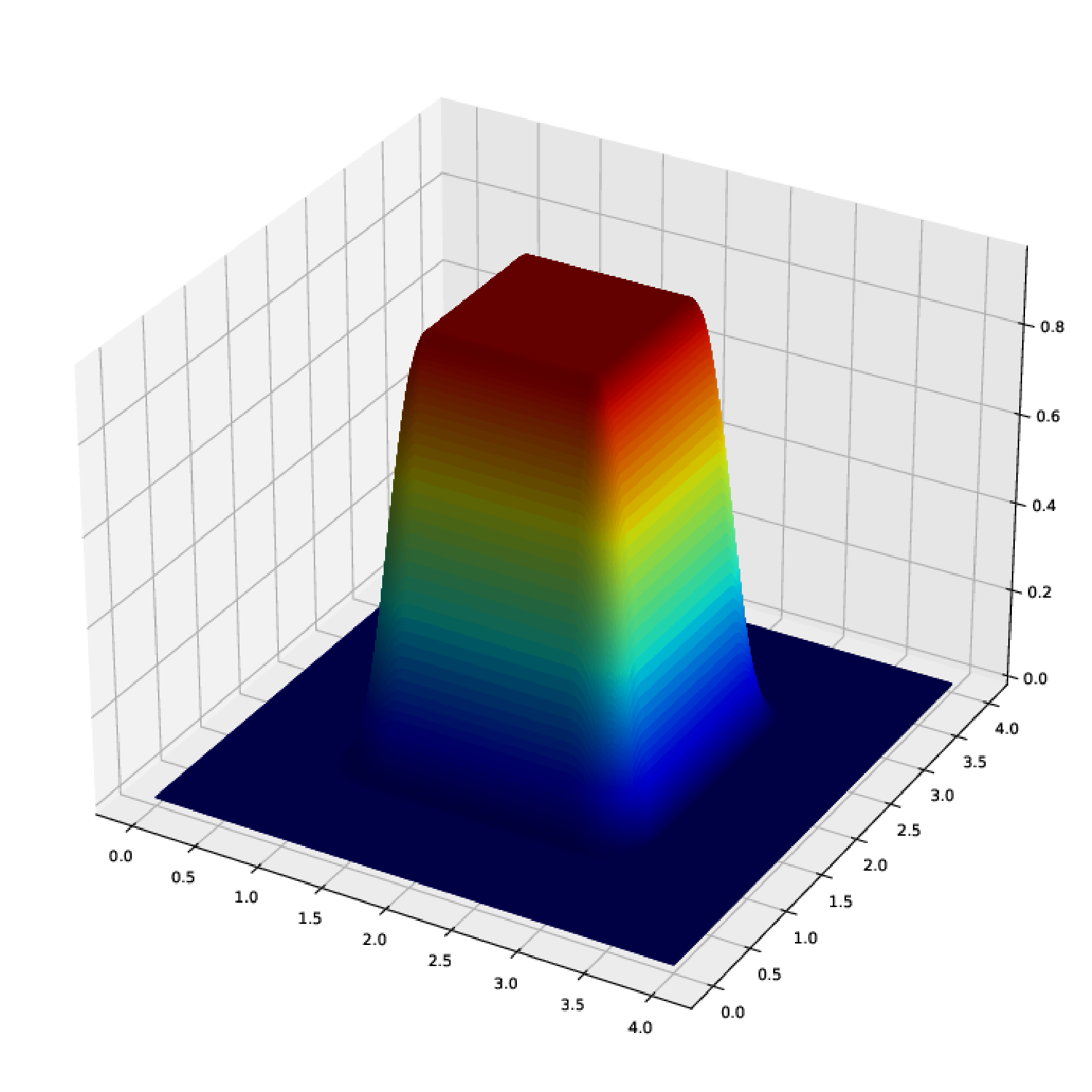}
\caption{$t=0$}
\end{subfigure}
\begin{subfigure}[b]{0.20\textwidth}
\centering
\includegraphics[width=\textwidth]{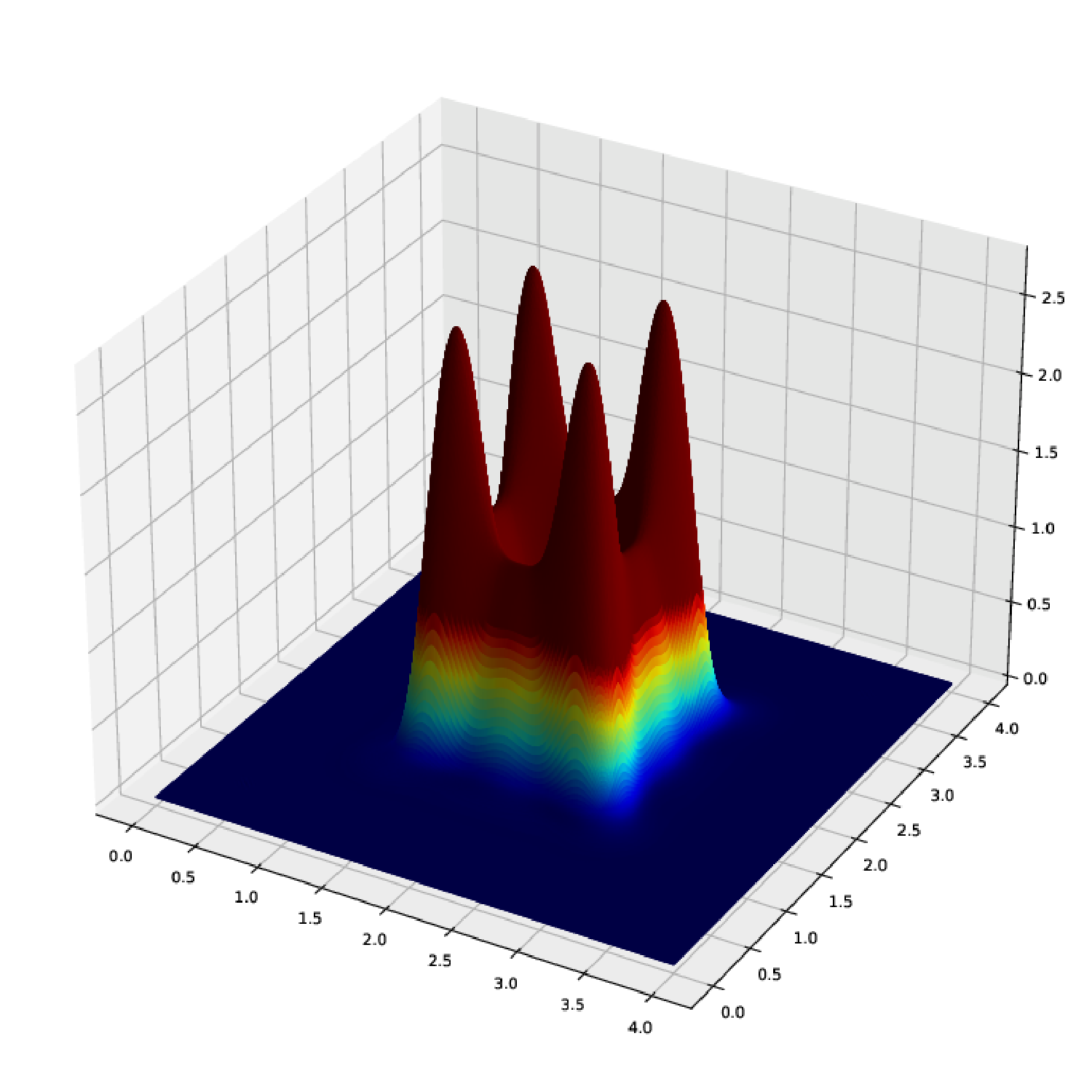}
\caption{$t=5$}
\end{subfigure}
\begin{subfigure}[b]{0.20\textwidth}
\centering
\includegraphics[width=\textwidth]{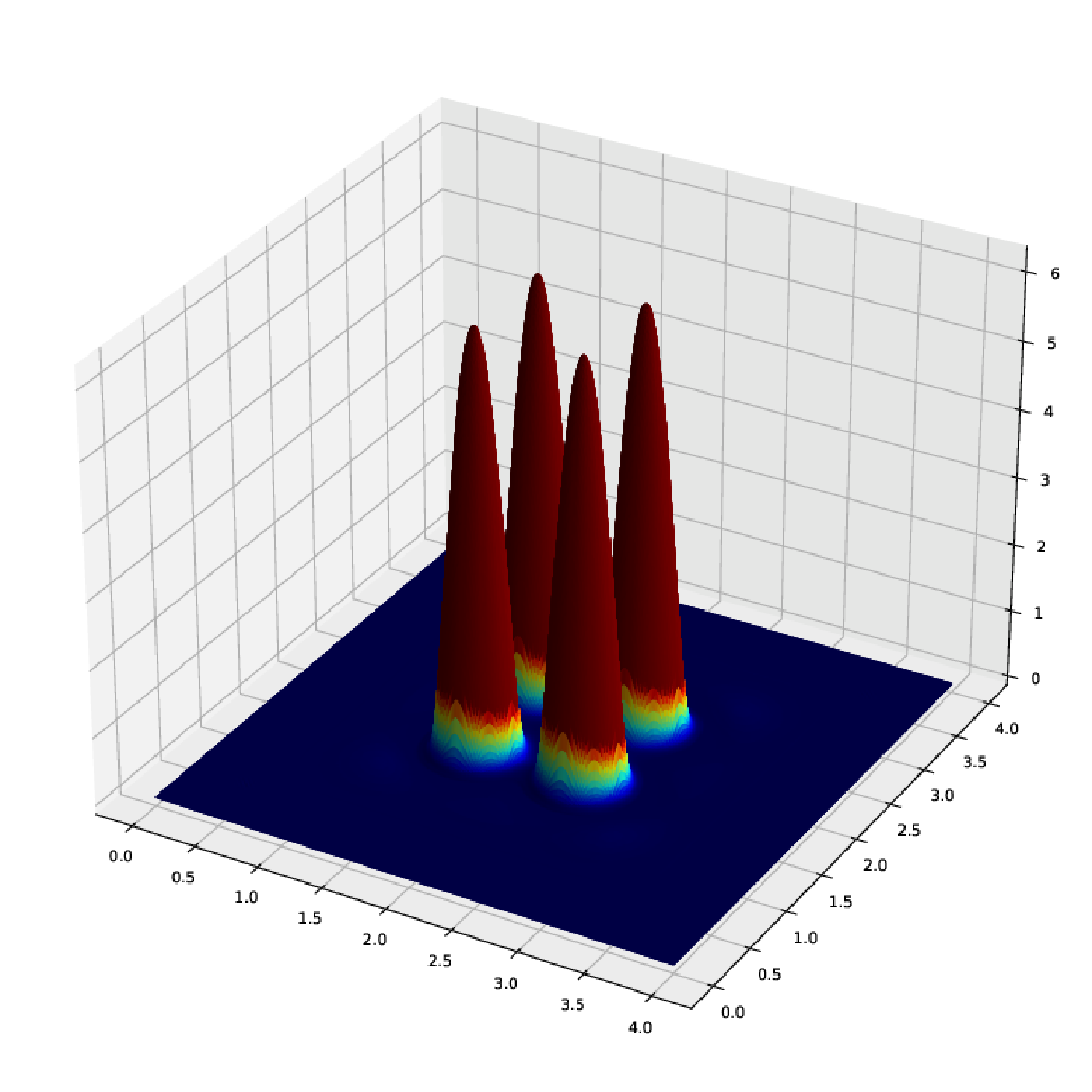}
\caption{$t=10$}
\end{subfigure}
\begin{subfigure}[b]{0.20\textwidth}
\centering
\includegraphics[width=\textwidth]{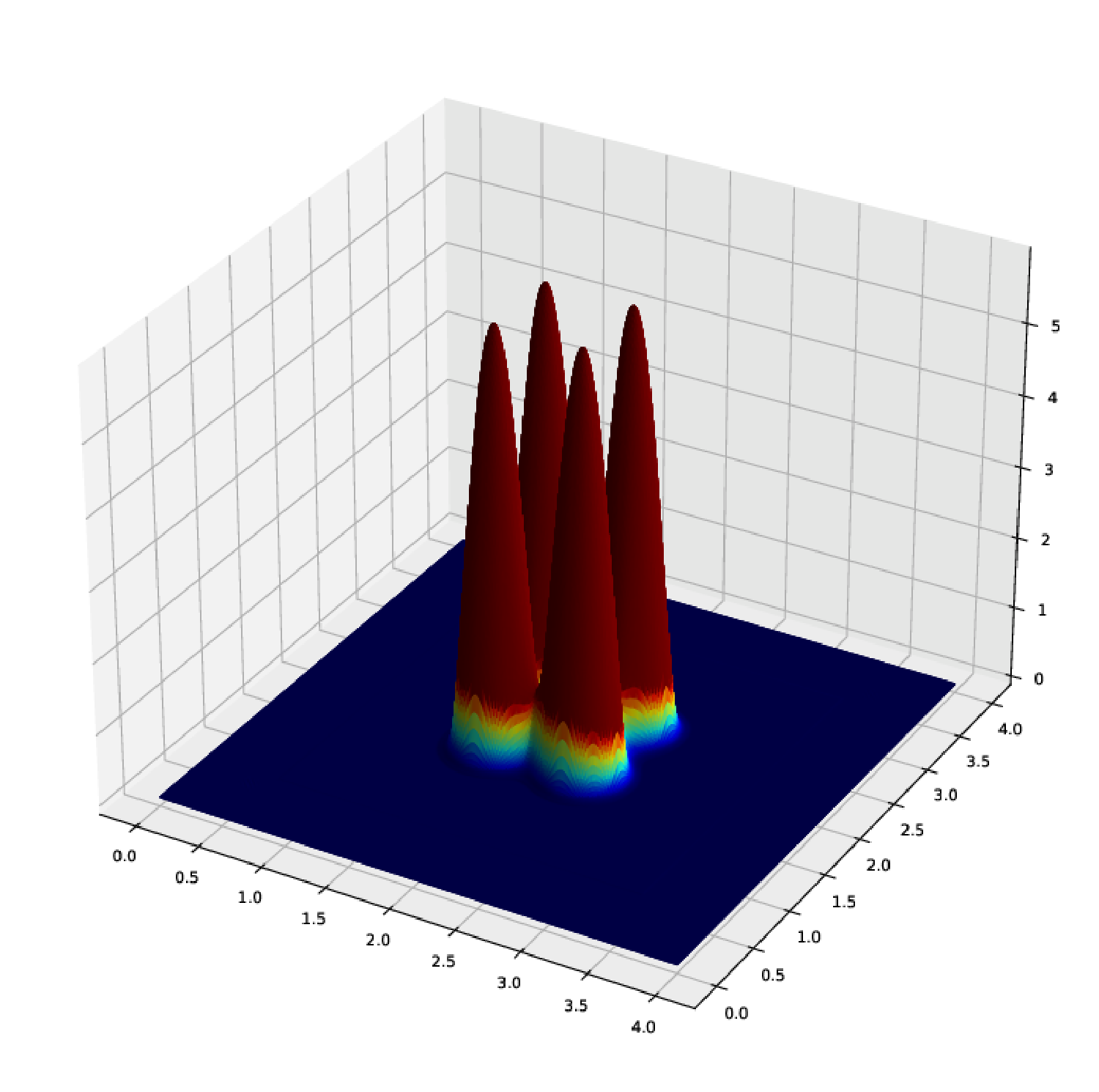}
\caption{$t=175$}
\end{subfigure}
\begin{subfigure}[b]{0.20\textwidth}
\centering
\includegraphics[width=\textwidth]{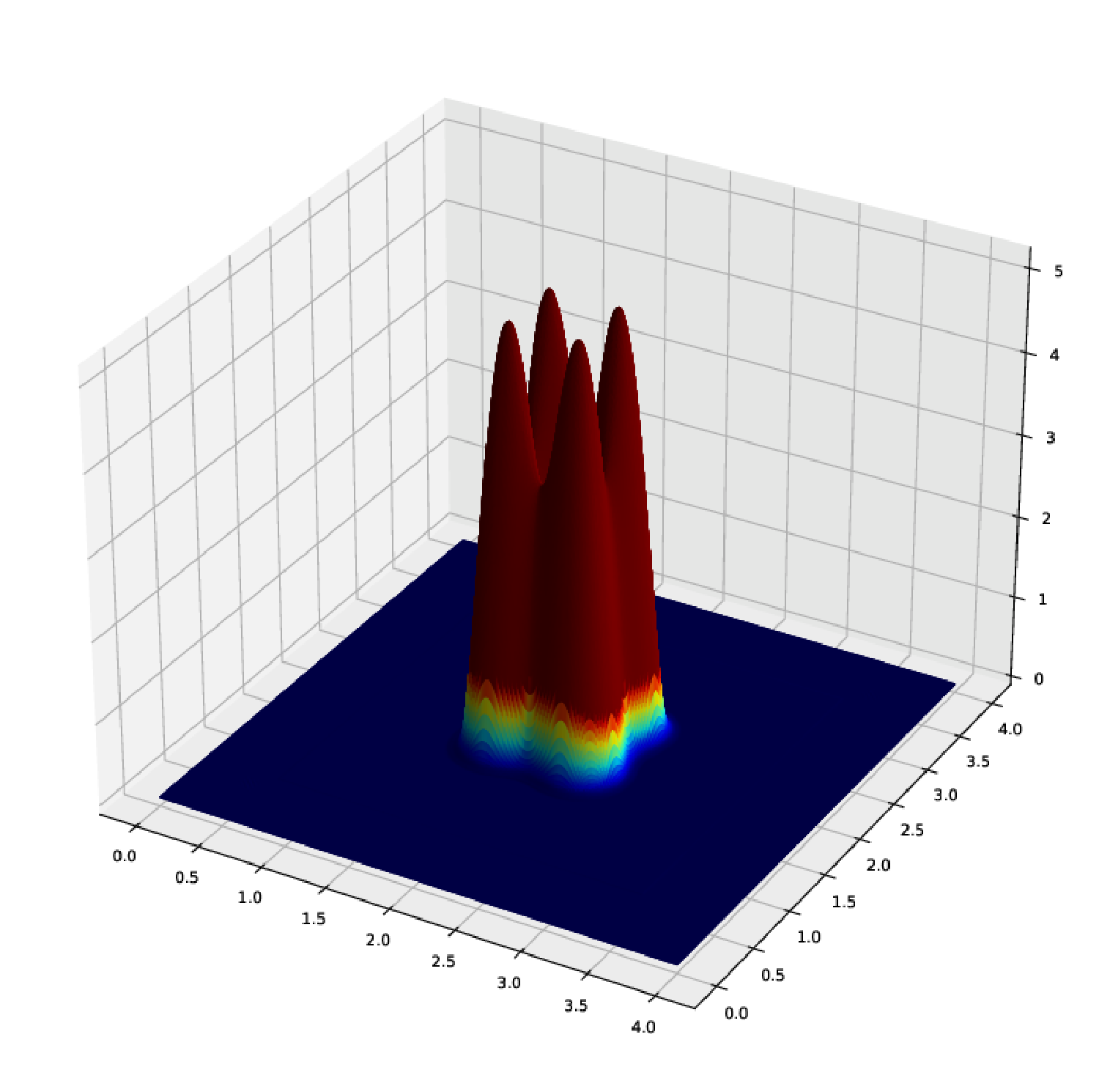}
\caption{$t=180$}
\end{subfigure}
\begin{subfigure}[b]{0.20\textwidth}
\centering
\includegraphics[width=\textwidth]{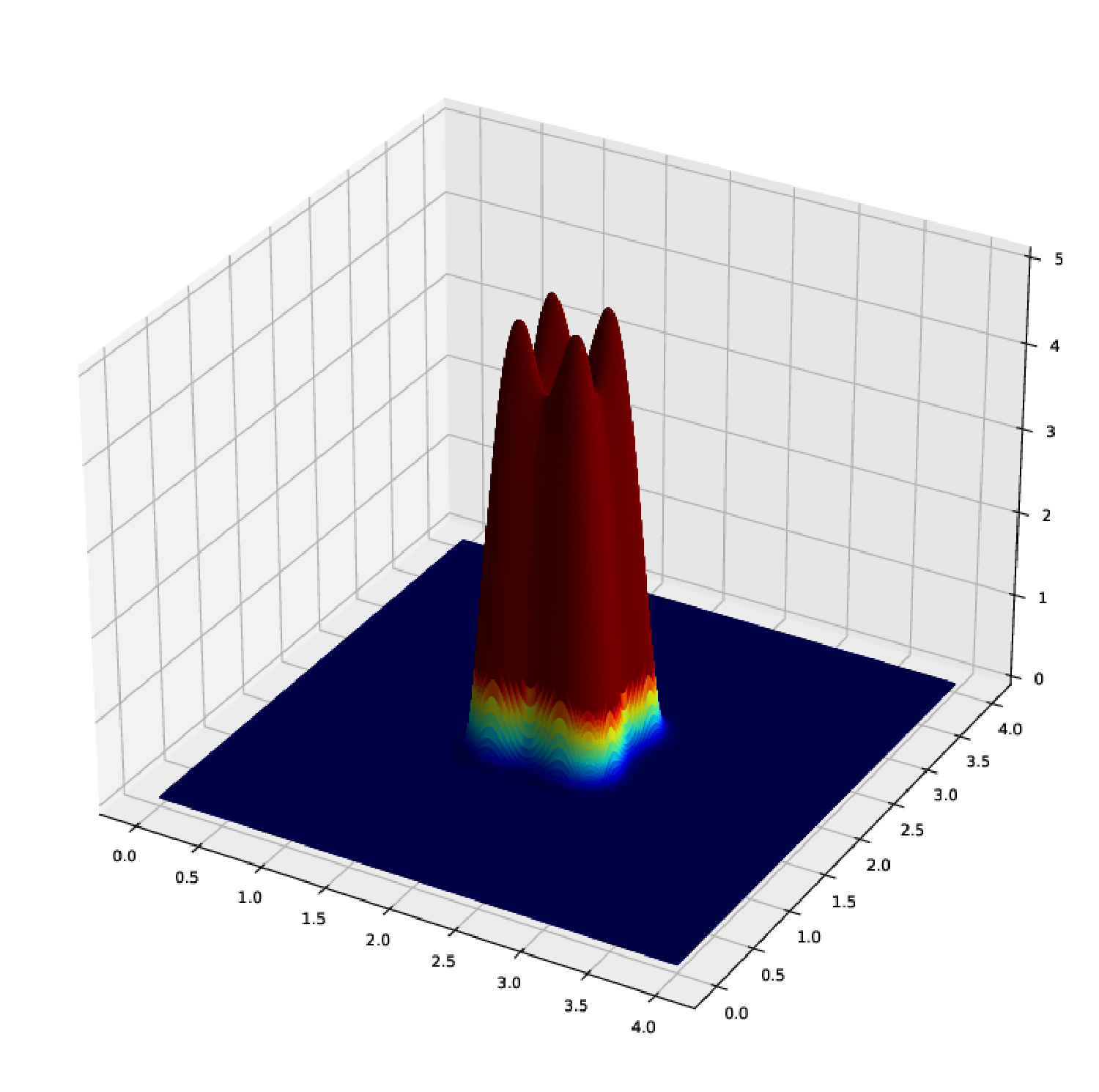}
\caption{$t=181$}
\end{subfigure}
\begin{subfigure}[b]{0.20\textwidth}
\centering
\includegraphics[width=\textwidth]{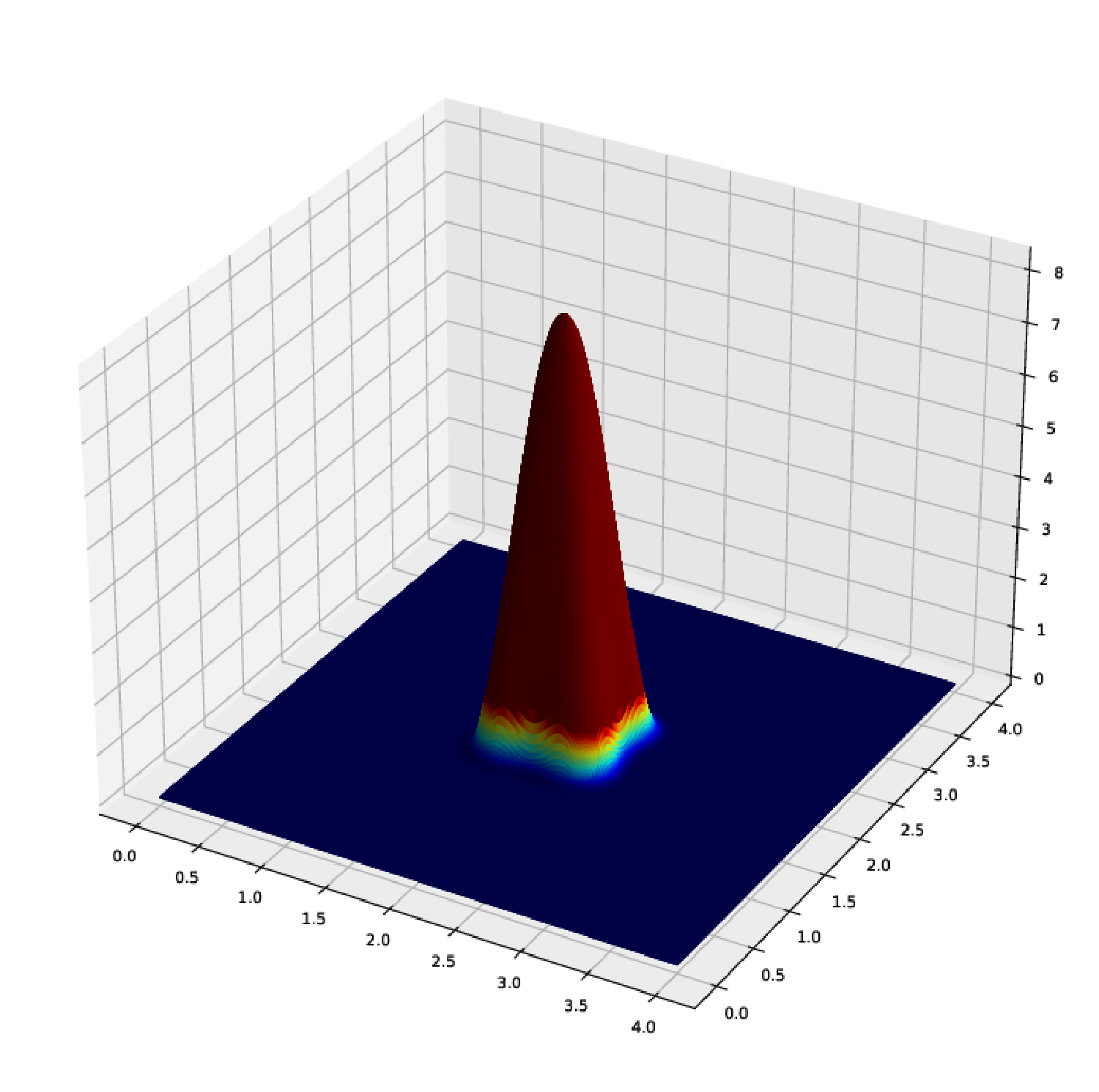}
\caption{$t=182$}
\end{subfigure}
\begin{subfigure}[b]{0.20\textwidth}
\centering
\includegraphics[width=\textwidth]{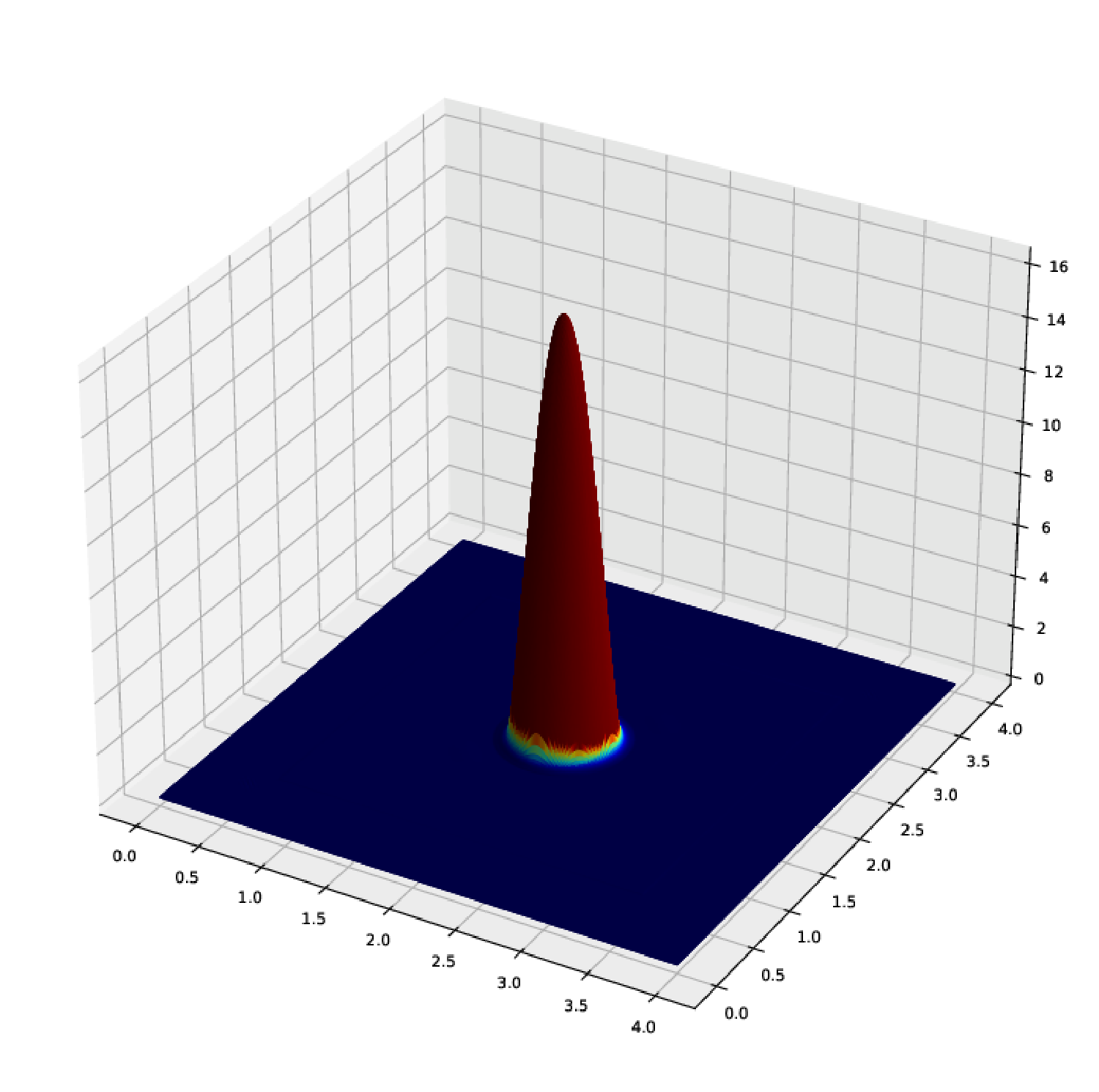}
\caption{$t=185$}
\end{subfigure}
\caption{The distribution of $l(x,t)$ over space $\Omega=[0,4]\times[0,4]$ for selected $t$ in the range 1-185 for the setting reported in Table \ref{tab:parameterValues}, but $h$ is 0.3.}
\label{fig:numericalInvestigationh=0.3}
\end{figure}

\section{Conclusions\label{sec:conclusions}}

We have shown how a complex spatial pattern can emerge from uncoordinated mobility decisions of workers. Even though workers are myopic in the sense that they maximise their instantaneous utility considering only the current distribution of other workers, the aggregate dynamics exhibit non-decreasing social utility, and in the stationary equilibrium distribution the efficient spatial allocation of workers seems guaranteed. Agglomeration forces, spatial congestion and diffusion driven by idiosyncratic preferences shape the spatial distribution of workers. In particular, when diffusion dominates agglomeration, the uniform distribution prevails, while in the opposite case a city (or more than one) can emerge, at least for a long period. The model can reproduce the emergence of cities of different sizes, shapes and spatial organisation, as well as the phenomenon known as metastability, in which a large city emerges from the abrupt agglomeration of several medium-sized cities that were apparently stable (persistent) for a long period.
The initial spatial distribution of workers turns out to be a strong determinant of the location and size of cities (history matters and path dependence is pervasive), as is the presence of exogenous spatial heterogeneity (such as the sea, rivers and mountains). 

The analysis can be extended in a number of ways.
The first is to introduce more realistic details of topography (coastline, rivers, mountains, etc.), natural resources (mines, fertile land, etc.) and infrastructure (roads, railways, etc.) for modelling $G(x)$ and $c_M$ as in \cite{allen2014trade} and \cite{redding2024quantitative}; in particular, the cost of moving should be a function of the location of the worker, its direction of movement and the location of the other workers, i.e. $c_{M}=c_{M}\left(X_t^{i,N},dX_t^{i,N},\mathbf{X}^{N}_{t}\right)$. While the dependence on the location of the worker and the location of the other workers does not pose any significant technical challenges, the inclusion of the direction requires a substantial modification of our framework \citep{flandoli2021navier}.
In the same respect, the study of the model in the case of a small sample, i.e. not in the limit of $N \rightarrow \infty$, should increase its explanatory power, especially for economies with a small population \citep{Evers2016}.
A further line of research is to analyse the ability of the model to reproduce Zipf's law and, more generally, the determinants of the city size distribution in a worker-based framework as in \cite{arshad2018zipf}.
Another relevant extension is the consideration of commuting, i.e. the possibility that a worker lives in a different place from his workplace \citealp{krugman1996self}. This extension is particularly important for the study of the internal structure of cities and the related phenomena of segregation and unequal spatial distribution of wages \citep{eberts1981empirical,albouy2015driving}.
Finally, the most difficult task is to extend the model to intertemporally optimising workers in continuous space. \cite{bardi2021convergence} describes a possible theoretical framework based on the \textit{Mean-Field Game} approach.

Returning to the initial question of the paper - why France, Germany and Italy exhibit such different spatial patterns of agglomeration - our model suggests that the answer is far from simple. While agglomeration, congestion, and diffusion forces play a role, the spatial equilibrium distribution of workers is also crucially shaped by the initial distribution of population, topography, and infrastructure.
Empirical research in this area faces an additional challenge: the continuous time-space formulation limits the applicability of standard spatial econometric techniques. To address this, \cite{fiaschiParentiRicci2023} introduce an innovative discretisation method for continuous time-space models that allows the separate identification of these determinants.
A key limitation of this approach is its reliance on high-resolution datasets, such as the \emph{Global Human Settlement Layer}\footnote{\url{https://human-settlement.emergency.copernicus.eu/ghs_pop2023.php}.} for residential population and NASA's \emph{Black Marble}\footnote{\url{https://blackmarble.gsfc.nasa.gov/}.} for local economic activity inferred from night lights. However, the increasing availability of high-resolution satellite imagery and geo-referenced census data is expected to alleviate this limitation.



\medskip
\noindent\textbf{Acknowledgements}: The authors have been supported by the Italian Ministry of University and Research (MIUR), in the framework of PRIN project 2017FKHBA8 001 (The Time-Space Evolution of Economic Activities: Mathematical Models and Empirical Applications).
Davide Fiaschi has been supported by the University of Pisa, in the framework of the PRA Project PRA\_2022\_86 (Mobilità di persone e merci nell'Europa).\\
\noindent\textbf{Declaration of interest}: The authors declare that they have no known competing financial interests or personal relationships that could have appeared to influence the work reported in this paper.\\
\noindent\textbf{Code availability}: All the codes are free to use and available at the link: \url{https://github.com/PRINNEtimeSpaceEconAct/randomUtilityModelPopulation.git}.
\clearpage

\bibliographystyle{chicago}
\bibliography{biblio}

\begin{thebibliography}{}

\bibitem[\protect\citeauthoryear{Albouy and Lue}{Albouy and
  Lue}{2015}]{albouy2015driving}
Albouy, D. and B.~Lue (2015).
\newblock Driving to opportunity: Local rents, wages, commuting, and
  sub-metropolitan quality of life.
\newblock {\em Journal of Urban Economics\/}~{\em 89}, 74--92.

\bibitem[\protect\citeauthoryear{Allen and Arkolakis}{Allen and
  Arkolakis}{2014}]{allen2014trade}
Allen, T. and C.~Arkolakis (2014).
\newblock Trade and the topography of the spatial economy.
\newblock {\em The Quarterly Journal of Economics\/}~{\em 129\/}(3),
  1085--1140.

\bibitem[\protect\citeauthoryear{Allen and Donaldson}{Allen and
  Donaldson}{2022}]{allen2020persistence}
Allen, T. and D.~Donaldson (2022).
\newblock Persistence and path dependence in the spatial economy.
\newblock {\em Mimeo\/}.

\bibitem[\protect\citeauthoryear{Ambrosio, Gigli, and Savar{\'e}}{Ambrosio
  et~al.}{2005}]{ambrosio2005gradient}
Ambrosio, L., N.~Gigli, and G.~Savar{\'e} (2005).
\newblock {\em Gradient flows: in metric spaces and in the space of probability
  measures}.
\newblock Springer Science \& Business Media.

\bibitem[\protect\citeauthoryear{Aoki and Yoshikawa}{Aoki and
  Yoshikawa}{2011}]{aoki2011reconstructing}
Aoki, M. and H.~Yoshikawa (2011).
\newblock {\em Reconstructing macroeconomics: a perspective from statistical
  physics and combinatorial stochastic processes}.
\newblock Cambridge University Press.

\bibitem[\protect\citeauthoryear{Arbia}{Arbia}{2001}]{arbia2001modelling}
Arbia, G. (2001).
\newblock Modelling the geography of economic activities on a continuous space.
\newblock {\em Papers in Regional Science\/}~{\em 80\/}(4), 411--424.

\bibitem[\protect\citeauthoryear{Arshad, Hu, and Ashraf}{Arshad
  et~al.}{2018}]{arshad2018zipf}
Arshad, S., S.~Hu, and B.~N. Ashraf (2018).
\newblock Zipf’s law and city size distribution: A survey of the literature
  and future research agenda.
\newblock {\em Physica A: Statistical mechanics and its applications\/}~{\em
  492}, 75--92.

\bibitem[\protect\citeauthoryear{Artzner, Simon, and Sonnenschein}{Artzner
  et~al.}{1986}]{artzner1986convergence}
Artzner, P., C.~P. Simon, and H.~Sonnenschein (1986).
\newblock Convergence of myopic firms to long-run equilibrium via the method of
  characteristics.
\newblock In {\em Models of Economic Dynamics: Proceedings of a Workshop held
  at the IMA, University of Minnesota, Minneapolis, USA, October 24--28, 1983},
  pp.\  157--183. Springer.

\bibitem[\protect\citeauthoryear{Bardi and Cardaliaguet}{Bardi and
  Cardaliaguet}{2021}]{bardi2021convergence}
Bardi, M. and P.~Cardaliaguet (2021).
\newblock Convergence of some mean field games systems to aggregation and
  flocking models.
\newblock {\em Nonlinear Analysis\/}~{\em 204}, 112199.

\bibitem[\protect\citeauthoryear{Bellomo, Degond, and Tadmor}{Bellomo
  et~al.}{2017}]{bellomo2017active}
Bellomo, N., P.~Degond, and E.~Tadmor (2017).
\newblock {\em Active Particles, Volume 1: Advances in Theory, Models, and
  Applications}.
\newblock Birkh{\"a}user.

\bibitem[\protect\citeauthoryear{Bellomo, Degond, and Tadmor}{Bellomo
  et~al.}{2019}]{bellomo2019active}
Bellomo, N., P.~Degond, and E.~Tadmor (2019).
\newblock {\em Active Particles, Volume 2: Advances in Theory, Models, and
  Applications}.
\newblock Modeling and Simulation in Science, Engineering and Technology.
  Springer International Publishing.

\bibitem[\protect\citeauthoryear{Bovier and Den~Hollander}{Bovier and
  Den~Hollander}{2016}]{bovier2016metastability}
Bovier, A. and F.~Den~Hollander (2016).
\newblock {\em Metastability: a potential-theoretic approach}, Volume 351.
\newblock Springer.

\bibitem[\protect\citeauthoryear{Caliendo, Dvorkin, and Parro}{Caliendo
  et~al.}{2019}]{caliendo2019trade}
Caliendo, L., M.~Dvorkin, and F.~Parro (2019).
\newblock Trade and labor market dynamics: General equilibrium analysis of the
  china trade shock.
\newblock {\em Econometrica\/}~{\em 87\/}(3), 741--835.

\bibitem[\protect\citeauthoryear{Carmona, Delarue, et~al.}{Carmona
  et~al.}{2018}]{carmona2018probabilistic}
Carmona, R., F.~Delarue, et~al. (2018).
\newblock {\em Probabilistic theory of mean field games with applications
  I-II}.
\newblock Springer.

\bibitem[\protect\citeauthoryear{Carrillo, Craig, and Yao}{Carrillo
  et~al.}{2019}]{carrillo2019aggregation}
Carrillo, J.~A., K.~Craig, and Y.~Yao (2019).
\newblock Aggregation-diffusion equations: dynamics, asymptotics, and singular
  limits.
\newblock In {\em Active Particles, Volume 2}, pp.\  65--108. Springer.

\bibitem[\protect\citeauthoryear{Carrillo, Hittmeir, Volzone, and Yao}{Carrillo
  et~al.}{2019}]{carrillo2019nonlinear}
Carrillo, J.~A., S.~Hittmeir, B.~Volzone, and Y.~Yao (2019).
\newblock Nonlinear aggregation-diffusion equations: radial symmetry and long
  time asymptotics.
\newblock {\em Inventiones mathematicae\/}~{\em 218\/}(3), 889--977.

\bibitem[\protect\citeauthoryear{Carrillo, McCann, and Villani}{Carrillo
  et~al.}{2003}]{carrillo2003kinetic}
Carrillo, J.~A., R.~J. McCann, and C.~Villani (2003).
\newblock Kinetic equilibration rates for granular media and related equations:
  entropy dissipation and mass transportation estimates.
\newblock {\em Revista Matematica Iberoamericana\/}~{\em 19\/}(3), 971--1018.

\bibitem[\protect\citeauthoryear{Carrillo, McCann, and Villani}{Carrillo
  et~al.}{2006}]{carrillo2006contractions}
Carrillo, J.~A., R.~J. McCann, and C.~Villani (2006).
\newblock Contractions in the 2-wasserstein length space and thermalization of
  granular media.
\newblock {\em Archive for Rational Mechanics and Analysis\/}~{\em 179},
  217--263.

\bibitem[\protect\citeauthoryear{Catellier, D’angelo, and Ricci}{Catellier
  et~al.}{2021}]{catellier2021mean}
Catellier, R., Y.~D’angelo, and C.~Ricci (2021).
\newblock A mean-field approach to self-interacting networks, convergence and
  regularity.
\newblock {\em Mathematical Models and Methods in Applied Sciences\/}~{\em
  31\/}(13), 2597--2641.

\bibitem[\protect\citeauthoryear{Celi{\'n}ski}{Celi{\'n}ski}{2014}]{celinski2014stability}
Celi{\'n}ski, R. (2014).
\newblock Stability of solutions to aggregation equation in bounded domains.
\newblock {\em Applied Mathematics and Computation\/}~{\em 228}, 49--58.

\bibitem[\protect\citeauthoryear{Colander, Howitt, Kirman, Leijonhufvud, and
  Mehrling}{Colander et~al.}{2008}]{colander2008beyond}
Colander, D., P.~Howitt, A.~Kirman, A.~Leijonhufvud, and P.~Mehrling (2008).
\newblock Beyond dsge models: toward an empirically based macroeconomics.
\newblock {\em American Economic Review\/}~{\em 98\/}(2), 236--40.

\bibitem[\protect\citeauthoryear{Cozzi, Gie, and Kelliher}{Cozzi
  et~al.}{2017}]{cozzi2017aggregation}
Cozzi, E., G.-M. Gie, and J.~P. Kelliher (2017).
\newblock The aggregation equation with newtonian potential: the vanishing
  viscosity limit.
\newblock {\em Journal of Mathematical Analysis and Applications\/}~{\em
  453\/}(2), 841--893.

\bibitem[\protect\citeauthoryear{Davis and Weinstein}{Davis and
  Weinstein}{2002}]{davis2002bones}
Davis, D.~R. and D.~E. Weinstein (2002).
\newblock Bones, bombs, and break points: the geography of economic activity.
\newblock {\em American Economic Review\/}~{\em 92\/}(5), 1269--1289.

\bibitem[\protect\citeauthoryear{De~Benedictis, Licio, and Pinna}{De~Benedictis
  et~al.}{2022}]{de2022historical}
De~Benedictis, L., V.~Licio, and A.~M. Pinna (2022).
\newblock From the historical roman road network to modern infrastructure in
  italy.
\newblock {\em Journal of Regional Science\/}.

\bibitem[\protect\citeauthoryear{Desmet, Nagy, and Rossi-Hansberg}{Desmet
  et~al.}{2018}]{desmet2018geography}
Desmet, K., D.~K. Nagy, and E.~Rossi-Hansberg (2018).
\newblock The geography of development.
\newblock {\em Journal of Political Economy\/}~{\em 126\/}(3), 903--983.

\bibitem[\protect\citeauthoryear{Domingo, d’Onofrio, and Flandoli}{Domingo
  et~al.}{2020}]{domingo2020properties}
Domingo, D., A.~d’Onofrio, and F.~Flandoli (2020).
\newblock Properties of bounded stochastic processes employed in biophysics.
\newblock {\em Stochastic Analysis and Applications\/}~{\em 38\/}(2), 277--306.

\bibitem[\protect\citeauthoryear{Duranton and Puga}{Duranton and
  Puga}{2020}]{duranton2020economics}
Duranton, G. and D.~Puga (2020).
\newblock The economics of urban density.
\newblock {\em Journal of Economic Perspectives\/}~{\em 34\/}(3), 3--26.

\bibitem[\protect\citeauthoryear{Durlauf}{Durlauf}{1994}]{durlauf1994spillovers}
Durlauf, S.~N. (1994).
\newblock Spillovers, stratification, and inequality.
\newblock {\em European Economic Review\/}~{\em 38\/}(3-4), 836--845.

\bibitem[\protect\citeauthoryear{Durlauf, Johnson, and Temple}{Durlauf
  et~al.}{2005}]{durlauf2005}
Durlauf, S.~N., P.~A. Johnson, and J.~R.~W. Temple (2005).
\newblock Growth econometrics.
\newblock In P.~Aghion and S.~N. Durlauf (Eds.), {\em Handbook of Economic
  Growth}, Volume~1A, pp.\  555--–677. Elsevier.

\bibitem[\protect\citeauthoryear{Durlauf}{Durlauf}{1997}]{EconomyAsAComplexEvolvingSystemII1997}
Durlauf, W. B. A. S.~N. (1997).
\newblock {\em The Economy As A Complex Evolving System II}.
\newblock Santa Fe Institute Studies in the Sciences of Complexity Lecture
  Notes. Westview Press.

\bibitem[\protect\citeauthoryear{Eberts}{Eberts}{1981}]{eberts1981empirical}
Eberts, R.~W. (1981).
\newblock An empirical investigation of intraurban wage gradients.
\newblock {\em Journal of Urban Economics\/}~{\em 10\/}(1), 50--60.

\bibitem[\protect\citeauthoryear{Evers and Kolokolnikov}{Evers and
  Kolokolnikov}{2016}]{Evers2016}
Evers, J.~H. and T.~Kolokolnikov (2016).
\newblock {Metastable states for an aggregation model with noise}.
\newblock {\em SIAM Journal on Applied Dynamical Systems\/}~{\em 15\/}(4),
  2213--2226.

\bibitem[\protect\citeauthoryear{Fajgelbaum and Gaubert}{Fajgelbaum and
  Gaubert}{2020}]{fajgelbaum2020optimal}
Fajgelbaum, P.~D. and C.~Gaubert (2020).
\newblock Optimal spatial policies, geography, and sorting.
\newblock {\em The Quarterly Journal of Economics\/}~{\em 135\/}(2), 959--1036.

\bibitem[\protect\citeauthoryear{Fiaschi, Parenti, and Ricci}{Fiaschi
  et~al.}{2024}]{fiaschiParentiRicci2023}
Fiaschi, D., A.~Parenti, and C.~Ricci (2024).
\newblock The spatial evolution of economic activities: from theory to
  estimation.
\newblock {\em arXiv preprint arXiv:2407.14267\/}.

\bibitem[\protect\citeauthoryear{Fisher et~al.}{Fisher
  et~al.}{1989}]{fisher1989disequilibrium}
Fisher, F.~M. et~al. (1989).
\newblock Disequilibrium foundations of equilibrium economics.
\newblock {\em Cambridge Books\/}.

\bibitem[\protect\citeauthoryear{Flandoli, Leocata, and Ricci}{Flandoli
  et~al.}{2020}]{flandoli2020macroscopic}
Flandoli, F., M.~Leocata, and C.~Ricci (2020).
\newblock On the macroscopic limit of brownian particles with local
  interaction.
\newblock {\em Stochastics and Dynamics\/}~{\em 20\/}(06), 2040007.

\bibitem[\protect\citeauthoryear{Flandoli, Leocata, and Ricci}{Flandoli
  et~al.}{2021}]{flandoli2021navier}
Flandoli, F., M.~Leocata, and C.~Ricci (2021).
\newblock The navier--stokes--vlasov--fokker--planck system as a scaling limit
  of particles in a fluid.
\newblock {\em Journal of Mathematical Fluid Mechanics\/}~{\em 23}, 1--39.

\bibitem[\protect\citeauthoryear{Fujita}{Fujita}{1989}]{Fujita1989}
Fujita, M. (1989, October).
\newblock {\em {Urban Economic Theory}}.
\newblock Number 9780521346627 in Cambridge Books. Cambridge University Press.

\bibitem[\protect\citeauthoryear{Gabaix}{Gabaix}{1999}]{gabaix1999zipf}
Gabaix, X. (1999).
\newblock Zipf's law and the growth of cities.
\newblock {\em American Economic Review\/}~{\em 89\/}(2), 129--132.

\bibitem[\protect\citeauthoryear{Gabaix and Ioannides}{Gabaix and
  Ioannides}{2004}]{GABAIX20042341}
Gabaix, X. and Y.~M. Ioannides (2004).
\newblock The evolution of city size distributions.
\newblock In J.~V. Henderson and J.-F. Thisse (Eds.), {\em Cities and
  Geography}, Volume~4 of {\em Handbook of Regional and Urban Economics}, pp.\
  2341--2378. Elsevier.

\bibitem[\protect\citeauthoryear{Gandolfo}{Gandolfo}{1997}]{gandolfo1997economic}
Gandolfo, G. (1997).
\newblock {\em Economic dynamics: study edition}.
\newblock Springer Science \& Business Media.

\bibitem[\protect\citeauthoryear{Glaeser and Mare}{Glaeser and
  Mare}{2001}]{glaeser2001cities}
Glaeser, E.~L. and D.~C. Mare (2001).
\newblock Cities and skills.
\newblock {\em Journal of Labor Economics\/}~{\em 19\/}(2), 316--342.

\bibitem[\protect\citeauthoryear{Glaeser and Resseger}{Glaeser and
  Resseger}{2010}]{glaeser2010complementarity}
Glaeser, E.~L. and M.~G. Resseger (2010).
\newblock The complementarity between cities and skills.
\newblock {\em Journal of Regional Science\/}~{\em 50\/}(1), 221--244.

\bibitem[\protect\citeauthoryear{Howitt}{Howitt}{2006}]{howitt2006coordination}
Howitt, P. (2006).
\newblock Coordination issues in long-run growth.
\newblock {\em Handbook of Computational Economics\/}~{\em 2}, 1605--1624.

\bibitem[\protect\citeauthoryear{Jordan, Kinderlehrer, and Otto}{Jordan
  et~al.}{1998}]{jordan1998variational}
Jordan, R., D.~Kinderlehrer, and F.~Otto (1998).
\newblock The variational formulation of the fokker--planck equation.
\newblock {\em SIAM journal on mathematical analysis\/}~{\em 29\/}(1), 1--17.

\bibitem[\protect\citeauthoryear{Kantorovich and Rubinstein}{Kantorovich and
  Rubinstein}{1958}]{KR:58}
Kantorovich, L. and G.~S. Rubinstein (1958).
\newblock On a space of totally additive functions.
\newblock {\em Vestnik Leningrad. Univ\/}~{\em 13}, 52--59.

\bibitem[\protect\citeauthoryear{Kapitula, Promislow, et~al.}{Kapitula
  et~al.}{2013}]{kapitula2013spectral}
Kapitula, T., K.~Promislow, et~al. (2013).
\newblock {\em Spectral and dynamical stability of nonlinear waves}, Volume
  457.
\newblock Springer.

\bibitem[\protect\citeauthoryear{Kleinman, Liu, and Redding}{Kleinman
  et~al.}{2023}]{kleinman2023dynamic}
Kleinman, B., E.~Liu, and S.~J. Redding (2023).
\newblock Dynamic spatial general equilibrium.
\newblock {\em Econometrica\/}~{\em 91\/}(2), 385--424.

\bibitem[\protect\citeauthoryear{Krugman}{Krugman}{1993}]{krugman1993first}
Krugman, P. (1993).
\newblock First nature, second nature, and metropolitan location.
\newblock {\em Journal of regional science\/}~{\em 33\/}(2), 129--144.

\bibitem[\protect\citeauthoryear{Krugman}{Krugman}{1994}]{krugman1994complex}
Krugman, P. (1994).
\newblock Complex landscapes in economic geography.
\newblock {\em American Economic Review\/}~{\em 84\/}(2), 412--416.

\bibitem[\protect\citeauthoryear{Krugman}{Krugman}{1996}]{krugman1996self}
Krugman, P. (1996).
\newblock {\em The self organizing economy}.
\newblock John Wiley \& Sons.

\bibitem[\protect\citeauthoryear{Lagouti{\`e}re, Santambrogio, and
  Tien}{Lagouti{\`e}re et~al.}{2023}]{lagoutiere2023vanishing}
Lagouti{\`e}re, F., F.~Santambrogio, and S.~T. Tien (2023).
\newblock Vanishing viscosity limit for aggregation-diffusion equations.

\bibitem[\protect\citeauthoryear{LeBaron and Tesfatsion}{LeBaron and
  Tesfatsion}{2008}]{lebaron2008modeling}
LeBaron, B. and L.~Tesfatsion (2008).
\newblock Modeling macroeconomies as open-ended dynamic systems of interacting
  agents.
\newblock {\em American Economic Review\/}~{\em 98\/}(2), 246--50.

\bibitem[\protect\citeauthoryear{Marsili and Zhang}{Marsili and
  Zhang}{1998}]{marsili1998interacting}
Marsili, M. and Y.-C. Zhang (1998).
\newblock Interacting individuals leading to zipf's law.
\newblock {\em Physical review letters\/}~{\em 80\/}(12), 2741.

\bibitem[\protect\citeauthoryear{M{\'e}l{\'e}ard and
  Roelly-Coppoletta}{M{\'e}l{\'e}ard and
  Roelly-Coppoletta}{1987}]{meleard1987propagation}
M{\'e}l{\'e}ard, S. and S.~Roelly-Coppoletta (1987).
\newblock A propagation of chaos result for a system of particles with moderate
  interaction.
\newblock {\em Stochastic processes and their applications\/}~{\em 26},
  317--332.

\bibitem[\protect\citeauthoryear{Michalopoulos and Papaioannou}{Michalopoulos
  and Papaioannou}{2018}]{michalopoulos2018spatial}
Michalopoulos, S. and E.~Papaioannou (2018).
\newblock Spatial patterns of development: A meso approach.
\newblock {\em Annual Review of Economics\/}~{\em 10}, 383--410.

\bibitem[\protect\citeauthoryear{Mishura and Veretennikov}{Mishura and
  Veretennikov}{2020}]{mishura2020existence}
Mishura, Y. and A.~Veretennikov (2020).
\newblock Existence and uniqueness theorems for solutions of mckean--vlasov
  stochastic equations.
\newblock {\em Theory of Probability and Mathematical Statistics\/}~{\em 103},
  59--101.

\bibitem[\protect\citeauthoryear{Mocenni, Facchini, and Vicino}{Mocenni
  et~al.}{2010}]{mocenni2010identifying}
Mocenni, C., A.~Facchini, and A.~Vicino (2010).
\newblock Identifying the dynamics of complex spatio-temporal systems by
  spatial recurrence properties.
\newblock {\em Proceedings of the National Academy of Sciences\/}~{\em
  107\/}(18), 8097--8102.

\bibitem[\protect\citeauthoryear{Morale, Capasso, and Oelschl{\"a}ger}{Morale
  et~al.}{2005}]{morale2005interacting}
Morale, D., V.~Capasso, and K.~Oelschl{\"a}ger (2005).
\newblock An interacting particle system modelling aggregation behavior: from
  individuals to populations.
\newblock {\em Journal of mathematical biology\/}~{\em 50\/}(1), 49--66.

\bibitem[\protect\citeauthoryear{Mumford}{Mumford}{1961}]{mumford1961city}
Mumford, L. (1961).
\newblock {\em The city in history: Its origins, its transformations, and its
  prospects}, Volume~67.
\newblock Houghton Mifflin Harcourt.

\bibitem[\protect\citeauthoryear{Novshek and Sonnenschein}{Novshek and
  Sonnenschein}{1986}]{novshek1986quantity}
Novshek, W. and H.~Sonnenschein (1986).
\newblock Quantity adjustment in an arrow-debreu-mckenzie type model.
\newblock In {\em Models of Economic Dynamics: Proceedings of a Workshop held
  at the IMA, University of Minnesota, Minneapolis, USA, October 24--28, 1983},
  pp.\  148--156. Springer.

\bibitem[\protect\citeauthoryear{Oelschl{\"a}ger}{Oelschl{\"a}ger}{1985}]{oelschlager1985law}
Oelschl{\"a}ger, K. (1985).
\newblock A law of large numbers for moderately interacting diffusion
  processes.
\newblock {\em Zeitschrift f{\"u}r Wahrscheinlichkeitstheorie und verwandte
  Gebiete\/}~{\em 69\/}(2), 279--322.

\bibitem[\protect\citeauthoryear{Papageorgiou and Smith}{Papageorgiou and
  Smith}{1983}]{papageorgiou1983agglomeration}
Papageorgiou, Y.~Y. and T.~R. Smith (1983).
\newblock Agglomeration as local instability of spatially uniform
  steady-states.
\newblock {\em Econometrica: Journal of the Econometric Society\/}, 1109--1119.

\bibitem[\protect\citeauthoryear{Pirenne}{Pirenne}{2014}]{pirenne2014medieval}
Pirenne, H. (2014).
\newblock {\em Medieval Cities: Their Origins and the Revival of Trade-Updated
  Edition}, Volume~85.
\newblock Princeton University Press.

\bibitem[\protect\citeauthoryear{Redding}{Redding}{2024}]{redding2024quantitative}
Redding, S.~J. (2024).
\newblock Quantitative urban economics.
\newblock {\em National Bureau of Economic Research\/}~(33130).

\bibitem[\protect\citeauthoryear{Sonnenschein}{Sonnenschein}{1982}]{sonnenschein1982price}
Sonnenschein, H. (1982).
\newblock Price dynamics based on the adjustment of firms.
\newblock {\em American Economic Review\/}~{\em 72\/}(5), 1088--1096.

\bibitem[\protect\citeauthoryear{Sznitman}{Sznitman}{1991}]{sznitman1991topics}
Sznitman, A.-S. (1991).
\newblock Topics in propagation of chaos.
\newblock In {\em Ecole d'{\'e}t{\'e} de probabilit{\'e}s de Saint-Flour
  XIX—1989}, pp.\  165--251. Springer.

\bibitem[\protect\citeauthoryear{Theil}{Theil}{1967}]{theil1967information}
Theil, H. (1967).
\newblock {\em Economics and Information Theory}.
\newblock North-Holland, Amsterdam.

\bibitem[\protect\citeauthoryear{Timothy and Wheaton}{Timothy and
  Wheaton}{2001}]{timothy2001intra}
Timothy, D. and W.~C. Wheaton (2001).
\newblock Intra-urban wage variation, employment location, and commuting times.
\newblock {\em Journal of urban Economics\/}~{\em 50\/}(2), 338--366.

\bibitem[\protect\citeauthoryear{Train}{Train}{2009}]{train2009discrete}
Train, K.~E. (2009).
\newblock {\em Discrete choice methods with simulation}.
\newblock Cambridge university press.

\bibitem[\protect\citeauthoryear{Uchiyama}{Uchiyama}{2000}]{uchiyama2000pressure}
Uchiyama, K. (2000).
\newblock Pressure in classical statistical mechanics and interacting brownian
  particles in multi-dimensions.
\newblock In {\em Annales Henri Poincar{\'e}}, Volume~1, pp.\  1159--1202.
  Springer.

\bibitem[\protect\citeauthoryear{Varadhan}{Varadhan}{1991}]{varadhan1991scaling}
Varadhan, S. (1991).
\newblock Scaling limits for interacting diffusions.
\newblock {\em Communications in mathematical physics\/}~{\em 135\/}(2),
  313--353.

\bibitem[\protect\citeauthoryear{Xepapadeas and Yannacopoulos}{Xepapadeas and
  Yannacopoulos}{2016}]{xepapadeas2016spatial}
Xepapadeas, A. and A.~Yannacopoulos (2016).
\newblock Spatial growth with exogenous saving rates.
\newblock {\em Journal of Mathematical Economics\/}~{\em 67}, 125--137.

\end{thebibliography}

\clearpage

\appendix
\counterwithin*{equation}{section}
\renewcommand\theequation{\thesection\arabic{equation}}
\textbf{ \Large Appendix}

\section{Sketch of the proof of Theorem \ref{teo:limitInfiniteAgentsII}  \label{app:proofMeanFieldLimit}}

In this appendix, we discuss a sketch of the proof of Theorem \ref{teo:limitInfiniteAgents} in the case where the utility is specified in Eq. \eqref{eq:movementAgent_IIEcon}, that is Theorem \ref{teo:limitInfiniteAgentsII}. As the proof is not particularly technical compared to the literature, in order to make the argumentation more effective for the reader, some details of the proof are skipped and we make some nonrestrictive assumptions \citep{bellomo2017active, bellomo2019active}. In particular, we assume in the proof that i) there is no change in the total number of workers, i.e. $n(x,t) = 0$; the case $n(x,t) \neq 0$ can be treated by introducing some technicalities related to Poisson processes, but the proof only becomes more involved without any additional significant conceptual difficulty (see, e.g., \citealp{catellier2021mean}).
Moreover, ii) we also assume that there are no endogenous amenities, i.e. $A_{0}=0$; their presence would only add some additional computations to the proof, since the main technical difficulties are already present in the expression for wages.
Finally, iii) we assume as domain $\Omega$ the \textit{two-dimensional torus} $\TT^{2} = \RR^{2}/\mathbb{Z}^2$ in order to neglect the boundary effects; this is not restrictive since the same technique of proof can be applied in any domain $\Omega \subseteq \RR^2$ by introducing a confining potential at the boundary of $\Omega$ to prevent workers from escaping the desired domain. 

Assume that at time $t = 0$ workers are independently randomly distributed in the domain $\Omega$, following a common probability density distribution on $\Omega$ called $l_0(x)$. For $t > 0$ the location of each worker evolves according to Eq. \eqref{eq:movementAgent_II}, i.e:
\begin{eqnarray}  \nonumber
dX^{i,N}_{t} &=& \left(\frac{1}{c_{M}}\right)\left[ \nabla_{x} w^R\left(\mathbf{X}_t^N\right)\left(X^{i,N}_{t}, t\right) + \nabla_{x}A_{ES}\left(X^{i,N}_{t},t\right)\right]dt + \left(\frac{1}{c_{M}}\right)\sigma dB^{i}_{t},
\end{eqnarray}
or by making each term more explicit,
\begin{eqnarray}  \nonumber
dX^{i,N}_{t} &=& \left(\frac{1}{c_{M}}\right) G(X^{i,N}_{t})\left[\sum_{j=1}^{N} \frac{1}{N}W_h^P(X^{i,N}_{t}-X^{j,N}_t) \right]l^{N}(X^{i,N}_{t},t)^{\beta-1} \\
&+& \left(\frac{1}{c_{M}}\right)\nabla_{x}A_{ES}\left(X^{i,N}_{t},t\right) dt \nonumber \\
&+& \left(\frac{\sigma}{c_{M}}\right) dB^{i}_{t}.
\label{eq:AppAgentMovement} 
\end{eqnarray}
Consider the empirical distribution of all worker locations:
\begin{equation}
E^{N}_t := \frac{1}{N}\sum_{i=1}^{N} \delta_{X^{i,{N}}_t},
\end{equation}
where $\delta_z$ is the random variable on $\TT^2$ with unitary mass at point $z$. $E^{N}_t$ is a continuous set of random variables on $\TT^2$ depending on time. For any given $N \in \NN$ and $t>0$, $E^{N}_t$ is singular, in the sense that it is a distribution over $\TT^2$ that does not admit a probability density function, since it has a positive probability only on a finite set (corresponding to the location of the $N$ workers). However, as $N$ tends to infinity, the family of random variables $E^{N}_t$ becomes diffuse and converges (in distribution) to a continuous family of random variables over $\TT^2$, denoted by $E_t$ for any $t > 0$. The distribution of $E_t$ is regular and admits a probability density function for each $t$, called $l(t,x)$. An explicit expression for $l(t,x)$ for each $t$ is not available. 
However, we can prove that the probability density function $l(t,x)$ is the solution to Eq.  \eqref{eq:dynamicsLabourDistribution} of Theorem \ref{teo:limitInfiniteAgents} under the assumption stated above. 

Recall that the stock of workers at location $x$ at time $t$ is defined by
\begin{equation*}
l^N(x,t) = \sum_{i = 1}^{N} \frac{1}{N}\cdot\theta_N(x-X^{i,N}_t),
\end{equation*}
where $\theta_{N}(x) = N^{2\lambda}\theta(N^{\lambda}x)$, $\theta:\TT^{2}\to \RR$ is a function that is not negative, $C^{\infty}$, and integrates to one. The function $l^{N}(x,t)$ can also be described as a \emph{mollified empirical measure} of the stock of workers at location $x$ at time $t$.
The parameter $\lambda$ can be taken in the interval $[0,1]$. However, in our framework some choices of $\lambda$ are not possible. In particular 
\begin{itemize}
\item $\lambda = 0$: this corresponds to the classical \textit{mean field} case, where the radius of interaction is fixed. This choice leads to non-local macroscopic dynamics, which is not suitable for our purposes;
\item $\lambda \in (0,1)$: this corresponds to the theory of \textit{moderate interactions} originally developed in \cite{oelschlager1985law} and extensively generalised in the literature (see e.g. \citealp{meleard1987propagation}). Our results fall into this case, i.e. we assume $\lambda \in (0,\overline{\lambda})$ with $\overline{\lambda} < 1$;
\item $\lambda = 1$: this corresponds to the case of \textit{local interactions}, which is still not well understood in the literature (see e.g. \citealp{varadhan1991scaling,uchiyama2000pressure,flandoli2020macroscopic}).
\end{itemize}

\subsection{Preliminaries}
In this section, we provide some preliminaries needed for the proof of Theorem \ref{teo:limitInfiniteAgentsII}. In particular, the definition of a \textit{weak} solution of Eq. \eqref{eq:dynamicsLabourDistribution} and the appropriate measure space with a metric corresponding to weak convergence of the probability measure.
\begin{defi}[Weak solution of Eq. \eqref{eq:dynamicsLabourDistribution}]
A function $l$ of suitable regularity is a \textit{weak solution} of the system of Eq. \eqref{eq:dynamicsLabourDistribution} if for each test function $\phi \in C^{\infty}(\TT^{2}; \RR)$ the following holds
\begin{eqnarray} \nonumber
\ang{l(t),\phi} &=&  \ang{l(0),\phi} + \frac{ \sigma^2}{2c_{M}^2}\int_{0}^{t}\ang{l(s),\Delta_{x}\phi}\,ds +\\ 
&+& \left(\frac{1}{c_{M}}\right) \int_{0}^{t}\ang{l(s),\nabla_{x}\phi \cdot \nabla_{x}\left( G\, (W_{h}^{P} * l)(s) \, l(s)^{\beta-1} \right)}\,ds +\nonumber \\
&+& \left(\frac{1}{c_{M}}\right) \int_{0}^{t}\ang{l(s),\nabla_{x}\phi \cdot \nabla_{x}A_{ES}(s)}\,ds .
\end{eqnarray}
\end{defi}
Introduce the \textit{empirical measure} of the workers' location:
\[
S^{N}_{t}(dx) = \frac{1}{N}\sum_{i=1}^{N}\delta_{X^{i,N}_{t}}(dx),
\]
which is a random element of the space $C([0,T];\PP_{1}(\TT^{2})$), where 
\begin{equation}\label{def:metricSpace}
\PP_{1}(\TT^{2}) := \left\{ \mu \text{ probability measure on } (\TT^{2},\mathcal{B}(\TT^{2})) \,\Bigg\vert\, \int_{\TT^{2}}\abs{x}\,\mu(dx) < \infty\right\}
\end{equation}
is the space of all probability measures on the Borel sets of $\TT^{2}$ with finite first moment. We endow this space with the \textit{Wasserstein$-1$ metric}, which can be defined equivalently as
\[
\mathcal{W}_{1}(\mu,\nu) := \sup_{ [\phi]_{Lip}\leq 1}\abs{ \int_{\TT^{2}} \phi \, d\mu - \int_{\TT^{2}} \phi \,d\nu },
\]
where $[\phi]_{Lip}$ is the usual Lipschitz seminorm. Note that this is not the usual definition of Wasserstein metrics, but rather a characterisation provided by a Theorem called the Kantorovich-Rubinstein characterisation \citep{KR:58}. Equipped with this metric, the space $\PP_{1}(\TT^{2})$ becomes a completely separable metric space, whose convergence implies the weak convergence of probability measures.

\subsection{Candidate for the limit of infinite \textit{N} }
For any test function $\phi$ it holds:
\[
\ang{S^{N}_{t},\phi} = \int_{\TT^{2}}\phi(x)S^{N}_{t}(dx) = \frac{1}{N}\sum_{i=1}^{N}\phi(X^{i,N}_{t});
\]
So, by taking the family of functions parameterised by $x$ $\phi(\cdot) = \theta_{N}(x-\cdot)$ we have 
\[
\ang{S^{N}_{t},\theta_{N}(x-\cdot)} = \frac{1}{N}\sum_{i=1}^{N}\theta_{N} (x-X^{i,N}_{t}) = l^{N}(x,t). 
\]

Taking the above expressions, we can calculate their change over time by \textit{It\^o Formula}.
\begin{lem}[It\^o Formula] 
For any test function $\phi$ it holds
\begin{eqnarray}\nonumber
d\phi(X^{i,N}_{t}) &=& \frac{\sigma^{2}}{2c_{M}^2}\Delta_{x} \phi(X^{i,N}_{t})\,dt \\
&+& \left(\frac{1}{c_{M}}\right)\nabla\phi(X^{i,N}_{t})\cdot \nabla w^{N}(X^{i,N}_{t},t) \, dt\nonumber \\
&+& \left(\frac{1}{c_{M}}\right)\nabla\phi(X^{i,N}_{t})\cdot \nabla A_{ES}(X^{i,N}_{t},t) \, dt \nonumber\\
&+& \left(\frac{\sigma}{c_{M}}\right) \nabla \phi(X^{i,N}_{t})\cdot dB^{i}_{t};
\label{eq:firstPart}
\end{eqnarray}
\begin{eqnarray}\nonumber
d\ang{S^{N}_{t},\phi} &=& \frac{\sigma^{2}}{2c_{M}^2} \ang{S^{N}_{t},\Delta_{x} \phi}\,dt \\
&+& \left(\frac{1}{c_{M}}\right)\ang{S^{N}_{t},\nabla_{x}\phi \cdot \nabla_{x}w^{N}(t,\cdot)}\,dt \nonumber \\ 
&+& \left(\frac{1}{c_{M}}\right)\ang{S^{N}_{t},\nabla_{x}\phi \cdot \nabla_{x}A_{ES}(t,\cdot)}\,dt \nonumber \\
&+& \left(\frac{\sigma}{c_{M}}\right){N}\sum_{i=1}^{N}\nabla_{x} \phi(X^{i,N}_{t})\cdot dB^{i}_{t},
\label{eq:secondPart} 
\end{eqnarray}
and
\begin{eqnarray}\label{eq:appItoEm}\nonumber
dl^{N}(x,t) &=& \frac{\sigma^{2}}{2c_{M}^2} \Delta_{x} l^{N}(x,t)\, dt \nonumber \\
&-& \left(\frac{1}{c_{M}}\right) \frac{1}{N} \sum_{i=1}^{N} \nabla\theta_{N}(x-X^{i,N}_{t})\cdot \nabla w^{N}(X^{i,N}_{t},t) \,dt \nonumber \\
&-& \left(\frac{1}{c_{M}}\right) \, \div_x \left( l^N(x,t) \nabla_{x}A_{ES}(x,t) \right)\,dt  \nonumber \\
&+& \frac{\sigma}{Nc_{M}}\sum_{i=1}^{N}\nabla_{x} \theta_{N}(x-X^{i,N}_{t})\cdot dB^{i}_{t}.
\end{eqnarray}
\end{lem}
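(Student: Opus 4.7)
The three identities are successive consequences of a single application of It\^o's formula to the SDE \eqref{eq:AppAgentMovement}, aggregated at three different levels: a test function evaluated at one particle, the empirical measure paired against a test function, and the mollified empirical density indexed by $x$. I would derive them in this order so that each step feeds into the next.

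For \eqref{eq:firstPart}, I treat \eqref{eq:AppAgentMovement} as a standard It\^o SDE with drift $b^{i,N}_t := (1/c_M)[\nabla_x w^N(X^{i,N}_t,t) + \nabla_x A_{ES}(X^{i,N}_t,t)]$ and isotropic scalar diffusion coefficient $\sigma/c_M$, and apply the multidimensional It\^o formula to $\phi \in C^2(\TT^2)$. The first-order part splits into a drift contribution $\nabla\phi \cdot b^{i,N}_t\,dt$ and a martingale increment $(\sigma/c_M)\nabla\phi(X^{i,N}_t)\cdot dB^i_t$; the quadratic-variation part collapses to $(\sigma^2/(2c_M^2))\Delta\phi\,dt$ since the two components of $B^i$ are independent Brownian motions with unit quadratic variation. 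Identity \eqref{eq:secondPart} is then immediate: write $\ang{S^N_t,\phi}=(1/N)\sum_i \phi(X^{i,N}_t)$, sum \eqref{eq:firstPart} over $i$, and recognise each arithmetic mean as a pairing against $S^N_t$.

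For \eqref{eq:appItoEm}, the same calculation is applied to the one-parameter family $y\mapsto \theta_N(x-y)$ with $x$ held fixed as a parameter. The chain rule yields $\nabla_y \theta_N(x-y) = -\nabla\theta_N(x-y)$, so after summation and averaging the $\nabla w^N$ drift appears as the explicit sum in \eqref{eq:appItoEm} (it is kept in that form because $\nabla w^N$ depends on the full empirical configuration), while the martingale piece aggregates the independent Brownian increments weighted by the shifted kernel gradient. Since $\Delta_y \theta_N(x-y) = \Delta_x \theta_N(x-y)$, pulling the Laplacian outside the finite average reconstructs $(\sigma^2/(2c_M^2))\Delta_x l^N(x,t)$. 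The $A_{ES}$ contribution can be placed in the cleaner divergence form $-(1/c_M)\div_x(l^N \nabla_x A_{ES})$ by integrating by parts against the smooth kernel: this uses that $A_{ES}$ is exogenous and $C^2$ in $x$, together with the identity $\nabla_x l^N(x,t) = (1/N)\sum_i \nabla_x \theta_N(x - X^{i,N}_t)$. At the level of weak formulations against test functions, which is the form in which the identity will actually be used in the subsequent tightness and identification arguments, the pointwise and divergence representations of the $A_{ES}$ term coincide.

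The main obstacle is the regularity required for It\^o's formula. Although $\phi$ being $C^2$ is automatic for test functions, and both $l^N$ and $W_h^P \ast l^N$ are $C^\infty$ in $x$ because $\theta_N$ and $W_h^P$ are smooth mollifiers, the factor $(l^N)^{\beta-1}$ with $\beta<1$ appearing in $w^N$ is singular where $l^N$ vanishes. I would handle this by localising on the event $\{\inf_x l^N(\cdot,t)>\varepsilon\}$ via a stopping time $\tau_\varepsilon$, applying the It\^o identities on $[0,\tau_\varepsilon]$, and then removing the cutoff by combining the smoothing effect of the Laplacian in the SDE with the fact that $\tau_\varepsilon \to \infty$ in probability as $\varepsilon \downarrow 0$ on any finite time horizon.
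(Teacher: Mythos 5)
Your derivation coincides with the paper's: It\^o's formula applied to $\phi(X^{i,N}_t)$ using Eq.~\eqref{eq:AppAgentMovement}, then linearity of the sum for $\ang{S^N_t,\phi}$, then the same computation for the parametrized family $\theta_N(x-\cdot)$. The only divergence is in handling the degeneracy of $(l^N)^{\beta-1}$ where $l^N$ vanishes, which you address by localisation with stopping times while the paper instead replaces $l^\beta$ near zero by a smooth parabola with globally bounded derivative (stated in its Compactness step); both are legitimate, and otherwise the argument is the same.
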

\begin{proof}
Eq. \eqref{eq:firstPart} directly follows by applying It\^o formula to the function $\phi(X^{i,N}_{t})$ and using Eq \eqref{eq:AppAgentMovement}. Eq. \eqref{eq:secondPart} follows from Eq. \eqref{eq:firstPart} using the linearity of the sum, while Eq. \ref{eq:appItoEm} is obtained by taking the family of functions parameterised by $x$ $\theta_{N}(x-\cdot)$ and applying the It\^o formula, then again using the linearity of the sum. 
\end{proof} 

The expression for the drift in the second line on the right-hand side of Eq. \eqref{eq:appItoEm} can be written as
\begin{multline*}
-\frac{1}{N}\sum_{i=1}^{N}  \nabla_{x}\theta_{N}(x-X^{i,N}_{t})\cdot \nabla_{x} w^{N}(X^{i,N}_{t},t) =\\
=-\int_{\TT^{2}}\nabla_{x}\theta_{N}(x-x')\cdot \nabla_{x} w^{N}(x',t)\,S^{N}_{t}(dx') = \\
-\div_{x} \int_{\TT^{2}}\theta_{N}(x-x')\nabla_{x} w^{N}(x',t)\,S^{N}_{t}(dx').
\end{multline*}
However, this is not the correct expression to obtain the desired limit of infinite $N$ in Eq. \eqref{eq:dynamicsLabourDistribution}; instead, the correct expression is:
\[
-\div_{x} \int_{\TT^{2}}\theta_{N}(x-x')\nabla_{x} w^{N}(x,t)\,S^{N}_{t}(dx') = -\div_{x} (l^{N}(x,t)\nabla_{x} w^{N}(x,t)).
\]
Hence, we rewrite Eq. \eqref{eq:appItoEm} as the proper expression, plus a remainder that we will have to show vanishes in the limit of infinite $N$:
\begin{eqnarray}\label{eq:appEmWithCommutator}\nonumber
dl^{N}(x,t) &=& \frac{\sigma^{2}}{2c_{M}^2} \Delta_{x} l^{N}(t)\, dt \nonumber \\
&-& \left(\frac{1}{c_{M}}\right) \, \div_{x} (l^{N}(x,t)\nabla_{x} w^{N}(x,t))\,dt  \nonumber \\
&-& \left(\frac{1}{c_{M}}\right) \, \div_x \left( l^N(t) \nabla_{x}A_{ES}(x,t) \right)\,dt  \nonumber \\
&+& R^{N}_{t}(x) \, dt + dM^{N,\theta_{N}}_{t}(x),
\end{eqnarray}
where 
\[
M^{N,\theta_{N}}_{t}(x) = \frac{\sigma}{Nc_{M}}\sum_{i=1}^{N}\nabla_{x} \theta_{N}(x-X^{i,N}_{t})\cdot B^{i}_{t}
\] 
is a martingale, and 
\[
R^{N}_{t}(x) = \left(\frac{1}{c_{M}}\right)\div_{x} \int_{\TT^{2}}\theta_{N}(x-x')\left[\nabla_{x} w^{N}(x,t)-\nabla_{x} w^{N}(x',t)\right]\,S^{N}_{t}(dx').
\]
Eq. \ref{eq:appEmWithCommutator} gives a candidate for the limit behaviour of the system of Eq. \eqref{eq:AppAgentMovement}, with the intuition that the last two terms should converge to zero as $N$ goes to infinity.

\subsection{Steps for the proof of Theorem \ref{teo:limitInfiniteAgentsII} }

The proof of Theorem \ref{teo:limitInfiniteAgentsII} needs some intermediate lemmas that one has to prove first. The general strategy is that of \emph{compactness} and consists of three main steps. 

First, one has to prove that the sequence $(S^N_{t})_{N \in \NN}$ lies uniformly in some compact subset of $\PP_{1}(\TT^{2})$.\footnote{The notion of \textit{compactness} here is the classical one of compact sets in metric spaces.} This leads to the existence of limits of each subsequence. In our case, this follows from a uniform estimate in $L^1(\Omega)$ for each of the workers' locations. Since the convergence is induced by the metric used on $\PP_{1}(\TT^{2})$ (see Eq. \eqref{def:metricSpace}), this only holds in the distribution. However, it is well known that convergence in the distribution of random variables to a deterministic limit implies convergence in probability.

$\bullet$ (\textit{Characterisation}) Second, one has to give a characterisation of the limit points as solutions of Eq. \eqref{eq:dynamicsLabourDistribution}. This usually follows from the convergence of each term in Eq. \eqref{eq:appItoEm} to its corresponding term in the limit equation for infinite $N$.

$\bullet$ (\textit{Uniqueness}) Finally, one has to prove a uniqueness result for Eq. \eqref{eq:dynamicsLabourDistribution}. This last step is necessary to go from the convergence of each subsequence of the first step to the convergence of the entire sequence. The reason is as follows: since each subsequence converges to the solutions of Eq. \eqref{eq:dynamicsLabourDistribution}, if the solution is unique, then every subsequence converges to the same limit. One then just has to realise that in any metric space, if every subsequence converges to the same limit, the same holds for the whole sequence.

The key points for each of the three steps are as follows

$\bullet$ (\textit{Compactness}) One must prove that there exists a constant $C$ which does not depend on $N$, such that 
\begin{equation}\label{eq:LnNablaL2}
\EE{\norm{\nabla_{x} l^{N}(t)}^2_{L^{2}(\TT^{2})}}\leq C.
\end{equation}
Again, we need to make additional assumptions about the production function. In particular, we will assume that it is equal to $l^\beta$ when $l$ is greater than some small threshold $\underline{l}$, and instead has the shape of a parabola for values of $l$ between zero and $\underline{l}$. A possible expression is given by
\begin{equation*}
f(l) = 
\begin{cases}
l^\beta + (-\frac{1}{2} \beta^2 + \frac{3}{2} \beta - 1) \underline{l}^\beta & \text{ if } l > \underline{l}, \\
\frac{1}{2} \beta (\beta-1) \underline{l}^{\beta-2} l^2 + \beta (2-\beta) \underline{l}^{\beta-1} l & \text{ if } 0 \leq l \leq \underline{l}.
\end{cases}
\end{equation*}
In this way we keep the usual assumption on the production function, i.e. i) positive first derivative and ii) negative second derivative, but the first derivative is now globally bounded for all values of $l$. 
This first step is necessary to complete the first step on (\textit{Compactness}) of the sequence $(S^{N}_{t})_{N \in \NN}$ in $\PP^{1}(\TT^{2})$. 

$\bullet$ (\textit{Characterisation}) Suppose $S^{N}_{t} \stackrel{N\to\infty}{\to}  \mu_{t}$ in $\PP_{1}(\TT^{2})$, where $\mu_{t}$ is a probability measure with density $l(x,t)$ (or analogously for a subsequence $S^{N_{k}}_{t}$ from the first step). Then we need to check that the expression for wages converges to the analogous limit expression for wages
\begin{multline*}
w^{N}(x,t) =\\ = G(x) \frac{1}{N}\sum_{i=1}^{N}W_{h}^{P}(x-X^{i,N}_{t}) \left(\frac{1}{N}\sum_{i=1}^{N}\theta_{N}(x-X^{i,N}_{t}) \right)^{\beta-1}\stackrel{N\to\infty}{\longrightarrow} G(x)(W_{h}^{P} * l)(x,t) l(x,t)^{\beta-1} =\\ = w(x,t).
\end{multline*}	
This part is necessary to give a characterisation of the limit points (\textit{Characterisation}). 
We also need to show that the martingale term $M^{N,\theta_{N}}_{t}(x) $ goes to zero in probability as $N$ goes to infinity. However, this is easily done since we have an explicit expression for the quadratic variation
\begin{equation*}
\left[M^{N,\theta_{N}}\right]_{t} = \frac{\sigma^{2}}{N^{2}c_{M}^2}\sum_{i=1}^{N} \abs{\nabla_{l}\theta_{N}(x-X^{i,N}_{t})}^{2}
\end{equation*}
since the Brownian motions $B^{i}_{t}$ are independent, and by using classical martingale inequalities. In this step we will have to assume $\lambda < 1/4$ to prove such an inequality, and use Eq. \eqref{eq:LnNablaL2}.
Finally, we need to show that the remainder $R^{N}_{t}(x)$ also goes to zero as $N$ goes to infinity. This again follows from the estimate in Eq. \eqref{eq:LnNablaL2}, using the fact that the difference $\left[\nabla_{x} w^{N}(x)-\nabla_{x} w^{N}(x')\right]$ is non-zero only when $x$ and $x'$ are close, since it is multiplied by $\theta_{N}(x-x')$, whose support shrinks with $N$. \footnote{See for example the proof of \cite[Lemma 5.5]{flandoli2021navier}} 

$\bullet$ (\textit{Uniqueness}) To show the uniqueness of the solutions of Eq. \eqref{eq:dynamicsLabourDistribution} one follows the classical strategy of assuming the existence of two different solutions $l^1(x,t)$ and $l^2(x,t)$ with the same initial condition at time $t = 0$. Then consider the difference between the two and study the evolution in the $L^{2}$-norm in $\TT^{2}$, i.e:
\begin{equation*}
\norm{l^1(t)-l^2(t)}_{L^{2}(\TT^{2})}^{2} = \int_{\TT^{2}} \abs{l^1(x,t)-l^2(x,t)}^{2}\,dx 
\end{equation*}
and show that it satisfies
\begin{equation*}
\norm{l^1(t)-l^2(t)}_{L^{2}(\TT^{2})}^{2} \leq C_{T}\int_{0}^t\norm{l^1(s)-l^2(s)}_{L^{2}(\TT^{2})}^{2} \,ds
\end{equation*}
for some constant $C_{T}$. Then \textit{Gronwall Lemma} implies that if $\norm{l^1(0)-l^2(0)}_{L^{2}(\TT^{2})} = 0 $ then $\norm{l^1(t)-l^2(t)}_{L^{2}(\TT^{2})} = 0$ for all $t > 0$, so solutions are unique. This is a proof of (\textit{Uniqueness}).

\section{Proof of Theorem \ref{teo:centerofmass}\label{app:proofCenterOfMass}}
\begin{proof}
In our hypothesis one can rewrite Eq. \eqref{eq:dynamicsLabourDistributionII} as
\begin{multline}\label{eq:proofCenterOfMass}
\partial_t l(x,t) = \dfrac{\sigma^2}{2c_M^2} \Delta_x l\left(x,t\right) - \left(\frac{1}{c_M}\right) \, \div_x \left( l(x,t) \nabla_x w\left(x,t\right) \right) = \\ 
= \dfrac{\sigma^2}{2c_M^2} \Delta_x l\left(x,t\right) - \left(\frac{1}{c_M}\right) \, \div_x \left( l(x,t) \nabla_x (W_{h} * l)\left(x,t\right) \right).
\end{multline}
Then calculate the evolution of the position of the centre of mass:
\begin{multline}\label{eq:proofCenterOfMass2}
\frac{d}{dt} \int_{\Omega} x l(x,t)\,dx = \int_{\Omega} x \partial_{t}l(x,t)\,dx = 
\dfrac{\sigma^2}{2c_M^2}\int_{\Omega}x \Delta_x l\left(x,t\right)\,dx - \\ 
- \frac{1}{c_M}\int_{\Omega}x \, \div_x \left( l(x,t) \nabla_x (W_{h} * l)\left(x,t\right) \right)\, dx.
\end{multline}
We will now show that both terms of Eq. \eqref{eq:proofCenterOfMass2} vanish. By integration by parts, the first term becomes
\begin{multline*}
\dfrac{\sigma^2}{2c_M^2}\int_{\Omega}x \Delta_x l\left(x,t\right)\, dx = -\dfrac{\sigma^2}{2c_M^2}\int_{\Omega}\div_{x} x \cdot \nabla_x l\left(x,t\right)\, dx = \\
= -2\dfrac{\sigma^2}{2c_M^2}\int_{\Omega} 1 \cdot \nabla_x l\left(x,t\right)\, dx = 2\dfrac{\sigma^2}{2c_M^2}\int_{\Omega} \nabla_{x} 1 \cdot l\left(x,t\right)\, dx = 0
\end{multline*}
since $\nabla_{x} 1 = 0$.
For the second term of Eq. \eqref{eq:proofCenterOfMass2} we have, again using integration by parts
\begin{multline*}
- \frac{1}{c_M}\int_{\Omega}x \, \div_x \left( l(x,t) \nabla_x (W_{h} * l)\left(x,t\right) \right)\, dx = \frac{1}{c_M}\int_{\Omega}\div_{x}x \, \left( l(x,t) \nabla_x (W_{h} * l)\left(x,t\right) \right)\, dx = \\
= 2\frac{1}{c_M}\int_{\Omega} l(x,t) \nabla_x (W_{h} * l)\left(x,t\right) \, dx = 2\frac{1}{c_M}\int_{\Omega} l(x,t) \int_{\Omega}\nabla_x W_{h}(x-y)l(t,y)\,dy \, dx = \\ 
= 2\frac{1}{c_M}\int_{\Omega} l(x,t) \int_{\Omega}\nabla_x W_{h}(x-y)l(y,t)\,dy \, dx = 2\frac{1}{c_M}\int_{\Omega} \int_{\Omega}\nabla_x W_{h}(x-y)l(y,t)l(x,t)\,dy \, dx.
\end{multline*}
Furthermore, by changing the variable
\begin{multline*}
2\frac{1}{c_M}\int_{\Omega} \int_{\Omega}\nabla_x W_{h}(x-y)l(y,t)l(x,t)\,dy \, dx = 2\frac{1}{c_M}\int_{\Omega} \int_{\Omega}\nabla_x W_{h}(y-x)l(y,t)l(x,t)\,dy \, dx = \\
= -2\frac{1}{c_M}\int_{\Omega} \int_{\Omega}\nabla_x W_{h}(x-y)l(y,t)l(x,t)\,dy \, dx
\end{multline*}
by using the symmetry about the origin of $W_{h}$. Therefore this last term, being equal to it's opposite, must be equal to zero. So we have shown that
\begin{equation*}
\frac{d}{dt} \int_{\Omega} x l(x,t)\,dx = 0
\end{equation*} 
which ends the proof. 
\end{proof}

\clearpage
\clearpage
\setcounter{section}{0}
\renewcommand\sectionname{Online Appendix}

\begin{center}
\textbf{\huge  Online Appendix}
\end{center}

\section{Micro-foundation of workers' movements \label{app:micro-foundationAgentsMovements}}

In this section, we illustrate the micro-foundations of workers' movement driven by spatial differential utilities and show how this can be derived as a Nash equilibrium. For the sake of simplicity, assume that the space dimension is one, i.e. $X^{i,N}_{t}\in \RR$; all the arguments made in the section can be repeated in the same manner, but with a more cumbersome notation, in any dimension $d > 1$. 

Assume that there is a utility function, common to  all workers, defined as:
\begin{eqnarray}
u:&\RR^{N}& \to \big[ \RR \times \RR^{+} \to \RR\big]\nonumber \\
&\mathbf{X}&\mapsto \big[(x,t) \mapsto v(\mathbf{X})(x,t)\big],
\end{eqnarray}
that is for any $\mathbf{X} := ({X}_{1},\dots,{X}_{N})\in \RR^{N}$, representing the location of all the workers, there exists a \textit{field of utility} $u(\mathbf{X}):\RR \times \RR^{+} \to \RR$ which, in any location $x$ and $t$, associates the corresponding utility.

Denote by
\begin{equation} 
\frac{\partial u}{\partial {X}_{k}}(\mathbf{X})(x,t), \quad	\frac{\partial u}{\partial t}(\mathbf{X})(x,t) \quad \text{ and } \quad \frac{\partial u}{\partial x}(\mathbf{X})(x,t)
\end{equation}
the partial derivative with respect to the $k$-th component of the vector $\mathbf{X}=(\mathbf{X}_{1},\dots, \mathbf{X}_{N})$, the variable $t$ and $x$, respectively. 
The partial derivative of $u$ with respect to the variable $X^{i,N}_{t}$ in the first component, evaluated in $\left(\mathbf{X}^{N}_{t}\right)(X^{i,N}_{t},t)$ is assumed to be equal to zero, i.e.
\begin{equation}\label{eq:utilityNoEffectOnItself}
\frac{\partial u}{\partial {X}_{i}}\left(\mathbf{X}^{N}_{t}\right)(X^{i,N}_{t},t) = 0;
\end{equation}
this reflects the fact that a change in the location of worker $i$ does not affect the spatial shape of utility in $X^{i,N}_{t}$. Therefore,
worker $i$ 's utility varies as it changes its position with respect to other workers, but the specific contribution of the change in its location to its overall utility is zero. In other words, the worker maintains the same distance by itself along all its movement over space. This assumption is common in the Mean Field Game literature \citep{carmona2018probabilistic}.

The dynamic over space of worker $i$ is the result of a maximization of utility over space, subject to moving costs. In particular, 
consider the \emph{instantaneous} variation of utility of worker $i$, $du^{i}_{t}$, as the result of moving from location $X^{i,N}_{t}$ to $X^{i,N}_{t} +  dt\, dX^{i,N}_{t}$, i.e:
\begin{equation}
du^{i}_{t} = \frac{\partial u}{\partial x}\left(\mathbf{X}^{N}_{t}\right)(X^{i,N}_{t},t)dX^{i,N}_{t} + \sum_{j=1}^{N}  \frac{\partial u}{\partial {X}_{j} }\left(\mathbf{X}^{N}_{t}\right)(X^{i,N}_{t},t)dX^{j,N}_{t},
\label{eq:changeIndividualUtility}
\end{equation}
where the first term of Eq. \eqref{eq:changeIndividualUtility} is the effect of movement of worker $i$, while the second term refers to the effects of other workers' movement.
Assuming quadratic adjustment cost for moving, at each time $t$ worker $i$ solves the following \emph{static} optimization problem:
\begin{equation}\label{eq:gameMaxInstantaneousUtilityImperfectInfo}
\max_{\{dX^{i,N}_{t}\in \RR\}} \frac{\partial u}{\partial x}\left(\mathbf{X}^{N}_{t}\right)(X^{i,N}_{t},t)dX^{i,N}_{t} + \sum_{j=1}^{N}  \frac{\partial u}{\partial {X}_{j}}\left(\mathbf{X}^{N}_{t}\right)(X^{i,N}_{t},t)dX^{j,N,e_{i}}_{t} -  \frac{c_{M}}{2}\norm{dX^{i,N}_{t}}^2,
\end{equation}
where $c_{M}$ is a parameter measuring the intensity of the moving costs,\footnote{In a more general setting, where the infrastructures, roads and their congestion are taken into account, the cost of moving should be a function of the location of worker, its direction of movement and the location of other workers, i.e. $c_{M}=c_{M}\left(X_t^{i,N},dX_t^{i,N},\mathbf{X}^{N}_{t}\right)$.} $\norm{\cdot}$ is the norm, and $dX^{j,N,e_{i}}_{t}$ is the \textit{expected movement} of worker $j$' formulated by worker $i$. At the time of the worker $i$'s decision, this information is not available to the worker and it is taken as given. 

Given Eq. \eqref{eq:utilityNoEffectOnItself}, the first order condition of Problem \ref{eq:gameMaxInstantaneousUtilityImperfectInfo} reduces to:
\begin{equation}\label{eq:optimalMovement}
dX^{i,N}_{t} = \frac{1}{c_{M}}\left[\frac{\partial u}{\partial x}\left(\mathbf{X}^{N}_{t}\right)(X^{i,N}_{t},t) + \frac{\partial u}{\partial {X}_{i}}\left(\mathbf{X}^{N}_{t}\right)(X^{i,N}_{t},t)dX^{i,N}_{t}\right]=\frac{1}{c_{M}}\frac{\partial u}{\partial x}\left(\mathbf{X}^{N}_{t}\right)(X^{i,N}_{t},t).
\end{equation}
Therefore, Eq. \eqref{eq:optimalMovement} represents the optimal movement of worker $i$ in the limit when its time horizon in the optimization is going to zero; the latter explains the absence of the other workers' movements in Eq. \eqref{eq:optimalMovement}. From a temporal perspective, worker $i$ is myopic, also because mobility costs mainly appear as sunk costs. However, it is not myopic with respect to the spatial distribution of workers, since the latter are included in $u$. Since all workers are optimizing their movements at the same time taking the other workers' movements as given, the aggregate outcome of Eq. \eqref{eq:optimalMovement} appears as a \textit{Nash Equilibrium} of a non-cooperative game. In the next Section \ref{app:NashEquilibriumAgentsMovement}, we will discuss as this intuition is correct for a quadratic utility function.

\subsection{Workers' movements as a Nash equilibrium \label{app:NashEquilibriumAgentsMovement}}

In this section, we explicitly derive Eq. \eqref{eq:optimalMovement} for a quadratic utility function as the optimal movement of worker $i$ when it takes also into account the movement of other workers, i.e. Eq. \ref{eq:optimalMovement} is actually a Nash equilibrium for an economy where workers are maximizing step by step in discrete time, when time step of decisions is going to zero.

\subsubsection{Optimal movement in discrete time}

Assume now that, to maximize its personal utility at each time $t$, worker $i$ makes a step $dt\,V^{i,N}_{t}$ over a time interval of duration $dt$, subject to \textit{quadratic movement costs}. However, since its utility also depends on the locations of the other workers, in this maximization procedure it has to keep into account the \textit{expected behaviour} of all the other workers.
Therefore, worker $i$ at every time $t = 0,dt,2dt,\dots$ solves the following optimization problem:
\begin{equation}\label{eq:optDiscreteTimeExpectedMovement}
\max_{\{V^{i,N}_{t}\in \RR\}} u(\mathbf{X}^{N}_{t} +  dt\,\mathbf{V}^{N,e_{i}}_{t})(X^{i,N}_{t}+ dt\,V^{i,N}_{t},t+dt) - dt\frac{c_{M}}{2}\norm{V^{i,N}_{t}}^{2},
\end{equation}
where $\mathbf{V}^{N,e_{i}}_{t} := (V^{1,N,e_{i}}_{t},\dots, V^{N,N,e_{i}}_{t})$ and $V^{i,N,e_{i}}_{t} := V^{i,N}_{t}$. Roughly speaking, at each time $t$ worker $i$ has a certain belief of where other workers will go after $dt$, represented by the vector $\mathbf{V}^{N,e_{i}}_{t}$.

The first order condition for Problem \eqref{eq:optDiscreteTimeExpectedMovement} is:
\begin{eqnarray}\label{eq:FOCoptDiscreteTimeExpectedMovement}
0 &=& \sum_{j=1,j \neq i}^{N}\frac{\partial u}{\partial {X}_{j}}(\mathbf{X}^{N}_{t} +  dt\,\mathbf{V}^{N,e_{i}}_{t})(X^{i,N}_{t}+ dt\,V^{i,N}_{t},t+dt) \frac{dV^{j,N,e_{i}}}{dV^{i,N}_{t}}dt + \nonumber \\ 
&+& \frac{\partial u}{\partial {X}_{i}}(\mathbf{X}^{N}_{t} +  dt\,\mathbf{V}^{N,e_{i}}_{t})(X^{i,N}_{t}+ dt\,V^{i,N}_{t},t+dt)dt+ \nonumber \\ 
&+& \frac{\partial u}{\partial x}(\mathbf{X}^{N}_{t} +  dt\,\mathbf{V}^{N,e_{i}}_{t})(X^{i,N}_{t}+ dt\,V^{i,N}_{t},t+dt)dt - dt c_{M}V^{i,N}_{t}.
\end{eqnarray}
Assume that the expected behaviour of worker $j$, $V^{j,N,e_{i}}_{t}$, does not depend on the optimal behaviour of worker $i$, $V^{i,N}_{t}$; then all terms in the first summation vanish. Moreover, since Eq.  \eqref{eq:utilityNoEffectOnItself} holds, any change in the location of worker $i$ has no direct effect on the field of utility evaluated in its location since the only thing that matters is its relative location with respect to others. Therefore, Eq. \eqref{eq:FOCoptDiscreteTimeExpectedMovement} reduces to:
\begin{equation}\label{eq:FOCoptDiscreteTimeExpectedMovementSimplified}
\frac{\partial u}{\partial x}(\mathbf{X}^{N}_{t} +  dt\,\mathbf{V}^{N,e_{i}}_{t})(X^{i,N}_{t}+ dt\,V^{i,N}_{t},t+dt) -  c_{M}V^{i,N}_{t}  = 0.
\end{equation}
To explicitly solve Eq. \eqref{eq:FOCoptDiscreteTimeExpectedMovementSimplified}, assume that worker's utility has the following shape:
\begin{equation}\label{eq:utilitySimplest}
u(\mathbf{X})(x,t) = \frac{1}{N}\sum_{j=1}^{N} W(x-\mathbf{X}_{j}),
\end{equation}
where the \textit{kernel function} $W$ is defined as
\begin{equation}
W(z) = \frac{1}{2}\left(1- z^2\mathds{1}_{\abs{z}\leq 1}-\mathds{1}_{\abs{z}>1} \right)
\label{eq:kernelFunction}
\end{equation}
($\mathds{1}_{\abs{z}\leq 1} $ means that the value is equal to 1 when $\abs{z}\leq 1$ and 0 otherwise).
\begin{figure}[htbp]
\begin{center}
\includegraphics[width=0.48\textwidth]{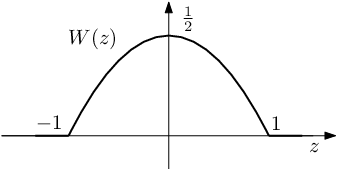}
\end{center}
\caption{The shape of kernel function $W(z)$ in Eq. \eqref{eq:kernelFunction}, the base of individual utility of Eq. \eqref{eq:utilitySimplest}.}
\label{fig:Wx}
\end{figure}
Figure \ref{fig:Wx} reports the shape of $W(z)$; in particular, $W^\prime(0) = 0$ in agreement with the Eq. \eqref{eq:utilityNoEffectOnItself}, while the partial derivative of $u$ with respect to $\mathbf{X}^i$ is always zero when evaluated in $\mathbf{X}^i$.
The utility function \eqref{eq:utilitySimplest} is higher when all the workers are close to each other; in particular, it reaches its maximum value of $1/2$ when all workers are in the same location, i.e. when the level of agglomeration is maximum, which in Figure \ref{fig:Wx} corresponds to $z=0$. 

We can now rewrite Eq. \eqref{eq:FOCoptDiscreteTimeExpectedMovementSimplified} as:
\begin{equation}\label{eq:FOCoptDiscreteTimeExpectedMovementSimplifiedStep1}
V^{i,N}_{t} = -\frac{1}{c_{M}}\frac{1}{N}\sum_{j=1}^{N}\left[(X^{i,N}_{t}+dtV^{i,N}_{t})-(X^{j,N}_{t}+dtV^{j,N,e_{i}}_{t})\right]\mathds{1}_{\left\{\abs{(X^{i,N}_{t}+dtV^{i,N}_{t})-(X^{j,N}_{t}+dtV^{j,N,e_{i}}_{t})}\leq 1\right\}}.
\end{equation}
Assume now that workers at time $t$ are all within distance 1 from each other, and that the same holds at time $t+dt$, so that all the indicator functions in the previous sum are equal to 1, which is not restrictive if $dt$ is very small, so that the movement that each worker will perform is very small.
Therefore, from Eq. \eqref{eq:FOCoptDiscreteTimeExpectedMovementSimplifiedStep1} we have 
\begin{equation}\label{eq:FOCoptDiscreteTimeExpectedMovementSimplifiedStep2}
V^{i,N}_{t} = -\frac{1}{c_{M}}\left[X^{i,N}_{t} - \frac{1}{N}\sum_{j=1}^{N}X^{j,N}_{t}\right] - \frac{dt}{c_{M}} \left[V^{i,N}_{t} - \frac{1}{N}\sum_{j=1}^{N}V^{j,N,e_{i}}_{t}\right],
\end{equation}
from which
\begin{equation}\label{eq:FOCoptDiscreteTimeExpectedMovementSimplifiedStep3}
V^{i,N}_{t} = -\frac{1}{c_{M}+dt-dt/N}\left[X^{i,N}_{t} - \frac{1}{N}\sum_{j=1}^{N}X^{j,N}_{t} - \frac{dt}{N}\sum_{j=1,j\neq i}^{N}V^{j,N,e_{i}}_{t}\right].
\end{equation}
By considering a continuous time process, instead of a discrete one, i.e. letting $dt$ go to zero, the expected choices of other workers do not affect the optimal choice of worker $i$ (as seen from worker $i$). The intuition is the following: in continuous time the choice of worker $i$ is repeated infinitely many times, and each of these choices, taken singularly (and not as a continuous choice process) has no effect on the dynamic. The same effect holds for the choice of all other workers.
Therefore, either let $dt$ go to zero, or assume that worker $i$ has zero-movement expectation on other workers, i.e. $V^{j,N,e_{i}} = 0$ for $j \neq i$, the optimal strategy to maximize utility \eqref{eq:utilitySimplest} is to move towards the barycentre of spatial distribution of workers, which is the mean location of workers. This is coherent with the intuition that the particular shape of utility \eqref{eq:utilitySimplest} should tend to aggregate workers over space.\footnote{See Online Appendix \ref{app:increasingSpatialAgglomeration} for a proof.} Moreover, the barycentre of the location of the workers remains unchanged after each forward step, i.e. the barycentre is an \textit{invariant} for the dynamic.

\subsubsection{Nash equilibrium}

In each time $t$, the equilibrium where any worker $i$ solves Problem \eqref{eq:optDiscreteTimeExpectedMovement} and expectation are realized, i.e. $V^{j,N,e_{i}}_{t} = V^{j,N}_{t}$, is a Nash equilibrium of a non-cooperative game.
We have therefore to solve a system of $N$ linear equations derived by Eq. \eqref{eq:FOCoptDiscreteTimeExpectedMovementSimplifiedStep2}, i.e.: \begin{equation}\label{eq:FOCoptDiscreteTimeExpectedMovementSimplifiedNash1}
V^{i,N}_{t} = -\frac{1}{c_{M}}\left[X^{i,N}_{t} - \frac{1}{N}\sum_{j=1}^{N}X^{j,N}_{t}\right] - \frac{dt}{c_{M}} \left[V^{i,N}_{t} - \frac{1}{N}\sum_{j=1}^{N}V^{j,N}_{t}\right],\quad i=1,\dots, N,
\end{equation}
that is:
\begin{equation}\label{eq:FOCoptDiscreteTimeExpectedMovementSimplifiedNash2}
V^{i,N}_{t}(1+(N-1)b) - b\sum_{j=1,j\neq i}^{N}V^{j,N}_{t} = -\frac{1}{c_{M}}\left[X^{i,N}_{t} - \frac{1}{N}\sum_{j=1}^{N}X^{j,N}_{t}\right], \quad i=1,\dots, N,
\end{equation}
where $b = dt/\left(c_{M}N\right)$.
System of Eqq. \eqref{eq:FOCoptDiscreteTimeExpectedMovementSimplifiedNash2} can be rewritten in matrix-vector form as $A \mathbf{V}^{N}_{t} = \tilde{\mathbf{X}}^{N}_{t}$, where: 
\begin{equation}
A := \begin{bmatrix}
1+(N-1)b & -b &  & \cdots & -b \\
-b & 1+(N-1)b & -b & \cdots & -b \\
\vdots &  & \ddots &  & \vdots \\
\vdots &  &  & \ddots & -b \\
-b & \cdots & \cdots & -b & 1+(N-1)b 
\end{bmatrix},\quad
\mathbf{V}^{N}_{t} := \begin{bmatrix}
V^{1,N}_{t} \\
V^{2,N}_{t} \\
\vdots \\
\vdots \\
V^{N,N}_{t} 
\end{bmatrix},
\end{equation}
\begin{equation}
\tilde{\mathbf{X}}^{N}_{t} = \begin{bmatrix}
-\frac{1}{c_{M}}\left(X^{1,N}_{t} - \frac{1}{N}\sum_{j=1}^{N}X^{j,N}_{t}\right) \\
-\frac{1}{c_{M}}\left(X^{2,N}_{t} - \frac{1}{N}\sum_{j=1}^{N}X^{j,N}_{t}\right) \\
\vdots \\
\vdots \\
-\frac{1}{c_{M}}\left(X^{N,N}_{t} - \frac{1}{N}\sum_{j=1}^{N}X^{j,N}_{t}\right)
\end{bmatrix}.
\end{equation}
One can verify that the inverse of the matrix $A$ is given by
\begin{equation}
A^{-1} = \frac{1}{Nb+1}\begin{bmatrix}
b+1 & b &  & \cdots & b \\
b & b+1 & b & \cdots & b \\
\vdots &  & \ddots &  & \vdots \\
\vdots &  &  & \ddots & b \\
b & \cdots & \cdots & b & b+1 
\end{bmatrix},
\end{equation}
so that the solution of System of Eqq. \eqref{eq:FOCoptDiscreteTimeExpectedMovementSimplifiedNash2} is given by:
\begin{equation} \label{eq:finalMovementNashEquilibrium}
V^{i,N}_{t} = -\frac{1}{c_{M}+dt} \left[X^{i,N}_{t} - \frac{1}{N}\sum_{j=1}^{N}X^{j,N}_{t}\right].
\end{equation}
The Nash equilibrium is analogous to the case where worker $i$ has zero-movement belief on the other workers' movement (compare Eqq. \eqref{eq:finalMovementNashEquilibrium} and \eqref{eq:FOCoptDiscreteTimeExpectedMovementSimplifiedStep3}). 
Finally, letting $dt$ go to zero, we find that the optimal movement of Eq. \ref{eq:finalMovementNashEquilibrium} reduces to follow the gradient of the utility function $u$ of Eq. \eqref{eq:optimalMovement} in the case utility function is given by Eq. \eqref{eq:utilitySimplest}.

\subsection{Increasing spatial agglomeration in the sequence of Nash equilibria \label{app:increasingSpatialAgglomeration}}

Below, we show that in the sequence of Nash equilibria in discrete time described by Eq. \eqref{eq:finalMovementNashEquilibrium} (or in its continuous time limit), the spatial agglomeration of workers increases over time, when as measure of spatial agglomeration is used the \textit{Gini Index} in a spatial framework defined as
\begin{equation}\label{eq:indicatorAggregation}
G^S(\mathbf{X}) := \frac{1}{2 \mathbf{\overline{X}}}\left(\frac{1}{N}\sum_{i=1}^{N} \sum_{j=1}^{N} \abs{\mathbf{X}_{i}-\mathbf{X}_{j}}\right),
\end{equation}
where 
\begin{equation}
\mathbf{\overline{X}} := \frac{1}{N}\sum_{i=1}^{N}\mathbf{X}^i
\end{equation}
is the empirical mean of the location of the workers. $G^S(\mathbf{X})$ has the remarkable property to be bounded between 0 and 1, with \textit{0 is the maximum level of agglomeration} (all workers in the same location) while \textit{1 is the minimum level of agglomeration} (no worker shares the same location with another worker). 

We first consider the level of aggregation at time $t$, $G^S(\mathbf{X}^{N}_{t})$ and show that each term of the sum in Eq. \eqref{eq:indicatorAggregation} gets smaller after a time step. Since the value of the mean location remains unchanged with each time step, the term $\mathbf{\overline{X}}^N_t$ is irrelevant for the calculation.
Considering \eqref{eq:finalMovementNashEquilibrium} we have:
\begin{multline}
\abs{X^{i,N}_{t+dt}-X^{j,N}_{t+dt}} = \abs{\left(X^{i,N}_{t} + dt V^{i,N}_{t}\right) - \left(X^{j,N}_{t} + dt V^{j,N}_{t}\right)} = \\
= \abs{\left(X^{i,N}_{t} - X^{j,N}_{t}\right) + dt\left(V^{i,N}_{t}-V^{j,N}_{t}\right)} = \abs{\left(X^{i,N}_{t} - X^{j,N}_{t}\right) -\frac{dt}{c_{M}+dt}\left(X^{i,N}_{t} - X^{j,N}_{t}\right)} = \\
= \abs{\left(1-\frac{dt}{c_{M}+dt}\right)\left(X^{i,N}_{t} - X^{j,N}_{t}\right)} = \left(1-\frac{dt}{c_{M}+dt}\right)\abs{\left(X^{i,N}_{t} - X^{j,N}_{t}\right)} \leq \abs{\left(X^{i,N}_{t} - X^{j,N}_{t}\right)}
\end{multline}
since the term $\left(1-\frac{dt}{c_{M}+dt}\right)$ is always positive and smaller than one. 
Since $\abs{X^{i,N}_{t+dt}-X^{j,N}_{t+dt}} \leq \abs{\left(X^{i,N}_{t} - X^{j,N}_{t}\right)}$, the spatial agglomeration after one step is higher, in the sense that $G^S(\mathbf{X})$ is smaller than that at the initial time.

\section{Micro-foundation of endogenous amenities \label{app:endoAmenities}}
Eq. \eqref{eq:endoAmenities} can be derived by assuming that individuals devote a constant share $\tau \in (0,1)$ of their income to finance the consumption of endogenous amenities; the total amount of expenditure is given by $\tau \cdot y^N(x,t)$. The production of amenities is subject to decreasing marginal productivity; hence the total amount of supplied amenities at location $x$ at time $t$ is:
\begin{equation}
PA^N(x,t) = \left[\tau \cdot y^N(x,t)\right]^\phi,
\end{equation}
where $\phi \in \left(0,1\right)$.
An alternative/complementary explanation for $PA(x,t)$ is that local endogenous amenities are financed by a flat tax rate $\tau \in (0,1)$ on income; in this case, $\tau \cdot y^N(x,t)$ would represent the collected revenues to finance the provision of amenities.
Taking the same model for congestion, an opportune rescaling of parameters leads to Eq. \eqref{eq:endoAmenities}.

\section{Increase in the number of workers \label{app:increasingNumberAgents}}
In this section we describe how to treat the case of the increasing number of workers.
Let $N^N_t$ be the number of workers that are present at time $t$, with $N^N_0 = N$, that is at time $t = 0$ $N$ workers are present.
We allow that the aggregate stock of labour is increasing over time with the total number of workers $N^N_t$, i.e. at period $t$ the aggregate stock of labour is $N^N_t/N \geq 1$.
Assume that in each location there exists a \textit{birth rate} $n\left(x,t\right) \geq 0$ specific of location $x$ at time $t$; a possible death rate has a similar characteristic and it is omitted in the analysis. New workers are introduced by means of a Poisson process with a \textit{time-space non-homogeneous} birth rate $n\left(X^{i,N}_t,t\right)$ for worker $i$ at period $t$, where the first jump of the process corresponds to the duplication of worker $i$. The offspring is created in the same location of its ``mother'', but with an independent noise. 

\begin{figure}[tbph]
\centering
\includegraphics[width=0.6\linewidth]{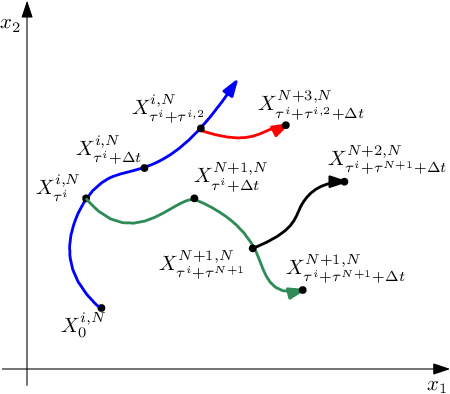}
\caption{The path of duplication for worker $i$ and its offspring in the space $\Omega$.}
\label{fig:movementagentiii}
\end{figure}

Figure \ref{fig:movementagentiii} reports an example of two duplications for the offspring of worker $i$. The time span between two duplications are denoted by $\tau^i$ for the first duplication of worker $i$, by $\tau^{i,2}$ for the second duplication of worker $i$, and so on. In Figure \ref{fig:movementagentiii} three new workers are generated in a period of length $\max\{\tau^i + \tau^{i,2},\tau^{i} + \tau^{N+1}\}$. The length in space that separates two consecutive duplications cannot be easily mapped into the time between the two duplications, depending on the speed of motion of worker $i$ and the local birth rate of the locations visited by worker $i$. 
We don't consider the presence of an exogenous death rate, which however can be treated within the same framework; hence, in the presence of a local mortality rate, $n\left(x,t\right)$ should be meant as the local workforce growth rate.

\section{Representative worker\label{app:representativeWorker}}

From Theorem \ref{teo:limitInfiniteAgents} is possible to go back to the dynamics of workers; in particular, Theorem \ref{teo:representativeMovement} describes the dynamics of a \textit{representative}  worker, i.e. a worker whose dynamics over space and time follows the \textit{average} dynamics of all workers present in the economy.
\begin{teo}[Dynamics of the Representative Worker] \label{teo:representativeMovement}
The dynamics of the representative worker is given by:
\begin{equation}\label{eq:representativeAgent}
d\bar{X}_t = \left( \frac{1}{c_{M}} \right) \nabla_{x} \bar{u}\left(\bar{X}_{t},t\right) dt,
\end{equation} 
where
\begin{equation}
\label{eq:utilityReppresentativeWorkers}
\bar{u}(x,t) = v(x,t) +\sigma x \cdot \frac{dB_{t}}{dt}
\end{equation}
is the random utility of the representative worker at location $x$, where $v(x,t)$ is defined in Eq. \eqref{eq:systematicUtility}. In particular, the process $\bar{X}_t$ has probability density function $l(t,x)$, i.e. the solution of Eq. \eqref{eq:dynamicsLabourDistribution}. Therefore, with a little abuse of notation, the expected movement of the representative worker is given by:
\begin{equation}\label{eq:expectedMovementRepresentative}
\EE{d\bar{X}_t } =  \left( \frac{2}{c_{M}} \right)\int_{\Omega} l(t,x) \nabla_x u(t,x), \,dx ,
\end{equation}
where $u(x,t)$ is defined in Eq. \eqref{eq:teoTotalUtility}. 
\end{teo} 
Eq. \eqref{eq:representativeAgent}, called \textit{McKean-Vlasov SDE}, is characterized by the presence of the probability density distribution of the process into the equation itself. Existence and uniqueness of the solutions to the McKean-Vlasov equation can be obtained in the limit of a large number of workers as for Theorem \ref{teo:limitInfiniteAgents}. In particular, \cite{mishura2020existence} contain a general result of the existence and uniqueness (both in the weak and strong sense) with minimal assumption on the drift and under regularity assumption on the diffusion (satisfied in our case of constant diffusion). In the case of $\beta = 1$ and $A_0=0$, \cite{sznitman1991topics}, instead, frames the analysis within the classical theory of \textit{Propagation of Chaos}.

Eq. \eqref{eq:expectedMovementRepresentative} can be made rigorous by Eq. \eqref{eq:dynamicsLabourDistributionUtility}, as it follows: 
\begin{multline}\label{eq:expectedMovementRepresentativeProof}
\EE{d\bar{X}_t } := \lim_{\tau \to 0}\frac{1}{\tau}\left(\EE{\bar{X}_{t+\tau}} - \EE{\bar{X}_t}\right) = \lim_{\tau \to 0}\frac{1}{\tau}\left[ \int_\Omega x l(x,t+\tau)\, dx - \int_\Omega x l(x,t)\, dx\right] = \\ 
= \int_\Omega x \partial_t l(x,t) \,dx  =-\left(\frac{1}{c_M}\right) \int_\Omega x  \, \div_x \left( l(x,t) \nabla_x u(x,t)\right) 
= \left(\frac{2}{c_M}\right) \int_\Omega  l(x,t) \nabla_x u(x,t) \,dx.
\end{multline}
Eq. \eqref{eq:expectedMovementRepresentativeProof} clarifies that our definition of the expected movement of representative worker can be traced to the difference over time of the average position of workers, which, in turn, is directly related to the differentials in utility over space ($\nabla_x u(x,t)$) weighted by the spatial distribution of workers. 
The SED is characterized by the equalization of the \textit{expected} utility of the representative worker as stated in Proposition \ref{prop:longRunEquilibrium}.
\begin{prop}[Pareto Optimality of Stationary Equilibrium Distribution] \label{prop:longRunEquilibrium}
In the SED, the following equality must hold:
\begin{equation}\label{eq:utilityRapresentativeWorkerLongRunDistribution}
\EE{d\bar{X}_t} = \EE{\nabla_{x} \bar{u}^{EQ}\left(\bar{X}_{t}\right)}=\int_\Omega  l^{EQ}(x) \nabla_x u^{EQ}(x) \,dx=0,
\end{equation}
where $\bar{u}^{EQ}$ is the utility of the representative worker in the SED.
\end{prop}
\begin{proof}
For the proof, see Eq. \eqref{eq:expectedMovementRepresentativeProof} imposing $\partial_t l(x,t)=0 \; \forall x$ and Eq. \eqref{eq:representativeAgent}.  	
\end{proof}
Proposition \ref{prop:longRunEquilibrium} suggests a form of \textit{Pareto optimality} for the SED in the case of risk-neutral workers, because no representative worker can increase its expected utility given the spatial allocation of the other workers.

\section{One megacity with high diffusion \label{app:megacityHighDiffusion}}

Figure \ref{fig:megacityBignoise} reports the SED for $\sigma=0.2$. Workers distribution is still single-peaked but with higher spatial dispersion. Wages mirror the workers density, as well as all the other variables of interest (Figure \ref{fig:distributionOverSpace_I}). Systematic utility presents a non negligible gradient over space, reflecting the strength of the random component in the total utility, i.e. of unexplained heterogeneity in workers' preferences.
\begin{figure}[!htbp]
\centering
\includegraphics[width=0.7\textwidth]{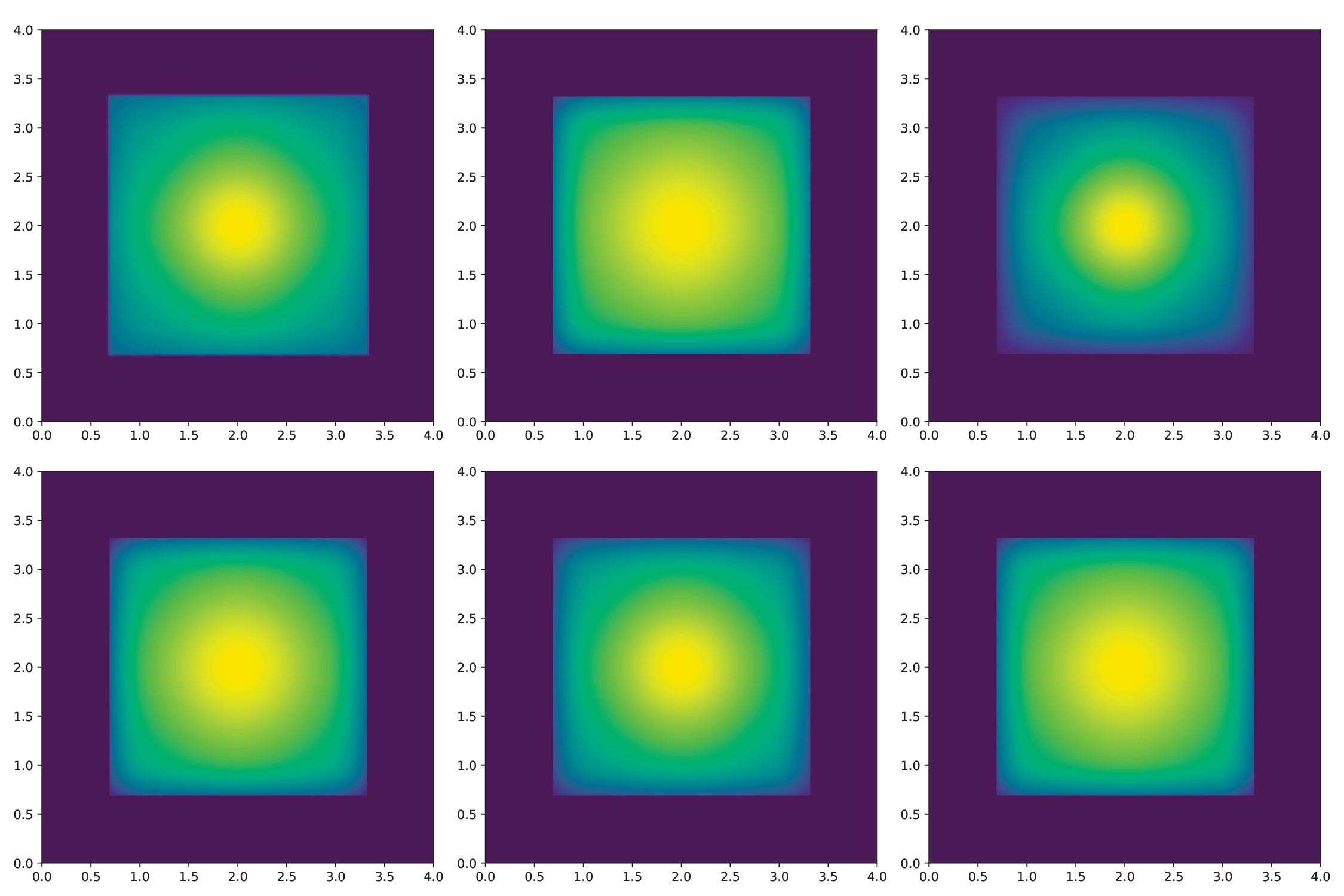}
\caption{The distribution of $l(x,t)$ (workers, top left), $w(x,t)$ (wages, top center), $y(x,t)$ (income, top right), $A_{EN}(x,t)$ (endogenous amenities, bottom left), $A_l(x,t)$ (technological progress, bottom center), $v(x,t)$ (individual systematic utility, bottom right) over space  $\Omega=[0,4]\times[0,4]$ at $t=20$ for setting reported in Table \ref{tab:parameterValues} but with $\sigma=0.2$.}
\label{fig:megacityBignoise}
\end{figure}

\end{document}